\definecolor{crimson}{RGB}{192,0,0}         
\definecolor{navy}{RGB}{47,85,151}         
\newif\if@restonecol
\newif\if@restonecol
\theoremstyle{plain}
\newtheorem{lemm}{Lemma}
\newtheorem{coro}{Corollary}
\theoremstyle{plain}
\begin{document}

\title{A Survey on User-Centric Cell-Free Massive MIMO Systemss}

\author{Shuaifei~Chen, Jiayi~Zhang, Jing Zhang, Emil~Bj{\"o}rnson, and Bo Ai
\thanks{
This work was supported in part by the Swedish Research Council (VR) and Swedish Foundation for Strategic Research (SSF).}
\thanks{S. Chen, J. Zhang and J. Zhang are with the School of Electronic and Information Engineering, Beijing Jiaotong University, Beijing 100044, China. (e-mail: jiayizhang@bjtu.edu.cn).}
\thanks{E. Bj\"{o}rnson is with the Department of Electrical Engineering (ISY), Link\"{o}ping University, SE-58183 Link\"{o}ping, Sweden, and also with the Department of Computer Science, KTH Royal Institute of Technology, SE-16440 Kista, Sweden.
(e-mail: emilbjo@kth.se).}
\thanks{B. Ai is with State Key Laboratory of Rail Traffic Control and Safety, Beijing Jiaotong University, Beijing 100044, China. He is also with School of Information Engineering, Zhengzhou University, Zhengzhou 450001, China. (e-mail: boai@bjtu.edu.cn).}
}
\maketitle
\vspace{-2em}
\begin{abstract}
\vspace{-1.em}
The mobile data traffic has been exponentially growing during the last decades, which has been enabled by the densification of the network infrastructure in terms of increased cell density (i.e., ultra-dense network (UDN)) and/or increased number of active antennas per access point (AP) (i.e., massive multiple-input multiple-output (mMIMO)).
However, neither UDN nor mMIMO will meet the increasing data rate demands of the sixth generation (6G) wireless communications due to the inter-cell interference and large quality-of-service variations, respectively.
Cell-free (CF) mMIMO, which combines the best aspects of UDN and mMIMO, is viewed as a key solution to this issue.
In such systems, each user equipment (UE) is served by a preferred set of surrounding APs cooperatively.
In this paper, we provide a survey of the state-of-the-art literature on CF mMIMO.
As a starting point, the significance and the basic properties of CF mMIMO are highlighted.
We then present the canonical framework, where the essential details (i.e., transmission procedure and mathematical system model) are discussed.
Next, we provide a deep look at the resource allocation and signal processing problems related to CF mMIMO and survey the up-to-date schemes and algorithms.
After that, we discuss the practical issues when implementing CF mMIMO.
Potential future directions are then pointed out.
Finally, we conclude this paper with a summary of the key lessons learned in this field.
This paper aims to provide a starting point for anyone who wants to conduct research on CF mMIMO for future wireless networks.
\end{abstract}

\begin{IEEEkeywords}
6G network, user-centric cell-free (CF) network, massive multiple-input multiple-output (mMIMO).
\end{IEEEkeywords}

\IEEEpeerreviewmaketitle

\section{Introduction}\label{sec:introduction}
{The performance of a mobile network is primarily quantified by the data rates that it can deliver to its users.
Since there is a multitude of user equipments (UEs) distributed over the coverage area, each experiencing unique propagation conditions, the rates that can be supported are highly user- and location-dependent. For example, the IMT-2020 requirements for the fifth generation (5G) technology specify downlink peak data rates that are 200 times larger than the so-called user-experienced data rates that should be guaranteed to 95\% of the users in the designated coverage area \cite{ITU-IMT2020}. When reducing the network performance into a single metric (e.g., to enable network dimensioning), it is common to consider the area traffic capacity, which is measured as the total data rate of all active users divided by the coverage area \cite{ITU-IMT2020}. When the network infrastructure is evolved to improve the area traffic capacity, the average data rates of the individual users will naturally increase, but it might have little impact on the user-experienced data rates, which are determined by the worst-case situations in the propagation environment. It is the user-experienced rates that determine which applications that can be utilized without interruption in the system, not the average or peak rates. Hence, to enable digitalization of society with a high perceived user fairness and consistent experience, the future network evolution should focus on the increasing the user-experienced data rates.}

As noticed by Cooper \cite{Cooper2010a}, the area traffic capacity of cellular networks is determined by the available bandwidth, physical-layer technology, and cell density.  In past decades, the vast majority of the improvements in traffic capacity is due to densification of the network infrastructure, in terms of increased cell density.
This is much in line with the original cellular philosophy \cite{Macdonald1979a}: the coverage area is divided into cells served by different access points (APs), so that the number of active users per cell is manageable for the AP.
Cellular networks were originally designed for voice services (i.e., mobile telephony), which are characterized by requiring a certain signal-to-noise ratio (SNR) to give an acceptable voice quality. If the SNR is below a threshold set by the codec, the voice is distorted and the call will eventually be dropped. As long as the SNR is above the threshold, the sound is distortion-free and, thus, the user experience is identical irrespective of how far above the threshold the SNR is.
Hence, the first generations of cellular networks could be dimensioned based on two principles: first, provide SNRs above the threshold almost everywhere in the coverage area; then, densify in regions where the number of active UEs is above what the APs can handle in the peak hours.

The situation is much different since mobile broadband became the dominant service in cellular networks, because the data rate increases continuously with the SNR \cite{cellfreebook}, up to the point where the maximum spectral efficiency (SE) is achieved.
Hence, cell densification has two positive impacts on the area traffic capacity of mobile broadband services: more UEs can be simultaneously active in the network, and their SNRs increase, which leads to a higher rate per UE.
Current networks consist of a mix of macro cells, micro cells, and small cells \cite{Hoydis2011c,Hwang2013a,Jungnickel2014a,Lopez2015c}.

5G features an additional type of densification: a large number of active antennas per AP, which is known as massive multiple-input multiple-output (mMIMO) \cite{Andrews2014a,Parkvall2017a}. This technology makes the radiation pattern of the APs highly adaptable and more directive, so that a larger fraction of the transmitted power reaches the region around the receiver, while there is less interference at undesired locations. Moreover, the technology allows for spatial multiplexing of UEs within each cell, if the UEs are located in sufficiently different parts of the cell \cite{Marzetta2010a,bjornson2017massive}. To enable efficient interference suppression in the spatial domain, a characteristic feature of mMIMO is that the AP has many more antennas than there are active UEs in the cell.
Broadly speaking, mMIMO has the same two positive impacts on the area traffic capacity as cell densification, but they are achieved differently. The benefit of mMIMO compared to cell densification is that {fewer APs are required to achieve a certain area capacity}, while the drawback is that each AP is equipped with more complicated hardware. So far, 5G makes use of small cells in millimeter-wave bands and mMIMO in sub-6 GHz bands.

The densification is expected to continue beyond 5G \cite{zhang2019multiple}, but both cell densification and mMIMO have fundamental limitations. As the cell area shrinks, the average SNR within a cell will improve, but the number of interfering cells also grows, which will eventually dominate. It is shown in \cite{Andrews2017a} that this effect is noticeable already when there are 10 APs per km$^2$.
This result relies on the assumption that all APs are transmitting simultaneously.
In the so-called ultra-dense network regime \cite{Chen2016a,Kamel2016a}, where there are many more APs than active UEs, only a random subset of the APs will have UEs to serve at any given point in time. The inactive APs will not cause interference in this regime, but since most APs will be idle most of the time, the required network infrastructure is utilized very inefficiently.
When it comes to mMIMO, this is a highly scalable technology in terms of the ability to multiplex many users spatially
 \cite{BjornsonHS17}; one can increase the array dimensions proportionally to the number of UEs that need to be served. However, the technology is rather inefficient in overcoming the large SNR differences that the UEs experience within macro and micro cells. In summary, both cell densification and mMIMO might be well suited for increasing the peak and average rates in future cellular networks. Still, the user-experienced data rates will remain modest due to inter-cell interference and large SNR variations.
\begin{figure}[t!]
\centering
\includegraphics[scale=0.8]{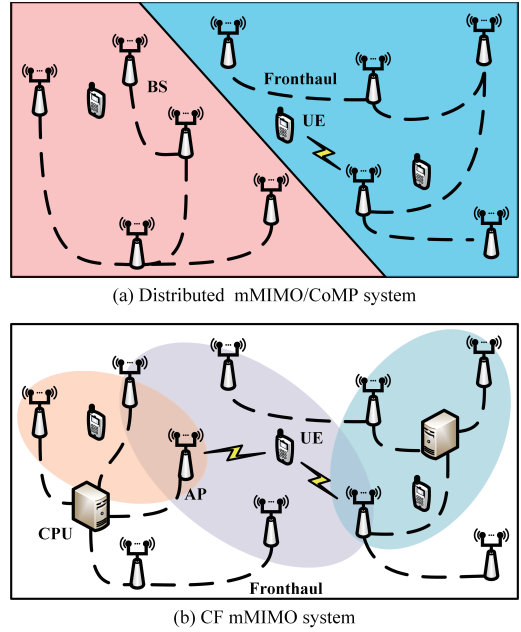}
\caption{Comparison of distributed mMIMO/CoMP and CF mMIMO.
\label{fig:system}}
\end{figure}
\begin{table*}[t!]
  \centering
  \fontsize{9}{9}\selectfont
  \caption{Comparison of cellular mMIMO, distributed mMIMO/CoMP, and CF mMIMO.}
  \label{tab:technology}
   \begin{tabular}{ !{\vrule width1.2pt}  m{3 cm}<{\centering} !{\vrule width1.2pt}  m{3 cm}<{\centering} !{\vrule width1.2pt}  m{4 cm}<{\centering} !{\vrule width1.2pt}  m{3.5 cm}<{\centering} !{\vrule width1.2pt}}

    \Xhline{1.2pt}
        \rowcolor{gray!50} \bf Technology  & \bf Cellular mMIMO  & \bf Distributed mMIMO/CoMP  & \bf CF mMIMO  \cr
    \Xhline{1.2pt}

        Coverage & Small  & Medium  & Large \cr\hline
        \multirow{3}{*}{ Clustering } & Network-centric  & Network-centric  & User-centric \\
        \cline{2-4} & Disjoint  & Disjoint  & Partially overlapping \\
        \cline{2-4} & Fixed  & Fixed  & Dynamic \cr\hline
        A UE served by  & One BS  & A few APs  & All surrounding APs \cr\hline
        CSI for decoding  & Instantaneous  & Instantaneous  & Instantaneous or statistical \cr\hline
        Fronthaul load  & --  & Large  & Small \cr\hline
        Synchronization & Uncritical  & Critical  & Critical  \cr\hline
        User-experience rate &  Low  & Medium  & Large  \cr

    \Xhline{1.2pt}
    \end{tabular}
  \vspace{0cm}
\end{table*}

\subsection{Cell-free mMIMO: The Best of Two Worlds?}

Cell-free mMIMO (CF mMIMO) is a new technology that basically combines the best aspects of ultra-dense cellular networks with the cellular mMIMO technology to overcome their respective weaknesses \cite{cellfreebook}. The name was coined in \cite{[2]} and refers to a network with many more APs than UEs and where the APs are cooperating to serve the UEs through coherent joint transmission and reception.
One way to picture it is to take a network containing a single mMIMO array, dismantle the array, and deploy the individual antennas at different locations while keeping the same transmission/reception algorithms.
When serving a given UE, the distributed antennas will then transmit each data signal with different power and phase-shifts, so they reach the intended UE synchronously and thereby reinforce each other. Similarly, the received signals at the different distributed antennas are co-processed to extract the data from each UE.
Another way to view the creation of the technology is to start from an ultra-dense cellular network, connect the APs to form a virtual distributed mMIMO array, as illustrated in Fig.~\ref{fig:system}(a), and then utilize (roughly) the same transmission algorithms as a conventional mMIMO array would do.

Irrespective of the direction from which one approaches the CF mMIMO technology, the main properties are that there are many geographically distributed APs, but the coverage area is not divided into disjoint cells.
Each UE is served by all the surrounding APs, as illustrated in Fig.~\ref{fig:system}(b). This mMIMO processing resolves the interference situation that limits conventional ultra-dense networks and leads to a network free from cells. Moreover, by having many distributed AP antennas instead of few APs with large antenna arrays, {the large SNR variations that limit the efficiency of conventional cellular mMIMO are effectively mitigated.}
The original motivation behind CF mMIMO was to design a new network infrastructure capable of providing uniform data rates in the coverage area \cite{[2]}; that is, concentrating on improving the user-experienced data rates, instead of the average or peak rates, which are already quite high in contemporary networks.

Since each UE will only be influenced by the signals from the closest surrounding APs, a CF mMIMO system can also be viewed as a user-centric network \cite{cellfreebook,[10],[11],[163]}. As illustrated in Fig.~\ref{fig:system}(b), each UE is served by a unique set of surrounding APs. To facilitate the cooperation between the neighboring APs flexibly, the technology has been conceived to make use of a cloud radio access network (C-RAN) infrastructure \cite{CRAN2011}. More precisely, the APs are connected via so-called fronthaul connections to one or multiple edge-cloud processors, which are called central processing units (CPUs) in the CF mMIMO literature \cite{[2]}.
{The backhaul connections can either be fully wired (e.g., using optical fiber cables) or partially wireless (e.g., using fixed microwave links).}

\begin{table}[t!]
  \centering
  \fontsize{9}{9}\selectfont
  \caption{Important Abbreviations.}
  \label{tab:symbol}
    \begin{tabular}{ !{\vrule width1.2pt}  m{1.6cm}<{\centering} !{\vrule width1.2pt}  m{5.7 cm}<{\centering} !{\vrule width1.2pt} m{1.6cm}<{\centering} !{\vrule width1.2pt}  m{5.7 cm}<{\centering} !{\vrule width1.2pt}}

    \Xhline{1.2pt}
        \rowcolor{gray!50} \bf Abbreviation & \bf Definition & \bf Abbreviation & \bf Definition  \cr
    \Xhline{1.2pt}
        5G    & fifth generation & IoT   & Internet of Things \\\hline
        6G    & Sixth generation & ISAC  & Integrated sensing and communication \\\hline
        ADC   & Analog-to-digital converter & LMMSE & Linear minimum mean-squared error \\\hline
        ANN   & Artificial neural network & L-MMSE & Local minimum mean-squared error \\\hline
        AoA   & Angle-of-arrival & LoS   & Line-of-sight \\\hline
        AoD   & Angles-of-departure & LP-MMSE & Local partial minimum mean-squared error \\\hline
        AP    & Access point & L-RZF & Local regularized zero-forcing \\\hline
        APO   & AP switch On/Off & LS    & Least-square \\\hline
        C\&F  & Compute-and-Forward & LSFD  & Large-scale fading decoding \\\hline
        CAP   & Compress-after-precoding & MEC   & Mobile edge computing \\\hline
        CBDNet & Convolutional blind denoising network & ML    & Machine learning \\\hline
        CDF   & Cumulative distribution function & mMIMO & Massive multiple-input multiple-output \\\hline
        CF    & Cell-free & MMSE  & Minimum mean-squared error \\\hline
        CFE   & Compress-forward-estimate & MR    & Maximum ratio \\\hline
        C-MMSE & Centralized minimum mean-squared error & MSE   & Mean-squared error \\\hline
        CoMP  & Coordinated multipoint & NLoS  & Non line-of-sight \\\hline
        CPU   & Central processing unit & NMSE  & Normalized mean-squared error \\\hline
        CP   & Cyclic prefix & OFDM  & Orthogonal frequency division multiplexing \\\hline
        C-RAN & Cloud radio access network & OTA   & Over-the-air \\\hline
        CS    & Compressive sensing & PAC   & Precoding-after-compress \\\hline
        CSI   & Channel state information & PA-MMSE & Phase-aware minimum mean-squared error \\\hline
        DCNN  & Deep convolutional neural network & P-FZF & Partial full-pilot zero-forcing \\\hline
        DFRC  & Dual-functional radar-communication & P-MMSE & Partial minimum mean-squared error \\\hline
        DFT   & Discrete Fourier transform & PWP-FZF & Protective weak partial full-pilot zero-forcing \\\hline
        E-C\&F & Expanded compute-and-forward & QoS   & Quality-of-service \\\hline
        ECF   & Estimate-compress-forward & RAU   & Radio access unit \\\hline
        EE    & Energy efficiency & RZF   & Regularized zero-forcing \\\hline
        EMCF  & Estimate-multiply-compress-forward & SCA   & Successive convex approximation \\\hline
        EMCFW & Estimate-multiply-compress-forward-weight & SE    & Spectral efficiency \\\hline
        EW-MMSE & Element-wise minimum mean-squared error & SIC   & Successive interference cancelation \\\hline
        FDD   & Frequency-division duplex & SINR  & Signal-to-interference-and-noise ratio \\\hline
        FFDNet & Flexible denoising convolutional neural network & SNR   & Signal-to-noise ratio \\\hline
        FL    & Federated learning & SOCP  & Second-order cone program \\\hline
        FZF   & Full-pilot zero-forcing & TDD   & Time-division duplex \\\hline
        GP    & Geometric programming & TOA   & Time-of-arrival \\\hline
        GPS   & Global position system & UatF  & Use-and-then-forget \\\hline
        HI    & Hardware impairments & UDN   & Ultra-dense network \\\hline
        ICA   & Independent component analysis & UE    & User equipment \\\hline
        IoE   & Internet of Everything & ZF    & Zero-forcing \\\hline

    \Xhline{1.2pt}
    \end{tabular}
  \vspace{0cm}
\end{table}

\subsection{Related Technologies}

The vision of serving UEs using multiple distributed APs has been around for a few decades. For example, Wyner described in \cite{Wyner1994a} from 1994 how one can untangle interfering uplink signals by joint detection at neighboring APs.
One can view this as a system of linear equations where each unknown variable is the information transmitted by one UE, and each equation is the signal received at one AP. A single-antenna AP can only identify one UE signal, but if the neighboring APs cooperate, they can jointly identify as many UE signals as APs.
A downlink counterpart of this concept was introduced by Shamai and Zaidel in \cite{Shamai2001a} from 2001.
Early embodiments of this technology have been called \emph{Distributed Wireless Communication System} \cite{Zhou2003a} and \emph{Network MIMO} \cite{Venkatesan2007a}.
Other prominent papers from this period are \cite{Jafar2004a,Zhang2004a,Zhang2005b,Tao2005a,Foschini2006a,Karakayali2006a,Tolli2008a,Sanderovich2009a,Simeone2009a,Khattak2008a,Bjornson2010c} and the survey article \cite{Gesbert2010a}. 3GPP called these technologies \emph{coordinated multipoint} (CoMP)  \cite{Parkvall2008a,Boldi2011a}.
The general premise was to evolve an existing cellular network by adding cooperation between the neighboring APs to reduce the inter-cell interference, not build a cell-free network from scratch, as is the vision with CF mMIMO.
Two main distinguishing factors between these early works and CF mMIMO are the operating regime with many more APs than UEs and the physical-layer operation inspired by the recent advancements in the mMIMO field.
For example, perfect channel state information (CSI) was generally assumed in the Network MIMO literature, and the methodology for analyzing the achievable data rates under imperfect CSI was largely missing at that time. When CoMP algorithms were analyzed under practical conditions, the gains were surprisingly low \cite{Boldi2011a}.
In practice, only the closest APs can acquire reliable CSI. Thus the system operation must be made robust to CSI imperfections, and there must be a resource-efficient way to acquire CSI. The mMIMO methodology provides these missing pieces.

The intended use case of the CoMP technology was to take an existing cellular network and divide the APs into disjoint clusters \cite{Choi2007a,Marsch2008a,Zhang2009b,Huang2009b}, which effectively creates a cellular network with distributed antennas within each cell cluster. {The significant difference between distributed mMIMO/CoMP and CF mMIMO is illustrated in Fig.~\ref{fig:system} and Table~\ref{tab:technology}.}
The CoMP technology reduces the SNR variations within each cluster but keeps the cellular structure. Thus UEs at the edges of a cell cluster are affected by interference from neighboring cell clusters. This is not the case in a CF mMIMO network, where every UE is served by all the surrounding APs.

\subsection{Contributions of This Survey}

CF mMIMO evolves from multiple existing techniques, like ultra-dense network (UDN), CoMP, Network MIMO, and mMIMO.
The convergence of these different technologies bursts out the novel strategies and procedures to meet the stringent requirements of the sixth generation (6G) network in terms of the high user-experienced data rates and ubiquitous coverage, which necessitates surveying the existing works on this topic and address the future research directions.

In this survey, we first describe in detail the technical foundations of CF mMIMO by giving a brief tutorial in a canonical framework.
This serves for a better understanding of the extensive survey on the state-of-the-art schemes and algorithms for resource allocation, signal processing, practical implementation, and future research directions on this topic.
To the best of the authors' knowledge, there has been no comprehensive survey on CF mMIMO available in the literature, although there are three survey and tutorial articles \cite{cellfreebook,[135],[103]} available, that however missed some parts of the holistic overview.
Specifically, \cite{cellfreebook} and \cite{[135]} provided tutorials on CF mMIMO while they lack a comprehensive survey on the applied schemes and algorithms in the literature, \cite{[135]} gave an early survey while extensive research has been conducted after that.
This motivates this survey to review the up-to-date works on CF mMIMO to provide a starting point for anyone who wants to conduct research on this topic.

\subsection{Paper Outline}

The remainder of this paper is organized as follows.
Section \ref{sec:foundation} introduces the technical foundations for CF mMIMO, where the transmission procedure and mathematical system model are discussed.
Section \ref{sec:processing} gives a comprehensive survey on resource allocation and signal processing while the practical issues when implementing CF mMIMO are discussed in Section \ref{sec:practical}.
Then, potential future directions of CF mMIMO research are highlighted in Section \ref{sec:future}.
{Finally, this paper is concluded Section \ref{sec:conclusion}} with a summary of the key lessons learned in this field.
The overall roadmap of this paper is illustrated in Fig.~\ref{fig:roadmap}.

\begin{figure*}[t!]
\centering
\includegraphics[scale=0.8]{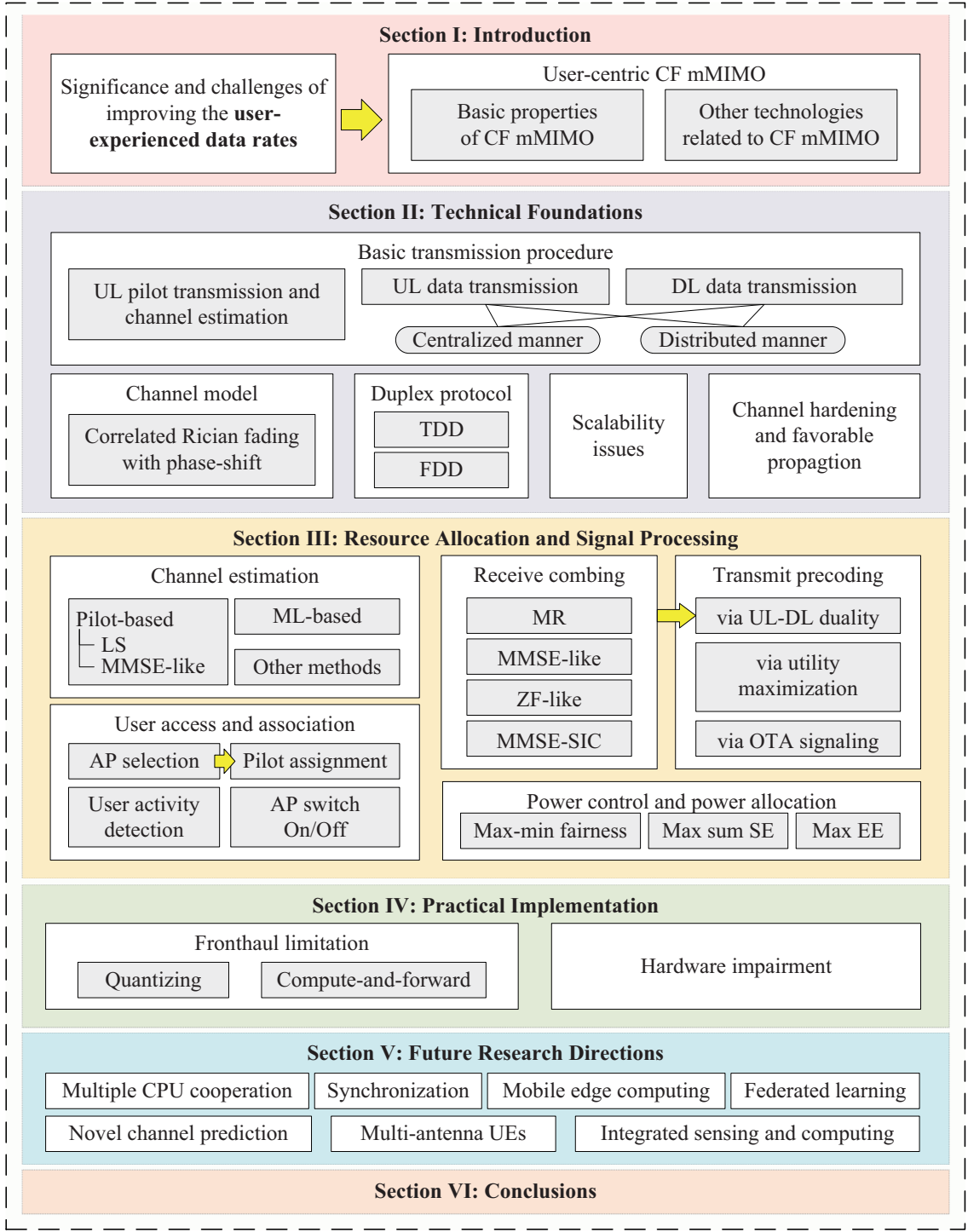}
\caption{Survey roadmap.
\label{fig:roadmap}}
\end{figure*}
\clearpage
\subsection{Notation}

Boldface lowercase letters, $\bf x$, denote column vectors and boldface uppercase letters, $\bf X$, denote matrices.
The superscripts $^{\rm T}$, $^{\rm *}$, and $^{\rm H}$ denote transpose, conjugate, and conjugate transpose, respectively.
The $n \times n$ identity matrix is ${\bf I }_n$.
We use $ \buildrel \Delta \over = $ for definitions and ${\rm {diag}}\left({{\bf A}_{1} ,\ldots, {\bf A}_{n}}\right)$ for a block-diagonal matrix with the square matrices ${{\bf A}_{1} ,\ldots, {\bf A}_{n}}$ on the diagonal.
The multi-variate circularly symmetric complex Gaussian distribution with correlation matrix $\bf R$ is denoted ${\cal N}_{\mathbb C}\left({{\bf 0},{\bf R}}\right)$.
{The expected value of $\bf x$ is denoted as ${\mathbb E}\left\{{\bf x}\right\}$.}
{We use $\left|{\cal A}\right|$ to denote the cardinality of the set $\cal A$.}

\begin{table}[tp]
  \centering
  \fontsize{9}{9}\selectfont
  \caption{Important Mathematical Symbols.}
  \label{tab:}
    \begin{tabular}{ !{\vrule width1.2pt}  m{1. cm}<{\centering} !{\vrule width1.2pt}  m{5.6 cm}<{\centering} !{\vrule width1.2pt}}

    \Xhline{1.2pt}
        \rowcolor{gray!50} \bf Notation  & \bf Definition     \cr
    \Xhline{1.2pt}

        $K$, $L$, $N$ & \makecell[c]{Number of the UEs, APs,\vspace{-0.2em} \\and antennas per AP}   \cr\hline
        ${\tau}_{p}$ & Number of orthogonal pilot sequences  \cr\hline
        $k$, $i$ & Index of the UEs \cr\hline
        $l$, $j$ & Index of the APs \cr\hline
        $t_k$ & Index of the pilot assigned to UE $k$ \cr\hline
        ${\cal M}_k$ & Subset of APs serving UE $k$ \cr\hline
        ${\cal D}_l$ & Subset of UEs served by AP $l$ \cr\hline
        ${\cal S}_k$ & \makecell[c]{Subset of UEs sharing pilot $t_k$,\vspace{-0.2em} \\including UE $k$} \cr\hline
        ${\cal P}_k$ & Subset of UEs served by partially the same APs as UE $k$, including UE $k$ \cr\hline
        ${\bf h}_{kl}$ & Channel response between UE $k$ and AP $l$ \cr\hline
        ${\hat {\bf h}}_{kl}$ & Channel estimate of ${\bf h}_{kl}$ \cr\hline
        $\beta_{kl}$ &  \makecell[c]{Large-scale fading coefficient of UE $k$\vspace{-0.2em} \\and AP $l$}  \cr\hline
        ${\bf v}_{kl}$ &  Combining vector that AP $l$ selects for UE $k$  \cr\hline
        ${\bf w}_{kl}$ &  Precoding vector that AP $l$ selects for UE $k$  \cr

    \Xhline{1.2pt}
    \end{tabular}
  \vspace{0cm}
\end{table}

\section{Technical Foundations}\label{sec:foundation}

Having explained the basic motivation and properties of CF mMIMO in the previous section, we will now take a look at its fundamental technical components and details. We first present the basic transmission procedure of CF mMIMO systems.
We then discuss the mathematical system model, including the models of the fading channels, duplex protocols, scalability issues, and channel hardening and favorable propagation.

\begin{figure*}[t!]
\centering
\includegraphics[scale=0.75]{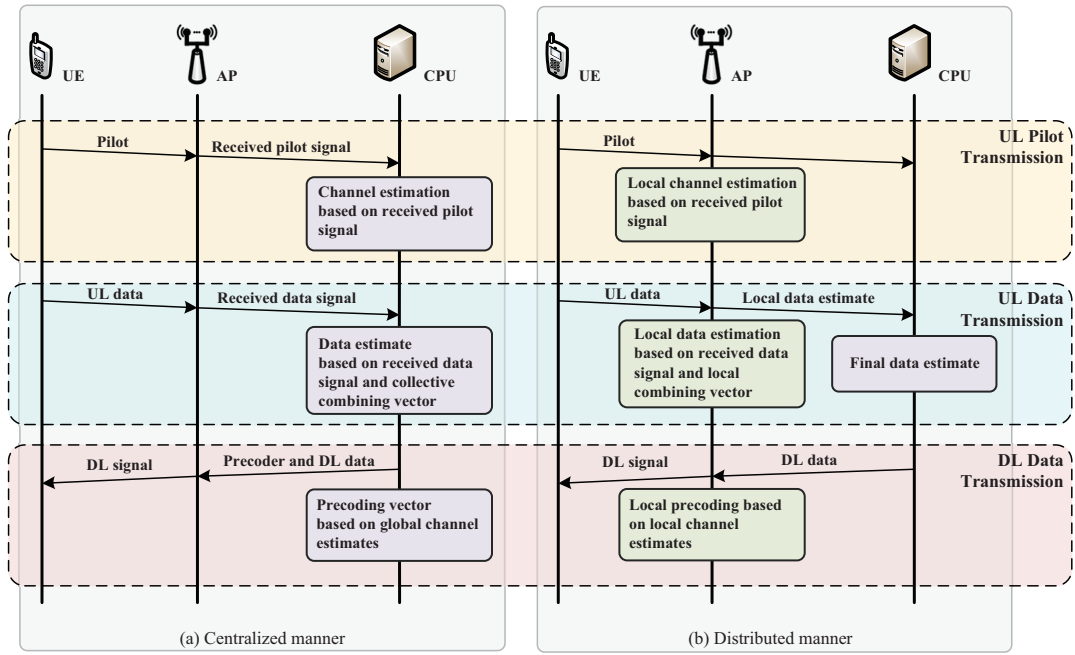}
\caption{Flow chart of three-stage transmission procedure for CF mMIMO.
\label{fig:procedure}}
\end{figure*}

\subsection{Basic Transmission Procedure}\label{subsec:procedure}

We consider a representative CF mMIMO system consisting of $K$ single-antenna UEs and $L$ APs, each equipped with $N$ antennas.
As illustrated in Fig.~\ref{fig:system}(b), all APs are connected to a CPU in an arbitrary fashion with the fronthaul connections. These connections facilitate the cooperation between the APs, such as the coherent joint transmission of the data signals to the UEs and the coherent joint reception of the data signals from the UEs.
The system could operate either in time-division duplex (TDD) mode or in frequency-division duplex (FDD) mode, which is further discussed in Section \ref{subsec:duplex protocol}.
For now, all APs and UEs are assumed to be operated in TDD mode. The propagation channels vary over time and frequency, which we describe using a block fading model \cite{cellfreebook}. More precisely, the time-frequency grid is divided into coherence blocks of $\tau_c$ channel uses for which the channel is constant and frequency flat. Each coherence block is divided into three phases: $\tau_p$ channel uses for uplink channel estimation, $\tau_u$ and $\tau_d$ for uplink and downlink data, respectively, such that $\tau_c = \tau_p + \tau_u + \tau_d$.
Note that this block-fading model is an abstraction of the practical multi-carrier modulation model used in orthogonal frequency division multiplexing (OFDM), and we consider it to make sure that the main concepts of CF mMIMO are not hidden under the more complicated notation that OFDM requires.
However, one can still map a practical OFDM system to this block-fading system by regarding several subcarriers, which compose the available bandwidth, as an aforementioned coherence block; see \cite[Sec.~2]{Marzetta2016a} for a concrete example.
We denote by ${{\bf{h}}_{kl}} \in {{\mathbb{C}}^N}$ the channel response between AP $l$ and UE $k$, which is a random realization in each coherence block of some stationary ergodic fading distribution.
Different fading distributions will be discussed in Section \ref{subsec:channel model}.

\subsubsection{Uplink Pilot Transmission and Channel Estimation}
When the UEs have gained access to the network, they are assigned pilots which are used for channel estimation.
We assume there are $\tau_p$ mutually orthogonal $\tau_p$-length pilot signals, where $\tau_p$ is a constant independent of $K$.
The pilot resources are limited due to the natural channel variations in the time and frequency domain. Thus we have $\tau_p < K$ in most practical scenarios, and the pilots must thus be reused between UEs.
Different algorithms for the pilot assignment are surveyed in Section \ref{subsec:access}.
For now, we denote by $t_k \in \{ 1,\ldots,\tau_p \}$ the index of the pilot assigned to UE $k$ and call ${\cal S}_k = \{ i : t_k = t_i \} \subset \{1,\ldots,K\}$ the subset of UEs sharing pilot $t_k$, including UE $k$.
When the UEs in ${\cal S}_k$ transmit pilot $t_k$, the received signal
${\bf y}_{{t_k}l}^{\rm pilot} \in {\mathbb C}^N$ after despreading at AP $l$ is \cite[Sec. 3]{cellfreebook}
\begin{equation}\label{eq:pilot}
  {\bf{y}}_{{t_k}l}^{{\rm{pilot}}} = \sum\limits_{i \in {{\cal S}_k}} {\sqrt {{\tau _p}{p_i}} {{\bf{h}}_{il}}}  + {{\bf{n}}_{{t_k}l}},
\end{equation}
where $p_i$ is the transmit power of UE $i$ and ${{\bf{n}}_{{t_k}l}} \sim {\cal N}_{\mathbb C}\left( {\bf 0},\sigma_{\rm ul}^2 {\bf I}_N \right)$ is the thermal noise.

By exploiting different levels of prior information regarding the fading distributions, different kinds of channel estimators could be utilized at the APs (or CPU) to estimate the channels based on \eqref{eq:pilot}, which are elaborated in Section \ref{subsec:estimation}.
For now, we denote by ${{{\bf{\hat h}}}_{kl}} \in {{\mathbb{C}}^N}$ the estimate of the channel ${\bf h}_{kl}$.
As shown in Fig.~\ref{fig:procedure}, an AP can perform the channel estimation locally or delegate this task to the CPU by just handing over its received pilot signals.
Since the pilots are reused between UEs, there will be interference between the pilot-sharing UEs. This will reduce the channel estimation quality, which makes the coherent transmission less effective. It also makes it harder to reject interference between the UEs sharing the same pilot since the AP cannot correctly separate their channels.
This is the so-called \emph{pilot contamination} phenomenon, which is particularly famous in the cellular mMIMO literature \cite{Marzetta2010a,Sanguinetti2019a}, but exists in any wireless system with non-orthogonal pilot transmissions.

\subsubsection{Uplink Data Transmission}

During uplink data transmission, the received signal ${\bf y}_{l}^{\rm ul} \in {\mathbb C}^N$ at AP $l$ is
\begin{equation}\label{eq:uplink received signal}
  {\bf{y}}_l^{{\rm{ul}}} = \sum\limits_{i = 1}^K {{{\bf{h}}_{il}}{s_i}}  + {{\bf{n}}_l},
\end{equation}
where $s_i \in {\mathbb C}$ is the signal transmitted from UE $i$ which is assumed to be transmitted with power $p_i$ and ${{\bf{n}}_{l}} \sim {\cal N}_{\mathbb C}\left( {\bf 0},\sigma_{\rm ul}^2 {\bf I}_N \right)$ is the noise.
One can estimate $s_k$ by properly combining the received signal ${\bf y}_{l}^{\rm ul}$.
For now, we let ${\hat s}_k$ denote the estimate of ${s}_k$ and call ${\bf v}_{kl} \in {\mathbb C}^N$ the combining vector that AP $l$ assigns to UE $k$.
Due to the three-stage architecture of CF mMIMO, there can be two levels of cooperation among APs for combing design, i.e., centralized combining and distributed combining as shown in Fig.~\ref{fig:procedure}.

{\bf Centralized Combining} In the first level, all APs forward their received data signals $\{ {\bf{y}}_l^{{\rm{ul}}} :l=1,\ldots,L \}$ to the CPU, which performs channel estimation and data detection in a fully centralized fashion.
Since no local signal processing is performed at APs, the CPU sees a collection of the received data signals
\begin{equation}\label{eq:collective uplink received signal}
  {{\bf{y}}^{{\rm{ul}}}} = \sum\limits_{i = 1}^K {{{\bf{h}}_i}{s_i}}  + {\bf{n}},
\end{equation}
where ${\bf y}^{\rm {ul}} = [ ({\bf y}_{1}^{\rm {ul}})^{\rm T},\ldots,({\bf y}_{L}^{\rm {ul}})^{\rm T} ]^{\rm T} \in {\mathbb C}^{LN}$, $ {\bf h}_i = [ {\bf h}_{i1}^{\rm T},\ldots,{\bf h}_{iL}^{\rm T} ]^{\rm T} \in {\mathbb C}^{LN}$, and ${\bf n} = [ {\bf n}_{1}^{\rm T},\ldots,{\bf n}_{L}^{\rm T} ]^{\rm T} \in {\mathbb C}^{LN}$.
Based on all the collective channel estimates $\{ {\hat {\bf h}}_{k} = [ {\hat {\bf h}}_{k1}^{\rm T},\ldots,{\hat {\bf h}}_{kL}^{\rm T} ]^{\rm T} \in {\mathbb C}^{LN} :k=1,\ldots,K\}$, the CPU can select an arbitrary combining vector ${\bf v}_k \in {\mathbb C}^{L N}$ for UE $k$.
Consequently, the data transmitted by UEs can be estimated.

We cannot compute the exact ergodic capacity of this setup due to the imperfect channel knowledge.
However, we can rigorously analyze the performance by using a standard capacity lower bound \cite{bjornson2017massive} referred to as an achievable spectral SE. The following SE is achievable when using minimum mean-squared error (MMSE) channel estimation, which is introduced in Section \ref{subsec:estimation}.
\begin{lemm}\label{lemm:SE ul 1}
When the MMSE channel estimates are available, an achievable uplink SE of UE $k$ is
\begin{equation}\label{eq:SE ul 1}
  {\sf{SE}}_k^{( {\rm ul},1 )} =  \frac{\tau _u}{\tau _c} {\mathbb{E}}\left\{ {{{\log }_2}\left( {1 + {\sf{SINR}}_k^{( {\rm ul},1 )}} \right)} \right\}
\end{equation}
where the instantaneous effective signal-to-interference-and-noise ratio (SINR) is
\begin{equation}\label{eq:SINR ul 1}
  {\sf{SINR}}_k^{( {\rm ul},1 )} = \frac{{{p_k}{{\left| {{\bf{v}}_k^{\rm{H}}{{{\bf{\hat h}}}_k}} \right|}^2}}}{{\sum\limits_{i = 1,i \ne k}^K {{p_i}{{\left| {{\bf{v}}_k^{\rm{H}}{{{\bf{\hat h}}}_i}} \right|}^2}}  + {\bf{v}}_k^{\rm{H}}\left( {\sum\limits_{i = 1}^K {{p_i}{{\bf{C}}_i}}  + {\sigma _{{\rm{ul}}}^2}{{\bf{I}}_{LN}}} \right){{\bf{v}}_k}}},
\end{equation}
with ${\bf C}_i = {\rm diag}({\bf C}_{i1},\ldots,{\bf C}_{iL})$ where ${\bf C}_{il} = {\mathbb{E}}\{ ({\bf h}_{il} - {\bf{\hat h}}_{il})({\bf h}_{il} - {\bf{\hat h}}_{il})^{\rm H} \}$ is the error correlation matrix of ${\bf{\hat h}}_{il}$.
\end{lemm}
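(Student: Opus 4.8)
The plan is to treat the centralized detector as a deterministic single-user channel with receiver-side information and then invoke the standard capacity lower bound underlying the achievable-SE methodology \cite{bjornson2017massive,cellfreebook}. First I would form the soft estimate ${\hat s}_k = {\bf v}_k^{\rm H}{\bf y}^{\rm ul}$ from the collective model \eqref{eq:collective uplink received signal} and split each channel into its MMSE estimate and error, ${\bf h}_i = {\hat{\bf h}}_i + {\tilde{\bf h}}_i$ with ${\bf C}_i = {\mathbb E}\{ {\tilde{\bf h}}_i {\tilde{\bf h}}_i^{\rm H} \}$. Isolating the term that carries the known effective channel, I would write
\begin{equation}
  {\hat s}_k = \underbrace{{\bf v}_k^{\rm H}{\hat{\bf h}}_k s_k}_{\text{desired}} + \underbrace{{\bf v}_k^{\rm H}{\tilde{\bf h}}_k s_k + \sum_{i \ne k} {\bf v}_k^{\rm H}{\bf h}_i s_i + {\bf v}_k^{\rm H}{\bf n}}_{\text{effective noise } w_k},
\end{equation}
so that, conditioned on the estimates $\{ {\hat{\bf h}}_i \}$, the gain ${\bf v}_k^{\rm H}{\hat{\bf h}}_k$ multiplying $s_k$ is deterministic and the whole interference-plus-error-plus-noise contribution $w_k$ acts as additive noise.

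Next I would verify the two properties the capacity bound requires. The conditional uncorrelatedness ${\mathbb E}\{ w_k s_k^* \mid \{ {\hat{\bf h}}_i \} \} = 0$ follows from three facts: the data symbols are zero-mean and mutually independent (killing every cross term with $s_i$, $i \ne k$, and with the noise), while the MMSE error has zero conditional mean, ${\mathbb E}\{ {\tilde{\bf h}}_k \mid \{ {\hat{\bf h}}_i \} \} = {\bf 0}$, which removes the self-error term. I would then compute the conditional second moment ${\mathbb E}\{ |w_k|^2 \mid \{ {\hat{\bf h}}_i \} \}$ term by term: the cross terms vanish by the same independence and zero-mean arguments; the $i \ne k$ interference contributes $p_i | {\bf v}_k^{\rm H}{\hat{\bf h}}_i |^2$ plus the error part $p_i {\bf v}_k^{\rm H}{\bf C}_i {\bf v}_k$; the self-error contributes $p_k {\bf v}_k^{\rm H}{\bf C}_k {\bf v}_k$; and the data-phase noise, independent of the pilot-phase observations used to form the estimates, contributes $\sigma_{\rm ul}^2 \| {\bf v}_k \|^2$. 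Grouping the $K$ error terms into ${\bf v}_k^{\rm H}(\sum_{i=1}^K p_i {\bf C}_i){\bf v}_k$ and merging the noise reproduces exactly the denominator of \eqref{eq:SINR ul 1}.

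Finally I would apply the worst-case-noise step. For a channel ${\hat s}_k = g_k s_k + w_k$ with $g_k$ known at the receiver and $w_k$ conditionally uncorrelated with the Gaussian input $s_k$, Gaussian signaling together with the fact that Gaussian noise of a given variance minimizes the mutual information yields the achievable rate ${\mathbb E}\{ \log_2(1 + p_k |g_k|^2 / {\mathbb E}\{|w_k|^2 \mid \{ {\hat{\bf h}}_i \}\} ) \}$, where the outer expectation is over the estimates. Substituting the moments computed above gives ${\sf SINR}_k^{({\rm ul},1)}$, and scaling by the pre-log fraction $\tau_u/\tau_c$ of channel uses spent on uplink data yields \eqref{eq:SE ul 1}.

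The main obstacle, and the only place where Gaussianity is genuinely invoked, is the conditional-moment step: I must justify that the MMSE error ${\tilde{\bf h}}_i$ has zero conditional mean and conditional correlation exactly ${\bf C}_i$ given all the estimates, rather than merely being uncorrelated with them. For jointly Gaussian channels this holds because the MMSE estimator is linear and the error is statistically independent of the estimate; the block-diagonal form ${\bf C}_i = {\rm diag}({\bf C}_{i1},\ldots,{\bf C}_{iL})$ then follows from the per-AP estimation in \eqref{eq:pilot} with independent cross-AP channels and noise. Care is also needed to keep the data-phase noise ${\bf n}$ separate from the pilot-phase noise entering the estimates, so that conditioning on $\{ {\hat{\bf h}}_i \}$ leaves its variance unchanged.
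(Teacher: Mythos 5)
Your proposal is correct and is essentially the paper's own proof spelled out: the paper simply cites \cite[Theo.~4.1]{bjornson2017massive}, and your argument—splitting ${\bf h}_i = {\hat{\bf h}}_i + {\tilde{\bf h}}_i$, conditioning on the estimates, verifying zero conditional mean and conditional correlation ${\bf C}_i$ of the MMSE errors (valid because Gaussian MMSE errors are independent of the estimates), and applying the worst-case Gaussian noise bound with pre-log $\tau_u/\tau_c$—is exactly the standard derivation behind that cited theorem. Your closing caveats (keeping data-phase noise independent of the pilot observations, and the block-diagonal structure of ${\bf C}_i$ from independent per-AP channels) are the right points to flag and are handled the same way in the reference.
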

\begin{proof}
It follows from the proof of \cite[Theo. 4.1]{bjornson2017massive}.
\end{proof}

{\bf Distributed Combining} In the second level, each AP can preprocess its signal {by computing local estimates of the data and then passing them to the CPU} for final decoding.
With the local combining vector ${\bf v}_{kl}$, AP $l$ computes its local estimate of $s_k$ as
\begin{equation}\label{eq:signal estimate ul 2}
   {\check s}_{kl} \triangleq {\bf{v}}_{kl}^{\rm{H}}{\bf{y}}_l^{{\rm{ul}}} = {\bf{v}}_{kl}^{\rm{H}}{{\bf{h}}_{kl}}{s_k} + \sum\limits_{i = 1,i \ne k}^K {{\bf{v}}_{kl}^{\rm{H}}{{\bf{h}}_{il}}{s_i}}  + {\bf{v}}_{kl}^{\rm{H}}{{\bf{n}}_l}.
\end{equation}
Any combining vector can be adopted in the above expression \eqref{eq:signal estimate ul 2}.
Unlike the centralized combining, however, AP $l$ can only use its own local channel estimates to design ${\bf v}_{kl}$.

The local estimates $\{ {\check s}_{kl}:l=1,\ldots,L \}$ are then sent to the CPU where they are linearly combined using the weights $\{ a_{kl}:l=1,\dots,L \}$ to obtain ${{\hat s}_k} = \sum\nolimits_{l = 1}^L {a_{kl}^*{{\check s}_{kl}}}$
which is eventually used to decode $s_k$, as
\begin{align}\label{eq:}\notag
{{\hat s}_k} &= \sum\limits_{l = 1}^L {a_{kl}^*{\bf{v}}_{kl}^{\rm{H}}{{\bf{h}}_{kl}}} {s_k} + \sum\limits_{l = 1}^L {a_{kl}^*\sum\limits_{i = 1,i \ne k}^K {{\bf{v}}_{kl}^{\rm{H}}{{\bf{h}}_{il}}{s_i}} }  + {{{\bf{n}}}'_k}\\
 &= {\bf{a}}_k^{\rm{H}}{{\bf{f}}_{kk}}{s_k} + \sum\limits_{i = 1,i \ne k}^K {{\bf{a}}_k^{\rm{H}}{{\bf{f}}_{ki}}{s_i}}  + {{{\bf{n}}}'_k}
\end{align}
where ${\bf f}_{ki} = [ {\bf{v}}_{k1}^{\rm{H}}{{\bf{h}}_{i1}} \ldots {\bf{v}}_{iL}^{\rm{H}}{{\bf{h}}_{kL}} ]^{\rm T} \in {\mathbb C}^{L}$ is the receive-combined channel vector between UE $k$ and each of the APs, ${\bf a}_{k} = [ a_{k1},\ldots,a_{kL} ]^{\rm T} \in {\mathbb C}^L$ is the weighting vector, ${{{\bf{n}}}'_k} = \sum\nolimits_{l = 1}^L {a_{kl}^*{\bf{v}}_{kl}^{\rm{H}}{{\bf{n}}_l}} $ is the effective noise, and $\{{{\bf{a}}_k^{\rm{H}}{{\bf{f}}_{ki}}} : i=1,\ldots,K\}$ is the effective channel.

Since the CPU does not have the knowledge of the effective channel ${\bf{a}}_k^{\rm{H}}{{\bf{f}}_{kk}}$, we utilize the well-considered \emph{use-and-then-forget} (UatF) capacity bound \cite[Theo. 4.4]{bjornson2017massive}, where we use the channel estimates for combining and then effectively ``forget" them before the signal detection, to obtain the achievable SE.

\begin{lemm}\label{lemm:SE ul 2}
An achievable uplink SE of UE $k$ is
\begin{equation}\label{eq:SE ul 2}
  {\sf{SE}}_k^{( {\rm ul},2 )} = \frac{{{\tau _u}}}{{{\tau _c}}}{\log _2}\left( {1 + {\sf{SINR}}_k^{( {\rm ul},2 )}} \right)
\end{equation}
with the effective SINR given by
\begin{align}\label{eq:SINR ul 2}\notag
  &{\sf{SINR}}_k^{( {\rm ul},2 )} \\&= \frac{{{p_k}{{\left| {{\bf{a}}_k^{\rm{H}}{\mathbb{E}}\left\{ {{{\bf{f}}_{kk}}} \right\}} \right|}^2}}}{{\sum\limits_{i = 1}^K {{p_i}{\mathbb{E}}\left\{ {{{\left| {{\bf{a}}_k^{\rm{H}}{{\bf{f}}_{ki}}} \right|}^2}} \right\}}  + {p_k}{{\left| {{\bf{a}}_k^{\rm{H}}{\mathbb{E}}\left\{ {{{\bf{f}}_{kk}}} \right\}} \right|}^2} + {\sigma _{{\rm{ul}}}^2}{\bf{a}}_k^{\rm{H}}{{\bf{\Lambda }}_k}{{\bf{a}}_k}}}
\end{align}
where ${\bf \Lambda}_k = {\rm diag}({\mathbb E}\{ \| {\bf v}_{k1}\|^2\},\ldots,{\mathbb E}\{ \| {\bf v}_{kL}\|^2\} ) \in {\mathbb C}^{L \times L}$ and the expectations are with respect to the channel estimates.
\end{lemm}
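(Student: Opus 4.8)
The plan is to apply the use-and-then-forget (UatF) methodology to the effective single-input single-output channel that the CPU sees for UE $k$, following the template of \cite[Theo.~4.4]{bjornson2017massive}. First I would split the useful signal into a deterministic part and a fluctuation by adding and subtracting its mean, writing the combined observation as
\[
{\hat s}_k = {\bf a}_k^{\rm H}{\mathbb E}\{{\bf f}_{kk}\}\, s_k + {\bf a}_k^{\rm H}\left({\bf f}_{kk} - {\mathbb E}\{{\bf f}_{kk}\}\right)s_k + \sum_{i=1,i\ne k}^K {\bf a}_k^{\rm H}{\bf f}_{ki}\,s_i + {\bf n}_k',
\]
so that the first term is a known-gain desired signal and the remaining three terms are lumped into an effective noise $\tilde n_k$. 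The point of this decomposition is that the CPU can treat the constant gain ${\bf a}_k^{\rm H}{\mathbb E}\{{\bf f}_{kk}\}$ as perfectly known, which is exactly what the UatF bound exploits.

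The key step is to verify that $\tilde n_k$ is uncorrelated with the desired symbol $s_k$. Since the symbols $\{s_i\}$ are zero-mean, unit-power, and mutually independent, and since each channel realization (hence each ${\bf f}_{ki}$ and each combining vector ${\bf v}_{kl}$) is independent of the data, the fluctuation term has zero conditional mean given $s_k$, the interference terms are orthogonal to $s_k$, and the additive noise ${\bf n}_k'$ is independent of $s_k$. With this orthogonality established, I would invoke the worst-case uncorrelated-noise bound: for a scalar channel ${\hat s}_k = g_k s_k + \tilde n_k$ with deterministic gain $g_k = {\bf a}_k^{\rm H}{\mathbb E}\{{\bf f}_{kk}\}$ and $\tilde n_k$ uncorrelated with $s_k$, the mutual information is at least $\log_2(1 + p_k|g_k|^2/{\mathbb E}\{|\tilde n_k|^2\})$, attained by replacing $\tilde n_k$ with independent Gaussian noise of matched variance.

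It then remains to compute ${\mathbb E}\{|\tilde n_k|^2\}$ term by term. Expanding the variance and using mutual independence of the symbols to kill all cross terms yields three contributions: the beamforming-gain uncertainty $p_k\left({\mathbb E}\{|{\bf a}_k^{\rm H}{\bf f}_{kk}|^2\} - |{\bf a}_k^{\rm H}{\mathbb E}\{{\bf f}_{kk}\}|^2\right)$, the interference $\sum_{i\ne k} p_i {\mathbb E}\{|{\bf a}_k^{\rm H}{\bf f}_{ki}|^2\}$, and the noise power. For the noise, writing ${\bf n}_k' = \sum_l a_{kl}^*{\bf v}_{kl}^{\rm H}{\bf n}_l$ and using that the noises $\{{\bf n}_l\}$ are independent across APs and independent of the combining vectors, the cross-AP terms vanish, giving ${\mathbb E}\{|{\bf n}_k'|^2\} = \sigma_{\rm ul}^2 \sum_l |a_{kl}|^2{\mathbb E}\{\|{\bf v}_{kl}\|^2\} = \sigma_{\rm ul}^2{\bf a}_k^{\rm H}{\bf \Lambda}_k{\bf a}_k$, which is precisely where the diagonal matrix ${\bf \Lambda}_k$ enters. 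Merging the self-interference term $p_k{\mathbb E}\{|{\bf a}_k^{\rm H}{\bf f}_{kk}|^2\}$ into the interference sum to form the full $\sum_{i=1}^K$ and removing the coherent desired power $p_k|{\bf a}_k^{\rm H}{\mathbb E}\{{\bf f}_{kk}\}|^2$ reproduces the denominator of the stated SINR; multiplying the resulting rate by the pre-log factor $\tau_u/\tau_c$ accounts for the fraction of each coherence block spent on uplink data. The only real obstacle is the careful bookkeeping of these cross-correlations: one must confirm that every term in $\tilde n_k$ is genuinely orthogonal to $s_k$ and that the inter-symbol and inter-AP cross terms all cancel, since this orthogonality is exactly what licenses the Gaussian worst-case step and hence the capacity-bound interpretation.
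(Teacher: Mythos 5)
Your proposal is correct and is essentially the same argument the paper relies on: the paper's proof is a bare citation to \cite[Appe.~A]{[162]}, which carries out precisely the use-and-then-forget decomposition with the worst-case uncorrelated (Gaussian) noise bound that you wrote out, including the per-AP noise bookkeeping that yields $\sigma_{\rm ul}^2{\bf a}_k^{\rm H}{\bf \Lambda}_k{\bf a}_k$. One remark: your derivation correctly produces the coherent desired power $p_k\left|{\bf a}_k^{\rm{H}}{\mathbb{E}}\{{\bf f}_{kk}\}\right|^2$ \emph{subtracted} in the denominator (consistent with the standard UatF bound and with the corollary that follows the lemma), so the $+$ sign in front of that term in the lemma's printed SINR expression is a typo that your proof exposes.
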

\begin{proof}
The proof is given in \cite[Appe. A]{[162]}.
\end{proof}

The achievable SE above holds for any combining scheme.
Unlike the achievable SE in Lemma \ref{lemm:SE ul 1}, it holds for any channel estimator (not only for the MMSE estimator).
The drawback with this bound is that it is only tight when ${\bf{a}}_k^{\rm{H}}{{\bf{f}}_{kk}}$ is close to its
mean value ${\bf{a}}_k^{\rm{H}}{\mathbb E\{{\bf{f}}_{kk}}\}$, but this seems to be the case in many mMIMO setups \cite{cellfreebook}.

The structure of \eqref{eq:SINR ul 2} allows computing the deterministic
weighting vector ${\bf a}_k$ that maximizes ${\sf{SINR}}_k^{( {\rm ul},2 )}$ as follows.
\begin{coro}
The effective SINR in \eqref{eq:SINR ul 2} for UE $k$ is maximized by
\begin{equation}\label{eq:weight level3}
  {{\bf{a}}_k} = {\left( {\sum\limits_{i = 1}^K {{p_i}{\mathbb{E}}\left\{ {{{\bf{f}}_{ki}}{\bf{f}}_{ki}^{\rm{H}}} \right\}}  + {\sigma _{{\rm{ul}}}^2}{{\bf{\Lambda }}_k}} \right)^{ - 1}}{\mathbb{E}}\left\{ {{{\bf{f}}_{kk}}} \right\}
\end{equation}
which leads to the maximum value
\begin{align}\label{eq:}\notag
{\sf{SINR}}_k^{( {\rm ul},2 )} &= {p_k}{\mathbb{E}}\left\{ {{{\bf{f}}_{kk}}} \right\}\left( \sum\limits_{i = 1}^K {{p_i}{\mathbb{E}}\left\{ {{{\bf{f}}_{ki}}{\bf{f}}_{ki}^{\rm{H}}} \right\}}  + {\sigma _{{\rm{ul}}}^2}{{\bf{\Lambda }}_k}\right. \\
&\left. - {p_k}{\mathbb{E}}\left\{ {{{\bf{f}}_{kk}}} \right\}{\mathbb{E}}\left\{ {{\bf{f}}_{kk}^{\rm{H}}} \right\} \right)^{ - 1}{\mathbb{E}}\left\{ {{{\bf{f}}_{kk}}} \right\}
\end{align}
\end{coro}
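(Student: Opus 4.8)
The plan is to treat the maximization of ${\sf SINR}_k^{(\rm ul,2)}$ over the deterministic vector ${\bf a}_k$ as a generalized Rayleigh-quotient problem. First I would turn every expectation in \eqref{eq:SINR ul 2} into a Hermitian quadratic form in ${\bf a}_k$ by writing ${\mathbb E}\{|{\bf a}_k^{\rm H}{\bf f}_{ki}|^2\} = {\bf a}_k^{\rm H}{\mathbb E}\{{\bf f}_{ki}{\bf f}_{ki}^{\rm H}\}{\bf a}_k$ and $|{\bf a}_k^{\rm H}{\mathbb E}\{{\bf f}_{kk}\}|^2 = {\bf a}_k^{\rm H}{\bf b}_k{\bf b}_k^{\rm H}{\bf a}_k$, where I abbreviate ${\bf b}_k = {\mathbb E}\{{\bf f}_{kk}\}$ and collect the interference-plus-noise matrix ${\bf B}_k = \sum_{i=1}^K p_i{\mathbb E}\{{\bf f}_{ki}{\bf f}_{ki}^{\rm H}\} + \sigma_{\rm ul}^2{\bf \Lambda}_k$. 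After grouping the coherent signal term out of the denominator, the ratio becomes
\[
{\sf SINR}_k^{(\rm ul,2)} = \frac{p_k\,{\bf a}_k^{\rm H}{\bf b}_k{\bf b}_k^{\rm H}{\bf a}_k}{{\bf a}_k^{\rm H}{\bf B}_k{\bf a}_k - p_k\,{\bf a}_k^{\rm H}{\bf b}_k{\bf b}_k^{\rm H}{\bf a}_k},
\]
which is invariant to any nonzero rescaling of ${\bf a}_k$.

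Next I would reduce this to a standard eigenvalue problem. Dividing numerator and denominator by ${\bf a}_k^{\rm H}{\bf B}_k{\bf a}_k$ gives ${\sf SINR}_k^{(\rm ul,2)} = p_k r / (1 - p_k r)$ with $r = {\bf a}_k^{\rm H}{\bf b}_k{\bf b}_k^{\rm H}{\bf a}_k / ({\bf a}_k^{\rm H}{\bf B}_k{\bf a}_k)$; since $t \mapsto p_k t/(1 - p_k t)$ is strictly increasing on $[0, 1/p_k)$, maximizing the SINR is equivalent to maximizing the generalized Rayleigh quotient $r$. Because ${\bf B}_k \succ 0$, I would set ${\bf g}_k = {\bf B}_k^{1/2}{\bf a}_k$ and apply Cauchy--Schwarz to $|{\bf g}_k^{\rm H}{\bf B}_k^{-1/2}{\bf b}_k|^2$, obtaining $r \le {\bf b}_k^{\rm H}{\bf B}_k^{-1}{\bf b}_k$ with equality precisely when ${\bf a}_k \propto {\bf B}_k^{-1}{\bf b}_k$. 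This is exactly the optimal weight \eqref{eq:weight level3}, and by scale-invariance any positive multiple of it --- in particular the generalized eigenvector $({\bf B}_k - p_k{\bf b}_k{\bf b}_k^{\rm H})^{-1}{\bf b}_k$ --- is equally optimal.

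Finally I would convert the optimal value $r_{\max} = {\bf b}_k^{\rm H}{\bf B}_k^{-1}{\bf b}_k$ into the form stated in the corollary. Substituting $r_{\max}$ into $p_k r/(1 - p_k r)$ and applying the Sherman--Morrison identity $({\bf B}_k - p_k{\bf b}_k{\bf b}_k^{\rm H})^{-1} = {\bf B}_k^{-1} + p_k{\bf B}_k^{-1}{\bf b}_k{\bf b}_k^{\rm H}{\bf B}_k^{-1}/(1 - p_k{\bf b}_k^{\rm H}{\bf B}_k^{-1}{\bf b}_k)$ yields $p_k{\bf b}_k^{\rm H}({\bf B}_k - p_k{\bf b}_k{\bf b}_k^{\rm H})^{-1}{\bf b}_k = p_k r_{\max}/(1 - p_k r_{\max})$, which is the claimed maximum (reading the left factor as ${\mathbb E}\{{\bf f}_{kk}^{\rm H}\}$). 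I expect the main obstacle to be organizational rather than deep: the optimal weight is expressed through ${\bf B}_k$, whereas the optimal value is expressed through the deflated matrix ${\bf B}_k - p_k{\bf b}_k{\bf b}_k^{\rm H}$, and it is the Sherman--Morrison step together with the Rayleigh-quotient scale-invariance that reconcile the two. Along the way I would record the positivity ${\bf B}_k - p_k{\bf b}_k{\bf b}_k^{\rm H} \succ 0$, which follows from ${\bf B}_k \succeq p_k{\mathbb E}\{{\bf f}_{kk}{\bf f}_{kk}^{\rm H}\} \succeq p_k{\bf b}_k{\bf b}_k^{\rm H}$ (covariance positivity) with the noise term $\sigma_{\rm ul}^2{\bf \Lambda}_k$ making it strict; this guarantees $1 - p_k r_{\max} > 0$, so that the SINR and the inverse are well defined.
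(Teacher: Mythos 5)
Your proposal is correct and follows essentially the same route as the paper: the paper's one-line proof simply cites \cite[Lemm.~B.10]{bjornson2017massive}, and that lemma is precisely your argument --- the generalized Rayleigh-quotient maximization (scale invariance, whitening by ${\bf B}_k^{1/2}$, Cauchy--Schwarz giving ${\bf a}_k \propto {\bf B}_k^{-1}{\bf b}_k$) combined with the Sherman--Morrison step that rewrites $p_k{\bf b}_k^{\rm H}{\bf B}_k^{-1}{\bf b}_k/\left(1-p_k{\bf b}_k^{\rm H}{\bf B}_k^{-1}{\bf b}_k\right)$ in the deflated form stated in the corollary. You also correctly treated the term $+\,p_k\left|{\bf a}_k^{\rm H}{\mathbb E}\{{\bf f}_{kk}\}\right|^2$ in the displayed denominator of \eqref{eq:SINR ul 2} as a sign typo for a minus, which is the only reading consistent with the claimed optimal weight and maximum value.
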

\begin{proof}
It follows from \cite[Lemm. B.10]{bjornson2017massive}. 
\end{proof}

The aforementioned approach is the so-called \emph{large-scale fading decoding (LSFD)} \cite{[5]}.
Although this approach offers the highest SE among schemes with local combining at each AP, it requires sharing of statistical information sharing among the APs to design the optimized LSFD weights.
{Alternatively, the weight $a_{kl}$ can be locally designed at each AP, such as
$
a_{kl}= \beta_{kl}^{\nu}
$
with different exponents $\nu$}, where ${\beta _{kl}} \triangleq {\rm{tr}}\left( {{{\bf{R}}_{kl}}} \right)/N$ is the large-scale fading coefficient that describes pathloss and shadowing and ${\bf{R}}_{kl}$ is the spatial correlation matrix that describes the spatial property of the channel.
When $\nu = 0$, the CPU creates its estimate of the signal $s_k$ from UE $k$ by
simply taking the average of the local estimates, as proposed in the early papers on CF mMIMO \cite{[19],[18]}.

Note that the value of the uplink SE of UE $k$ in CF mMIMO systems depends on the UE's combining vector ${\bf v}_k$.
The choices of the combining vector in the CF mMIMO literature can be found in Section \ref{subsec:combining precoding}.


\subsubsection{Downlink Data Transmission}

Let ${\bf w}_{il} \in {\mathbb C}^N$ denote the precoder that AP $l$ assigns to UE $i$.
During downlink data transmission, the received signal at UE $k$ is
\begin{equation}\label{eq:downlink received signal}
  y_k^{{\rm{dl}}} = \sum\limits_{l = 1}^L {{\bf{h}}_{kl}^{\rm{H}}\sum\limits_{i = 1}^K {{{\bf{w}}_{il}}{\varsigma _i}} }  + {n_k} = {\bf{h}}_k^{\rm{H}}\sum\limits_{i = 1}^K {{{\bf{w}}_i}{\varsigma _i}}  + {n_k},
\end{equation}
where ${\varsigma _i} \in {\mathbb C}$ is the independent unit-power data signal intended for UE $i$ (i.e., ${\mathbb{E}}\{ {{{\left\| {{\varsigma _i}} \right\|}^2}} \} = 1$), ${\bf w}_k = [ {\bf w}_{k1}^{\rm T},\ldots,{\bf w}_{kL}^{\rm T} ]^{\rm T} \in {\mathbb C}^{LN}$ is the collective precoding vector, and ${{n}_k} \sim {\cal N}_{\mathbb C}\left( { 0},\sigma_{\rm dl}^2 \right)$ is the receiver noise.
Normally, the collective precoding vector ${\bf w}_{i}$ is presented as
\begin{equation}\label{eq:}
  {\bf w}_{i} = {\sqrt {\rho_i}} {\bar {\bf w}}_i,
\end{equation}
where ${\bar {\bf w}}_i$ determines the spatial directivity of the transmission and satisfies ${\mathbb E}\{\| {\bar {\bf w}}_i \|^2\} = 1$ such that $\rho_i \ge 0$ is the transmit power allocated to UE $i$.

For a specific choice of precoding, the \emph{hardening bound} is used to compute the downlink SE.

\begin{lemm}\label{lemm:SE dl}
An achievable downlink SE of UE $k$ is
\begin{equation}\label{eq:SE dl}
  {\sf{SE}}_k^{( {\rm dl} )} = \frac{{{\tau _d}}}{{{\tau _c}}}{\log _2}\left( {1 + {\sf{SINR}}_k^{( {\rm dl} )}} \right)
\end{equation}
with the effective SINR given by
\begin{equation}\label{eq:SINR dl}
  {\sf{SINR}}_k^{( {\rm dl} )} = \frac{{{\rho_k}{{\left| {{\mathbb{E}}\left\{ {\bf h}_k^{\rm H} {\bar {\bf w}}_k \right\}} \right|}^2}}}{{\sum\limits_{i = 1}^K {{\rho_i}{\mathbb{E}}\{ {{{\left| {\bf h}_k^{\rm H} {\bar {\bf w}}_i \right|}^2}} \}}  - {\rho_k}{{\left| {\bf h}_k^{\rm H} {\bar {\bf w}}_k \right|}^2} + {\sigma _{{\rm{dl}}}^2}}}
\end{equation}
and the expectation is with respect to the channel realizations.
\end{lemm}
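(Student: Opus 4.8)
The plan is to establish \eqref{eq:SE dl} by the same \emph{hardening bound} (equivalently, a downlink \emph{use-and-then-forget}) argument that underlies Lemma~\ref{lemm:SE ul 2}: treat the mean effective channel as a known deterministic gain and lump everything else into an uncorrelated additive-noise term. First I would isolate the desired-signal contribution in \eqref{eq:downlink received signal} and center it about its mean, rewriting the received signal as
\begin{equation}\notag
y_k^{\rm dl} = \sqrt{\rho_k}\,{\mathbb E}\{{\bf h}_k^{\rm H}{\bar{\bf w}}_k\}\,\varsigma_k + \underbrace{\sqrt{\rho_k}\big({\bf h}_k^{\rm H}{\bar{\bf w}}_k - {\mathbb E}\{{\bf h}_k^{\rm H}{\bar{\bf w}}_k\}\big)\varsigma_k + \sum_{i\ne k}\sqrt{\rho_i}\,{\bf h}_k^{\rm H}{\bar{\bf w}}_i\varsigma_i + n_k}_{\text{effective noise}}.
\end{equation}
The coefficient ${\mathbb E}\{{\bf h}_k^{\rm H}{\bar{\bf w}}_k\}$ multiplying $\varsigma_k$ is deterministic and can be assumed known at UE~$k$, since it is a channel statistic that is constant across coherence blocks; the effective link is therefore a deterministic single-antenna channel corrupted by additive, signal-independent noise.

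Next I would verify that the bracketed effective noise is uncorrelated with the desired symbol $\varsigma_k$, which is what licenses the capacity bound. Using the independence of the unit-power data symbols and their independence from the channels, the centered self-term contributes ${\mathbb E}\{({\bf h}_k^{\rm H}{\bar{\bf w}}_k - {\mathbb E}\{{\bf h}_k^{\rm H}{\bar{\bf w}}_k\})\}\,{\mathbb E}\{|\varsigma_k|^2\}=0$ because the centered gain has zero mean; each cross term with $\varsigma_i$, $i\ne k$, vanishes since ${\mathbb E}\{\varsigma_i\varsigma_k^*\}=0$; and $n_k$ is independent of $\varsigma_k$. With uncorrelatedness in hand, I would invoke the worst-case uncorrelated-additive-noise bound (Gaussian noise minimizes mutual information for a fixed variance), exactly as in \cite[Theo.~4.6]{bjornson2017massive}, to obtain the achievable rate $\log_2\big(1+{\sf SINR}_k^{(\rm dl)}\big)$ with ${\sf SINR}_k^{(\rm dl)}$ equal to the ratio of desired-signal power to effective-noise variance.

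The remaining step is variance bookkeeping. The desired-signal power is $\rho_k|{\mathbb E}\{{\bf h}_k^{\rm H}{\bar{\bf w}}_k\}|^2$, matching the numerator of \eqref{eq:SINR dl}. For the denominator, the centered self-term has variance ${\rho_k}\big({\mathbb E}\{|{\bf h}_k^{\rm H}{\bar{\bf w}}_k|^2\}-|{\mathbb E}\{{\bf h}_k^{\rm H}{\bar{\bf w}}_k\}|^2\big)$, the interference terms contribute $\sum_{i\ne k}\rho_i{\mathbb E}\{|{\bf h}_k^{\rm H}{\bar{\bf w}}_i|^2\}$, and the thermal noise contributes $\sigma_{\rm dl}^2$; summing and merging the $i=k$ fluctuation into the full sum yields $\sum_{i=1}^{K}\rho_i{\mathbb E}\{|{\bf h}_k^{\rm H}{\bar{\bf w}}_i|^2\}-\rho_k|{\mathbb E}\{{\bf h}_k^{\rm H}{\bar{\bf w}}_k\}|^2+\sigma_{\rm dl}^2$, so the second denominator term in \eqref{eq:SINR dl} is understood to carry the expectation $|{\mathbb E}\{\cdot\}|^2$. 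Finally I would prepend the prelog factor $\tau_d/\tau_c$ for the fraction of each coherence block carrying downlink payload. I expect the only genuinely delicate point to be the justification that $\varsigma_k$ may be detected against the \emph{deterministic} mean gain ${\mathbb E}\{{\bf h}_k^{\rm H}{\bar{\bf w}}_k\}$ rather than the realized gain — i.e.\ that the receiver neither knows nor needs the instantaneous effective channel — since this is precisely the assumption that makes the bound loose when channel hardening is weak, as already noted for the uplink counterpart.
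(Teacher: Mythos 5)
Your proposal is correct and is essentially the same argument the paper relies on: the paper's proof is only a citation to \cite[Appe.~C.3.6]{bjornson2017massive}, and that appendix establishes the hardening bound exactly as you do — center ${\bf h}_k^{\rm H}{\bar{\bf w}}_k$ at its mean, verify that the self-fluctuation, inter-user, and thermal-noise terms are uncorrelated with $\varsigma_k$, and invoke the worst-case uncorrelated additive Gaussian noise capacity bound with prelog $\tau_d/\tau_c$. You are also right that the subtracted term in \eqref{eq:SINR dl} must carry the expectation, i.e.\ read $\rho_k\left|{\mathbb E}\{{\bf h}_k^{\rm H}{\bar{\bf w}}_k\}\right|^2$; as printed it is a random quantity, which is a typo in the paper rather than a flaw in your derivation.
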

\begin{proof}
The proof is given in \cite[Appe. C.3.6]{bjornson2017massive}.
\end{proof}

In contrast to the uplink SEs of UE $k$ that only depend on the UE's combining vector, the downlink SE depends on the precoding vectors of all UEs, i.e., $\{{ {\bf w}}_i:i=1,\ldots,K\}$.
Consequently, the precoding vectors should be optimized jointly for all UEs instead of on a per-UE basis.
Alternatively, one can utilize the following uplink-downlink duality result to obtain a good heuristic solution.

\begin{lemm}\label{lemm:uplink downlink deality}
Let $\{{ {\bf v}}_i:i=1,\ldots,K\}$ and $\{ p_i:i=1,\ldots,K \}$ denote by the set of combining vectors and transmit powers used in the uplink.
If the normalized precoding vectors are selected as
\begin{equation}\label{eq:uplink downlink deality}
  {\bar {\bf w}}_i = \frac{{\bf v}_i}{\sqrt {{\mathbb E}\{\|{\bf v}_i\|^2\}}},
\end{equation}
then there exists a downlink power control policy $\rho_i : \forall i$ with $\sum\nolimits_{i=1}^K {\rho_i / \sigma_{\rm dl}^2} = \sum\nolimits_{i=1}^K {p_i / \sigma_{\rm ul}^2}$ for which
\begin{equation}\label{eq:}
  {\sf{SINR}}_k^{( {\rm dl} )} = {\sf{SINR}}_k^{( {\rm ul,2})}, \forall k.
\end{equation}
where ${\sf{SINR}}_k^{( {\rm dl} )}$ is the effective SINR of UE $k$ in the downlink and ${\sf{SINR}}_k^{( {\rm ul})}$ is the effective SINR of UE $k$ in the uplink with distributed cooperation manner.
\end{lemm}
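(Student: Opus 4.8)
The plan is to recast the achievable uplink SINR of Lemma~\ref{lemm:SE ul 2} and the downlink SINR of Lemma~\ref{lemm:SE dl} into a common linear-algebraic form and then exploit a transpose symmetry between the two interference operators, in the spirit of the classical uplink--downlink SINR duality. First I would absorb the LSFD weights into an effective collective combiner: writing $\mathbf{v}_k\in\mathbb{C}^{LN}$ as the block vector whose $l$-th block is $a_{kl}\mathbf{v}_{kl}$, one has $\mathbf{a}_k^{\mathrm{H}}\mathbf{f}_{ki}=\mathbf{v}_k^{\mathrm{H}}\mathbf{h}_i$ and $\mathbf{a}_k^{\mathrm{H}}\boldsymbol{\Lambda}_k\mathbf{a}_k=\mathbb{E}\{\|\mathbf{v}_k\|^2\}$, so that $\mathsf{SINR}_k^{(\mathrm{ul},2)}$ collapses to $p_k S_k/(\sum_i p_i G_{ki}-p_k S_k+\sigma_{\mathrm{ul}}^2 b_k^2)$, where $S_k=|\mathbb{E}\{\mathbf{v}_k^{\mathrm{H}}\mathbf{h}_k\}|^2$, $G_{ki}=\mathbb{E}\{|\mathbf{v}_k^{\mathrm{H}}\mathbf{h}_i|^2\}$, and $b_k^2=\mathbb{E}\{\|\mathbf{v}_k\|^2\}$. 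Substituting the dual precoder $\bar{\mathbf{w}}_i=\mathbf{v}_i/b_i$ into $\mathsf{SINR}_k^{(\mathrm{dl})}$ and using $\mathbb{E}\{|\mathbf{h}_k^{\mathrm{H}}\bar{\mathbf{w}}_i|^2\}=G_{ik}/b_i^2$ together with $|\mathbb{E}\{\mathbf{h}_k^{\mathrm{H}}\bar{\mathbf{w}}_k\}|^2=S_k/b_k^2$ yields, with $q_k\triangleq\rho_k/b_k^2$, the expression $q_k S_k/(\sum_i q_i G_{ik}-q_k S_k+\sigma_{\mathrm{dl}}^2)$. The decisive observation is that the downlink gain from interferer $i$ to UE $k$ is $G_{ik}$, i.e. the uplink gain $G_{ki}$ with its indices swapped.

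Next I would impose the matched targets $\gamma_k\triangleq\mathsf{SINR}_k^{(\mathrm{ul},2)}$ and write both SINR equalities as linear systems. Defining the nonnegative matrix $\boldsymbol{\Phi}$ with $\Phi_{ki}=G_{ki}$ for $i\neq k$ and $\Phi_{kk}=G_{kk}-S_k\ge 0$ (nonnegative by Jensen's inequality), and $\mathbf{F}=\mathrm{diag}(\gamma_1/S_1,\ldots,\gamma_K/S_K)$, the uplink requirement reads $(\mathbf{I}-\mathbf{F}\boldsymbol{\Phi})\mathbf{p}=\sigma_{\mathrm{ul}}^2\mathbf{F}\mathbf{b}$ with $\mathbf{b}=[b_1^2,\ldots,b_K^2]^{\mathrm{T}}$, while the matched downlink requirement reads $(\mathbf{I}-\mathbf{F}\boldsymbol{\Phi}^{\mathrm{T}})\mathbf{q}=\sigma_{\mathrm{dl}}^2\mathbf{F}\mathbf{1}$. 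Since the given uplink powers satisfy the first system with $\mathbf{p}>\mathbf{0}$ and strictly positive right-hand side, Perron--Frobenius theory forces the spectral radius $\rho(\mathbf{F}\boldsymbol{\Phi})<1$; because a matrix and its transpose share the same spectral radius, $\rho(\mathbf{F}\boldsymbol{\Phi}^{\mathrm{T}})<1$ as well, so the downlink system admits a unique positive solution $\mathbf{q}=\sigma_{\mathrm{dl}}^2(\mathbf{I}-\mathbf{F}\boldsymbol{\Phi}^{\mathrm{T}})^{-1}\mathbf{F}\mathbf{1}$ and hence $\rho_k=b_k^2 q_k>0$ delivers $\mathsf{SINR}_k^{(\mathrm{dl})}=\gamma_k$ for every $k$.

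It then remains to verify the power budget. I would multiply each uplink relation $p_k S_k/\gamma_k=\sum_i \Phi_{ki}p_i+\sigma_{\mathrm{ul}}^2 b_k^2$ by $q_k$ and each downlink relation $q_k S_k/\gamma_k=\sum_i \Phi_{ik}q_i+\sigma_{\mathrm{dl}}^2$ by $p_k$, then sum both over $k$. The two left-hand sides are identical, and the interference cross-terms cancel because $\sum_{k,i}\Phi_{ki}p_i q_k=\sum_{k,i}\Phi_{ik}q_i p_k$ after relabeling $k\leftrightarrow i$ (this is exactly the transpose symmetry). What survives is $\sigma_{\mathrm{ul}}^2\sum_k q_k b_k^2=\sigma_{\mathrm{dl}}^2\sum_k p_k$, and since $\rho_k=b_k^2 q_k$ this reads $\sum_k \rho_k/\sigma_{\mathrm{dl}}^2=\sum_k p_k/\sigma_{\mathrm{ul}}^2$, so the constructed policy automatically meets the stated total-power constraint without any rescaling.

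The main obstacle I anticipate is the first step: reconciling the distributed UatF uplink expression (which carries the LSFD weights $\mathbf{a}_k$ and the block-diagonal noise term $\boldsymbol{\Lambda}_k$) with the hardening downlink expression (whose precoder is a pure normalization) so that both collapse to the same gains $S_k,G_{ki}$ and expose the transpose structure $\boldsymbol{\Phi}\leftrightarrow\boldsymbol{\Phi}^{\mathrm{T}}$. Once the two bounds are shown to share this common coefficient matrix up to transposition, the Perron--Frobenius existence argument and the cross-term cancellation are routine.
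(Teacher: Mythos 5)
Your proof is correct and is essentially the same argument the paper relies on: the paper's stated proof is just a pointer to the appendix of \cite{[163]}, which establishes the duality in exactly this classical way --- absorbing the LSFD weights into a collective combiner, substituting the dual precoders so the downlink interference matrix becomes the transpose of the uplink one, invoking nonnegative-matrix (Perron--Frobenius) theory for existence of a positive downlink power solution, and cancelling the bilinear cross-terms to obtain the sum-power equality. One cosmetic fix: $\rho(\mathbf{F}\boldsymbol{\Phi}^{\mathrm{T}})=\rho(\mathbf{F}\boldsymbol{\Phi})$ holds because $\mathbf{F}\boldsymbol{\Phi}^{\mathrm{T}}$ is similar to $\boldsymbol{\Phi}^{\mathrm{T}}\mathbf{F}=(\mathbf{F}\boldsymbol{\Phi})^{\mathrm{T}}$ (using that $\mathbf{F}$ is diagonal with positive entries), not merely because transposition preserves the spectral radius, since $\mathbf{F}\boldsymbol{\Phi}^{\mathrm{T}}$ is not itself the transpose of $\mathbf{F}\boldsymbol{\Phi}$.
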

\begin{proof}
The proof is given in \cite[Appe.]{[163]}.
\end{proof}

The above lemma implies that the downlink precoders in CF mMIMO networks can be selected based on the uplink combiners as in \eqref{eq:uplink downlink deality}.
Consequently, an achievable downlink SE for UE $k$ can be achieved
by properly selecting the power control coefficients $\{\rho_i : \forall i\}$ and normalized precoding vectors $\{{\bar {\bf w}}_i:\forall i\}$.
Similar to that in the uplink, we consider two levels of cooperation among APs for precoding design as shown in Fig.~\ref{fig:procedure}.
At both levels, we assume that the APs delegate the task of downlink data encoding to the CPU.

{\bf Centralized Precoding} In the first level, the CPU uses the uplink channel estimates to compute the normalized precoding vectors $\{{\bar {\bf w}}_{il}\}$ by exploiting channel reciprocity.
Motivated by the uplink-downlink duality, we select the downlink precoding vectors according to \eqref{eq:uplink downlink deality}.
Once the precoding vectors are computed, they are used by the CPU to form the downlink signal of any given AP $l$, as
\begin{equation}\label{eq:}
  {\bf x}^{\rm{dl}}_l = \sum\limits_{l = 1}^K {{\sqrt {\rho_i}} {\bar {\bf w}}_{il}} {\varsigma _{{i}} },
\end{equation}
which is sent to the AP via the fronthaul link for transmission.

{\bf Distributed Precoding} In the second level, AP $l$ can locally select the precoding vector ${\bar {\bf w}}_{il}$ on the basis of its local channel estimates $\{{{\bf{\hat h}}}_{il} \}$ instead of delegating the task to the CPU.
In this case, only the downlink data signals $\{ \varsigma _{i} \}$ are sent from the CPU to AP $l$ in each coherence block.

The choices of the precoding vector in the CF mMIMO literature can be found in Section \ref{subsec:combining precoding}.

\subsection{Channel Model}\label{subsec:channel model}

Line-of-sight (LoS) channels are widely considered to characterize propagation channels, which generally contain many propagation paths; one is the direct path, and the others are paths where the signals are scattered on different objects.
The direct path is typically referred to as the LoS component, and the scattered paths are referred to as the Non-LoS (NLoS) component.
The interaction between these paths leads to fading phenomena, which is often modeled statistically using \emph{Rician} fading (sometimes written as Ricean fading).
The main assumption is that the complex-valued channel coefficient between UE $k$ and AP $l$ in the complex baseband can be divided into two parts \cite{[120]}:
\begin{equation}\label{eq:uncorrelated rician with phase shift}
   {h_{kl}} = {{\bar h}_{kl}}{e^{j{\varphi _{kl}}}} + {g_{kl}},
\end{equation}
where ${{\bar h}_{kl}} \ge 0$ is the magnitude of the LoS component between UE $k$ and AP $l$ and ${\varphi _{kl}} \in [0, 2\pi)$ is the corresponding phase-shift. The second part, ${g_{kl}}$, represents the NLoS component comprising all the scattered paths, of which each is of roughly the same strength but substantially weaker than the LoS component (this is why it needs to be modeled separately).
Motivated by the central limit theorem, ${g_{kl}}$ is modeled by a Gaussian distribution, which implies ${g_{kl}} \sim {{\cal N}_{\mathbb{C}}}( {0,{\beta _{kl}}} )$, where ${\beta _{kl}}\geq 0$ is the variance.
This is called Rayleigh fading since $\left|g_{kl}\right|$ is Rayleigh distributed, i.e., $\left|g_{kl}\right| \sim {\rm{Rayleigh}}( {\sqrt{{\beta _{kl}}/2}} )$.
Under these assumptions, the magnitude $\left| {{h_{kl}}} \right|$ of the channel coefficient is Rice distributed, i.e., $\left| {{h_{kl}}} \right| \sim {\rm{Rice}}( {{{\bar h}_{kl}},\sqrt{{\beta _{kl}}/2}} )$, which is why it is called Rician fading.

When the channel is assumed to be perfectly known at the receiver, the phase-shift ${\varphi _{kl}}$ will not affect the communication performance since the receiver can compensate for it.
Hence, it is common to omit ${\varphi _{kl}}$ in the performance analysis of Rician fading channels.
Consequently, $h_{kl} $ in \eqref{eq:uncorrelated rician with phase shift} can be drawn as ${h_{kl}} \sim {{\cal N}_{\mathbb{C}}}\left( {{{\bar h}_{kl}},{\beta _{kl}}} \right)$.
However, we cannot neglect the phase ${\varphi _{kl}}$ when analyzing practical systems where the receiver needs to estimate the channel, since the value of ${\varphi _{kl}}$ varies at the same pace as ${g_{kl}}$ and ${\varphi _{kl}}$ affects the strong LoS component where the impact of ${\varphi _{kl}}$ cannot be ignored.
However, the results obtained with a perfectly-known ${\varphi _{kl}}$ can be interpreted as an upper bound on what is practically achievable.

Note that \eqref{eq:uncorrelated rician with phase shift} represents the channel when single-antenna UEs and single-antenna APs are considered.
In the case of single-antenna UEs and $N$-correlated-antenna APs, the channel between UE $k$ and AP $l$ is no longer a scalar but a $N$-dimensional vector as \cite{wang2020uplink}
\begin{equation}\label{eq:Rician}
  {{\bf{h}}_{kl}} = {e^{j{\varphi _{kl}}}} {{\bf{\bar h}}_{kl}} + {{\bf{ g}}_{kl}}
\end{equation}
where ${{{\bf{\bar h}}}_{kl}}$ and ${{\bf{g}}_{kl}} \sim {{\cal N}_{\mathbb{C}}}\left( {{\bf 0},{{\bf{R}}_{kl}}} \right)$ represent the LoS and NLoS component, respectively, and $\varphi _{kl}$ is the common phase shift.
If the phase-shift ${\varphi}_{kl}$ is neglected, the channel ${{\bf{h}}_{kl}}$ can be viewed as a realization of the circularly symmetric complex Gaussian distribution  \cite{[187],[222]}
\begin{equation}\label{eq:}
  {{\bf{h}}_{kl}} \sim {{\cal N}_{\mathbb{C}}}\left( {{{{\bf{\bar h}}}_{kl}},{{\bf{R}}_{kl}}} \right).
\end{equation}

Rayleigh fading is a tractable model for rich scattering scenarios without an LoS path, where the AP antenna array is surrounded by many scattering objects, as compared to the number of antennas per AP.
Rayleigh fading channel is widely used to describe the basic properties of wireless propagation.
Thus, the channel response $h_{kl}$ is distributed as ${{h}}_{kl}  \sim {\cal{N}}_{\mathbb{C}}\left( {{{0}},{\beta}_{kl}}\right)$ and the multi-antenna channel ${\bf h}_{kl}$  is distributed as
\begin{equation} \label{eq:Rayleigh}
{\bf{h}}_{kl}  \sim {\cal{N}}_{\mathbb{C}}\left( {{\bf{0}},{\bf{R}}_{kl}}\right).
\end{equation}

\subsection{Duplex Protocol}\label{subsec:duplex protocol}

According to whether the uplink and downlink are separated in time or frequency, a CF mMIMO system can operate in TDD or FDD mode.
In TDD mode, the signaling overhead scales with the number of the served UEs but is independent of the number of AP antennas due to the \emph{channel reciprocity} that appears when transmitting in both directions in the same band. This means that one can perform the downlink precoding based on the CSI obtained from the uplink pilots since the channel response is the same in both directions.
However, the channel reciprocity is not available in FDD since the uplink and downlink channels are in different bands, which introduces additional CSI acquisition and feedback overhead that is not only related to the number of served UEs but also the number of the AP antennas.
Most of the works on CF mMIMO systems assume TDD mode to avoid the exchange of CSI and precoding/combining vectors \cite{[18],[19],[89],[163]}.

However, in practice, this is not a design choice but rather determined by the spectrum license available for the system. Hence, it is also important to develop FDD-based CF mMIMO systems \cite{[150],[242]}.
To limit the CSI acquisition and feedback overhead in FDD-based systems, \cite{[150]} and \cite{[242]}  exploit the property of the so-called \emph{angle reciprocity}, which means the angles-of-departure (AoDs) are similar in both uplink and downlink. When the propagation channels are sufficiently sparse to utilize this property, the required overhead scales only with the number of the served UEs.
In \cite{[150]}, the authors exploited the discrete Fourier transform operation and log-likelihood function to estimate the multipath component for the angle-of-arrival (AoA) and large-scale fading coefficients.
Based on estimated AoA, linear precoding/combining schemes were proposed with only scales with the number of the served UEs.
In \cite{[242]}, the authors proposed a path gain information feedback scheme where the required overhead for the channel vector quantization scales linearly with the number of dominating paths instead of the number of the serving antennas.

\subsection{Scalability Issues}\label{subsec:scalable}

Scalability is an essential issue for network technology to be practically implemented, particularly when designing a technology where a large number of APs are supposed to cooperate.
According to  \cite{[163]}, a network is scalable if all the following tasks for the APs have finite complexity and resource requirement when the number of UEs tends to infinity:
\begin{enumerate}
  \item Signal processing for channel estimation;
  \item Signal processing for data reception and transmission;
  \item Fronthaul signaling for data and CSI sharing;
  \item Power control optimization,
\end{enumerate}

The naive form of CF mMIMO in which each AP is required to process and share the data signals related to all UEs fails to be scalable since the computational complexity and fronthaul load associated with the above-listed tasks grow linearly (or faster) with the number of the UEs.
The user-centric approach (also referred to as dynamic cooperating clustering \cite{bjornson2011optimality,Bjornson2013d}) takes the first step towards scalability by letting each AP only be responsible for a limited number of UEs in ${\cal D}_l \subset \left\{{ 1,\ldots,K }\right\}$, $l = 1,\ldots,L$, instead of all of them \cite{[11]}.
This makes a CF mMIMO system meet the first three conditions listed above if the cardinality $|{\cal D}_l|$ is constant as $K \to \infty $ for $l = 1,\ldots,L$.
The reason comes from the fact that AP $l$ only needs to compute the channel estimates and combining/precoding vectors for $|{\cal D}_l|$ UEs with a constant complexity as $K  \to \infty $.
Moreover, AP $l$ only needs to receive/send data related to these $|{\cal D}_l|$ UEs via the fronthaul network, which is a constant number as $K  \to \infty $.
How to select the UE sets ${\cal D}_l$ for $l = 1,\ldots,L$ in a scalable way while ensuring the service to all UEs is elaborated in Section \ref{subsec:access}.
Suboptimal power control policies, such as fractional power control \cite{[119],[166]}, are needed to limit the complexity of power control.

The signal processing procedure of the scalable CF mMIMO shares the similar methodology and mathematical expressions of the original alternative introduced earlier in this section by only letting the uplink combining vector ${\bf v}_{kl} = {\bf 0}$ and the downlink precoding vector ${\bf w}_{kl} = {\bf 0}$ for $k \notin {\cal D}_l$, $l = 1,\ldots,L$.
The design of scalable combining and precoding vectors can be found in Section \ref{subsec:combining precoding}.
Moreover, from the perspective of the UEs we also denote by ${\cal M}_k \subset \left\{{ 1,\ldots,L }\right\}$, $k = 1,\ldots,K$, the subset of APs serving UE $k$.

\subsection{Channel Hardening and Favorable Propagation}
%
%
%

Channel hardening and favorable propagation are the two basic virtues of cellular mMIMO \cite[Sec. 2.5]{bjornson2017massive}.
To be specific, \emph{channel hardening} makes the fading channel between an AP and a UE behave as almost deterministic after the precoding/combining has been applied, when the number of the serving antennas, i.e., $LN$ grows large.
Mathematically, this effect can be expressed as
\begin{equation}\label{eq:}
  \frac{\|{\bf h}_k\|^2}{{\mathbb E}\{\|{\bf h}_k\|^2\}}  \to  1 \quad {\text {as}} \quad LN \to \infty.
\end{equation}
This is the ultimate form of spatial diversity, which removes the impact of small-scale fading. In practice, a small amount of hardening is sufficient to alleviate the worst effects of fading and enable the resource allocation to be based on long-term statistics instead of the small-scale fading variations.

Beyond that, the directions of two UE channels are asymptotically orthogonal when the number of antennas approaches infinity, leading to the so-called \emph{favorable propagation}, which is expressed as
 \begin{equation}\label{eq:}
  \frac{{{\bf h}_k^{\rm H}}{{\bf h}_i}}{\sqrt{{\mathbb E}\{\|{\bf h}_k\|^2\}{\mathbb E}\{\|{\bf h}_i\|^2\}}}  \to  0 \quad {\text {when}} \quad LN \to \infty, \ k \ne i.
\end{equation}
With approximate favorable propagation, one can get away with relatively simple signal processing techniques since interference vanishes automatically.
However, some interference suppression is generally preferred.

\section{Resource Allocation and Signal Processing}\label{sec:processing}

Well-designed schemes and algorithms for resource allocation and signal processing are the keys to boosting the system performance of the CF mMIMO networks.
This section will provide a comprehensive survey of different categories of resource allocation and signal processing schemes, including the ones for channel estimation, combining and precoding, user access and association, and power control.

\subsection{Channel Estimation}\label{subsec:estimation}

The main benefits of serving a UE through multiple APs materialize when the APs have CSI so the received uplink signals can be coherently combined in the joint processing, and the downlink transmissions can be precoded to combine over the air coherently.
Since the channels are time-varying, assuming that complete and perfect CSI is available at the APs and CPUs is not realistic.
Consequently, developing accurate and resource-efficient channel estimation techniques is vital to achieve good performance and, particularly, improve performance over legacy technologies.
In this subsection, we provide an overview of channel estimation techniques for CF mMIMO systems.

\subsubsection{Normalized Mean-Squared Error}

Before we elaborate different estimators, we first introduce the \emph{mean-squared error (MSE)} which indicates the ``distance'' from the estimated channel and the actual channel, i.e., ${{\mathbb{E}}\{ {{\| {{{\bf{h}}} - {{{\bf{\hat h}}}}} \|} ^2}\} }$, where ${\bf{h}}$ and ${\bf{\hat h}}$ denote an arbitrary channel response and its estimate, respectively.
However, the value of the MSE depends on the average channel gain, and hence a strong channel might have larger errors in absolute terms than a weaker one.
To reasonably quantify the accuracy of an estimator, we consider the relative size of the error, i.e., the \emph{normalized MSE (NMSE)}.
The NMSE between AP $l$ and UE $k$ using an arbitrary estimator is represented as
\begin{equation}\label{eq:}
  {\sf{NMSE}}_{kl} = \frac{{{\mathbb{E}}\{ {{\| { {{\bf{\tilde h}}_{kl}}} \|} ^2}\} }}{{{\mathbb{E}}\{ {{\| {{{\bf{h}}_{kl}}} \|}^2}\} }} = \frac{{{\rm{tr}}({{\bf{C}}_{kl}})}}{{{\rm{tr}}({{\bf{R}}_{kl}})}}
\end{equation}
where ${\bf{\tilde h}}_{kl} = {\bf{h}}_{kl} - {\bf{\hat h}}_{kl}$ and ${\bf C}_{kl} = {\mathbb{E}}\{ {{\bf{\tilde h}}_{kl} {{\bf{\tilde h}}_{kl}^{\rm H}} } \}$ denote the estimation error of the considered estimator and its corresponding error correlation matrix, respectively.

When it comes to the collective channel of UE $k$, i.e., ${\bf h}_k$, the NMSE of its estimate ${\bf {\hat h}}_k$ can be computed as \cite{cellfreebook}
\begin{equation}\label{eq:NMSE_k}
  {\sf{NMSE}}_k  = \frac{{{\mathbb{E}}\{ {{\| {{\bf D}_k {{\bf{\tilde h}}_{k}}} \|} ^2}\} }}{{{\mathbb{E}}\{ {{\| {{\bf D}_k {{\bf{h}}_{k}}} \|}^2}\} }} = \frac{{\sum\nolimits_{l=1}^L {{\rm{tr}}({\bf D}_{kl}{{\bf{C}}_{kl}})} }}{{\sum\nolimits_{l=1}^L {{\rm{tr}}({\bf D}_{kl}{{\bf{R}}_{kl}})} }},
\end{equation}
where $ {\bf {\tilde  h}}_k = [ {\bf {\tilde h}}_{k1}^{\rm T},\ldots,{\bf {\tilde h}}_{kL}^{\rm T} ]^{\rm T} \in {\mathbb C}^{LN}$ and block-diagonal matrix ${\bf D}_{k}= {\rm{diag}}\left({{\bf D}_{k1} ,\ldots, {\bf D}_{kL}}\right) $ with ${\bf D}_{kl} = {\bf I}_N$ if $l \in {\cal M}_k$ and ${\bf D}_{kl} = {\bf 0}_N$ otherwise.
Note that \eqref{eq:NMSE_k} is not the sum or average of the individual NMSEs between UE $K$ and its serving AP in ${\cal M}_k$, but contains a summation of MSEs in the numerator normalized by a summation of channel gains. Hence, it is the APs with strong channels that dominate in the summation.

%
%

\subsubsection{Pilot-based Channel Estimation}

The most commonly used approach for CSI acquisition is by transmitting uplink pilot signals, where a predefined pilot signal is transmitted from the UE-side antenna, and all the antenna at the APs can simultaneously receive the transmission and compare it with the known pilot signal to estimate the channel response from the transmitting antenna.
Suppose we instead need to estimate the channel response from two transmitting antennas. In that case, two orthogonal pilot signals are generally required to separate the signals from the two antennas (unless there is other prior information that allows for separation).
The number of orthogonal pilot signals is proportional to the number of transmit antennas, while any number of the receive antennas can ``listen" to the pilots simultaneously and estimate their respective channels to the transmitters.
When the UEs transmit their pilots, the received signal can be represented as ${\bf{y}}_{{t_k}l}^{{\rm{pilot}}}$ which is given in (\ref{eq:pilot}), and based on that there exist several different channel estimators.

{\bf Least-squares Estimator} If the statistics are unknown or unreliable, it might be necessary to consider estimators that require no prior statistical information.
The least-square (LS) estimator has been used for this purpose.
{The LS estimator minimizes ${\| {{{\bf{y}}_{{t_k}l}^{{\rm{pilot}}}} - \sqrt {{p_k}} {\tau _p}{{{\bf{\hat h}}}_{kl}}} \|^2}$, which is achieved by \cite[Sec. 3.4]{bjornson2017massive}}
\begin{equation}\label{eq:ls estimate}
{\bf{\hat h}}_{kl}^{{\rm{LS}}} = \frac{1}{{\sqrt {{\tau _p}{p_k}} }}{\bf{y}}_{{t_k}l}^{{\rm{pilot}}} = {{\bf{h}}_{kl}} + \sum\limits_{i \in {{\cal S}_k}/\left\{ k \right\}} {\frac{{\sqrt {{p_i}} }}{{\sqrt {{p_k}} }}{{\bf{h}}_{il}}}  + \frac{1}{{\sqrt {{\tau _p}{p_k}} }}{{\bf{n}}_{{t_k}l}}.
\end{equation}
{If the channels are Rayleigh fading as defined in \eqref{eq:Rayleigh},
%
substituting \eqref{eq:ls estimate} into \eqref{eq:NMSE_k},
the NMSE of the LS estimator can be computed as
\begin{equation}\label{eq:}
  {\sf{NMSE}}_k^{{\rm{LS}}} = \frac{{\sum\nolimits_{l \in {{\cal M}_k}} {{\rm{tr}}(\sum\nolimits_{i \in {{\cal S}_k}/\{ k\} } {\frac{{{p_i}}}{{{p_k}}}{{\bf{R}}_{il}}}  + \frac{{\sigma _{{\rm{ul}}}^2}}{{{p_k}{\tau _p}}}{{\bf{I}}_N})} }}{{\sum\nolimits_{l \in {{\cal M}_k}} {{\rm{tr}}({{\bf{R}}_{kl}})} }}.
\end{equation}
Note that the estimate ${\bf{\hat h}}_{kl}^{{\rm{LS}}}$ and the estimation error ${\bf{\tilde h}}_{kl}^{{\rm{LS}}}$ are correlated.}

{\bf MMSE-type Estimators} The MMSE estimator has this name because it minimizes the MSE, and thereby also the NMSE.
It takes different forms depending on the channel statistics. {If the channels are Rayleigh fading, then the MMSE estimate
 ${\bf{\hat h}}_{kl}^{{\rm{MMSE}}}$ can be achieved by \cite[Sec. 3.2]{bjornson2017massive}}
\begin{equation}\label{eq:MMSE estimate}
{\bf{\hat h}}_{kl}^{{\rm{MMSE}}}  = \sqrt {{\tau _p}{p_k}} \left({\mathbb{E}}\left\{ {{\bf{y}}_{{t_k}l}^{{\rm{pilot}}}{( {{\bf{y}}_{{t_k}l}^{{\rm{pilot}}}} )^{\rm{H}}}} \right\} \right)^{-1}{\bf{y}}_{{t_k}l}^{{\rm{pilot}}}.
\end{equation}
%
{Substituting \eqref{eq:MMSE estimate} into \eqref{eq:NMSE_k},
and the NMSE of the MMSE estimator can be computed as
\begin{equation}\label{eq:}
  {\sf{NMSE}}_k^{{\rm{MMSE}}} = \frac{{\sum\nolimits_{l \in {{\cal M}_k}} {{\rm{tr}}({{\bf{R}}_{kl}} - {\tau _p}{p_k}{{\bf{R}}_{kl}}{\bf{\Psi }}_{{t_k}l}^{ - 1}{{\bf{R}}_{kl}})} }}{{\sum\nolimits_{l \in {{\cal M}_k}} {{\rm{tr}}({{\bf{R}}_{kl}})} }}.
\end{equation}}

This is the optimal channel estimator from an MSE and NMSE perspective, thus all other estimators described in this survey will provide larger MSEs.
Recall that unlike the LS estimate, the MMSE estimate ${\bf{\hat h}}_{kl}^{{\rm{MMSE}}}$ and the estimation error ${\bf{\tilde h}}_{kl}^{{\rm{MMSE}}}$ are independent vectors.

Another alternative MMSE-type estimator is obtained by estimating each element of ${{\bf{h}}_{kl}}$ separately and thereby ignore these correlations between the elements \cite{Shariati2014a,[222]}.
More precisely, we can consider one of the $N$ elements in ${\bf{y}}_{{t_k}l}^{{\rm{pilot}}}$  at a time.
The resulting element-wise MMSE (EW-MMSE) estimator that estimate the $n$th element ${\left[ {{{\bf{h}}_{kl}}} \right]_n}$ is given as
\begin{equation}\label{eq:EWMMSE estimate}
{\left[ {{\bf{\hat h}}_{kl}^{{\rm{EW - MMSE}}}} \right]_n} = \frac{{\sqrt {{p_k}{\tau _p}} {{\left[ {{{\bf{R}}_{kl}}} \right]}_{nn}}}}{{\sum\limits_{i \in {{\cal S}_k}} {{p_i}{\tau _p}{{\left[ {{{\bf{R}}_{il}}} \right]}_{nn}}}  + {\sigma _{{\rm{ul}}}^2}}}{\left[ {\bf{y}}_{{t_k}l}^{{\rm{pilot}}} \right]_n}.
\end{equation}
{Since this is a type of MMSE estimator, the EW-MMSE estimate ${[ {{\bf{\hat h}}_{kl}^{{\rm{EW - MMSE}}}} ]_n}$ and the corresponding estimation error ${[ {{\bf{\tilde h}}_{kl}^{{\rm{EW - MMSE}}}} ]_n} $ are independent scalars.
%
Substituting \eqref{eq:EWMMSE estimate} into \eqref{eq:NMSE_k},
the NMSE of the EW-MMSE estimator can be computed as
\begin{equation}\label{eq:}
  {\sf{NMSE}}_k^{{\rm{EW-MMSE}}} = \frac{{\sum\nolimits_{l \in {{\cal M}_k}} {\sum\nolimits_{n = 1}^N { {{\left[ {{{\bf{R}}_{kl}}} \right]}_{nn}} - \frac{{{p_k}{\tau _p}{{\left( {{{\left[ {{{\bf{R}}_{kl}}} \right]}_{nn}}} \right)}^2}}}{{\sum\nolimits_{i \in {{\cal S}_k}} {{p_i}{\tau _p}{{\left[ {{{\bf{R}}_{il}}} \right]}_{nn}}}  + {\sigma _{{\rm{ul}}}^2}}} } } }}{{\sum\nolimits_{l \in {{\cal M}_k}} {{\rm{tr}}({{\bf{R}}_{kl}})} }}.
\end{equation}}

However, there can be cross-correlation between the estimate and estimation error for different antennas and its existence demonstrate the suboptimality of EW-MMSE; an optimal estimator exploits all correlation to lower the MSE.
The EW-MMSE estimator results in larger estimation errors since the correlation between the variables are not utilized to improve the estimation quality. One benefit of the EW-MMSE estimator is that it requires less statistical information since only the diagonals of the spatial correlation matrices are utilized. Moreover, it has lower computational complexity than the MMSE estimator, except in the particular case when all the spatial correlation matrices are diagonal so that one can estimate each channel element separately without a performance loss.

Since MMSE-type estimators depend on the channel statistics, they also take different forms when changing the channel model.
When considering Rician fading channels, as defined in \eqref{eq:Rician}, there are different estimators of the MMSE-type.
If the LoS component ${{{\bf{\bar h}}}_{kl}}$, the spatial
correlation matrix ${{\bf{R}}_{kl}}$, and the phase-shift ${\varphi _{kl}}$ are known, the phase-aware MMSE (PA-MMSE) estimate of ${{\bf{h}}_{kl}}$ is given as \cite{[120],wang2020uplink}
\begin{equation}
{\bf{\hat h}}_{kl}^{{\rm{PA - MMSE}}} = {{{\bf{\bar h}}}_{kl}}{e^{j{\varphi _{kl}}}} + \sqrt {\tau_p { p_k}} {{\bf{R}}_{kl}}{\bm{\Psi}} _{{t_k}l}^{ - 1}( {{\bf{y}}_{{t_k}l}^{{\rm{pilot}}} - {{{\bf{\bar z}}}_{{t_k}l}}} )
\end{equation}
where ${{{\bf{\bar z}}}_{{t_k}l}} = \sum\limits_{i \in {{\cal S}_k}} {\sqrt {{p_i}{\tau _p}} } {{{\bf{\bar h}}}_{il}}{e^{j{\varphi _{il}}}}$.
{The estimate ${\bf{\hat h}}_{kl}^{{\rm{PA - MMSE}}}$ and estimation error ${\bf{\tilde h}}_{kl}^{{\rm{PA - MMSE}}}$ are independent random variables.}

If the channel statistics ${{{\bf{\bar h}}}_{kl}}$, ${{\bf{R}}_{kl}}$ are available while the phase ${\varphi _{kl}}$ is unknown and uniformly distributed from $0$ to $2\pi$, then the linear MMSE (LMMSE) estimator of ${{\bf{h}}_{kl}}$ is \cite{[120]}
\begin{equation}
{\bf{\hat h}}_{kl}^{{\rm{LMMSE}}} = \sqrt {{p_k}} {{{\bf{R}}}'_{kl}}{\left( {{\bm{\Psi }}' _{{t_k}l}} \right)^{ - 1}}{\bf{y}}_{{t_k}l}^{{\rm{pilot}}},
\end{equation}
where ${\bf{R}'}_{kl} = {\bf{R}}_{kl} + {{{\bf{\bar h}}}_{kl}}{\bf{\bar h}}_{kl}^H$ and ${{\bm{\Psi}}' _{{t_k}l}} = \sum\nolimits_{i \in {{\cal S}_k}} {{p_i}{\tau _p}} \left( {{{\bf{R}}_{il}} + {{{\bf{\bar h}}}_{il}}{\bf{\bar h}}_{il}^H} \right) + {\sigma _{{\rm{ul}}}^2}{{\bf{I}}_N}$. {The LMMSE estimation ${\bf{\hat h}}_{kl}^{{\rm{LMMSE}}}$ and the estimation error ${\bf{\tilde h}}_{kl}^{{\rm{LMMSE}}}$ are independent random variables.}
\begin{figure}[t!]
\centering
\includegraphics[scale=0.64]{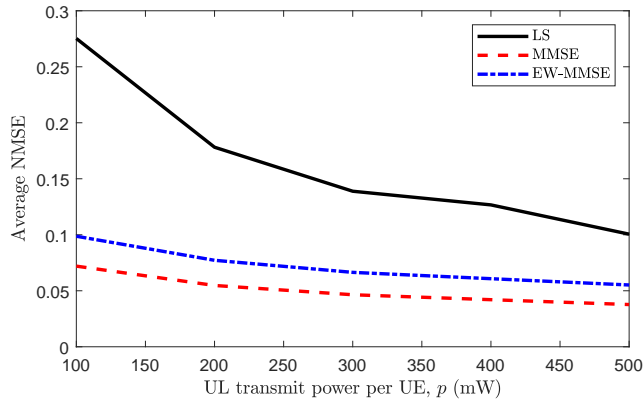}
\caption{Average NMSE versus $p$ with {MMSE \cite{bjornson2017massive}}, EW-MMSE \cite{Shariati2014a,[222]}, and {LS \cite{bjornson2017massive} estimator}.
\label{fig:estimation 1}}
\end{figure}

{It is worth noting that the aforementioned pilot-based schemes have the same names as the ones used in cellular mMIMO literature, because they are derived to minimize the same general metrics. However, the estimator expressions differ since pilot contamination affects the system differently, the notation is different, and we also notice that the CF mMIMO estimators can be computed separately at each AP (instead of centrally at one CPU as in cellular mMIMO) since the channel vectors are independent between APs.
Therefore, the estimation schemes are specially tailored for CF mMIMO systems.}
A comparison between the aforementioned estimators over spatially correlated Rayleigh fading channels is provided in Fig.~\ref{fig:estimation 1}, where the average NMSE of $K = 50$ UEs is shown as a decreasing function of the uplink transmit power per UE, $p$, with $L = 100$ APs and each equipped $N = 4$ antennas.
It can be seen that the MMSE-type estimators significantly outperform the LS estimator since the LS estimator has no prior information about the channel statistics.
As mentioned before, the EW-MMSE estimator results in a larger average NMSE compared to the MMSE estimator since the former only exploits partial statistical information while the latter relies on the full statistical channel knowledge.
The performance gap between the MMSE and EW-MMSE estimator will substantially decrease when considering the spatially correlated Rician fading channels since the existence of LoS components weakens the effect of spatial correlation, and the diagonals of the correlation matrix of these two MMSE-type estimators are identical.
Moreover, all aforementioned channel estimation schemes are summarized in Table \ref{tab:channel estimation}, where the abbreviation ``In." stands for whether the channel estimate and estimation error are independent, and the closed-form expressions can be found in the references.

\begin{table*}[tp]
  \centering
  \fontsize{9}{12}\selectfont
  \caption{Pilot Training-Based Channel Estimation.}
  \label{tab:channel estimation}
    \begin{tabular}{ !{\vrule width1.2pt}  m{1.2cm}<{\centering} !{\vrule width1.2pt}  m{3.2cm}<{\centering} !{\vrule width1.2pt}  m{4.6 cm}<{\centering} !{\vrule width1.2pt}  m{4.6 cm}<{\centering} !{\vrule width1.2pt}  m{0.6cm}<{\centering} !{\vrule width1.2pt}}

    \Xhline{1.2pt}
        \rowcolor{gray!50} \bf Scheme  & \bf Estimate of ${\bf h}_{kl}$  & \bf Mean and Covariance of Estimate & \bf Mean and Covariance of Estimation Error &  \bf Inde. \cr
    \Xhline{1.2pt}

        \multirow{2}{*}{ LS  } & \multirow{2}{*}{ ${\bf{\hat h}}_{kl}^{{\rm{LS}}} = \frac{1}{{\sqrt {{p_k}{\tau _p}} }}{\bf{y}}_{{t_k}l}^{{\rm{pilot}}}$}  & ${\mathbb{E}}\{ {{\bf{\hat h}}_{kl}^{{\rm{LS}}}} \} = {\bf{0}}$ & ${\mathbb{E}}\{ {{\bf{\tilde h}}_{kl}^{{\rm{LS}}}} \} = {\bf{0}}$  & \multirow{2}{*}{ $\times$ } \\
        \cline{3-4} & & $\quad {\mathbb C}\{ {{\bf{\hat h}}_{kl}^{{\rm{LS}}}} \} = \sum\nolimits_{i \in {{\cal S}_k}} {\frac{{{p_i}}}{{{p_k}}}{{\bf{R}}_{il}}}  + \frac{{{\sigma _{{\rm{ul}}}^2}}}{{{p_k}{\tau _p}}}{{\bf{I}}_N}$ & $ \quad {\mathbb C}\{ {{\bf{\hat h}}_{kl}^{{\rm{LS}}}} \} = {\mathbb C}\{ {{\bf{\hat h}}_{kl}^{{\rm{LS}}}} \}  - {\bf R}_{kl}$ &  \cr\hline

        \multirow{2}{*}{ MMSE } & \multirow{2}{*}{ \makecell*[l]{${\bf{\hat h}}_{kl}^{{\rm{MMSE}}}  $ \\\hspace{-0.8em}$= \sqrt {{\tau _p}{p_k}} {{\bf{R}}_{kl}}{\bf{\Psi }}_{{t_k}l}^{ - 1}{\bf{y}}_{{t_k}l}^{{\rm{pilot}}}$}}  & ${\mathbb{E}}\{ {{\bf{\hat h}}_{kl}^{{\rm{MMSE}}}} \} = {\bf{0}}$ & ${\mathbb{E}}\{ {{\bf{\tilde h}}_{kl}^{{\rm{MMSE}}}} \} = {\bf{0}}$ & \multirow{2}{*}{ \checkmark  }\\
        \cline{3-4} & & ${\mathbb C}\{ {{\bf{\hat h}}_{kl}^{{\rm{MMSE}}}} \} = {\tau _p}{p_k}{{\bf{R}}_{kl}}{\bf{\Psi }}_{{t_k}l}^{ - 1}{{\bf{R}}_{kl}}$ & $ {\mathbb C}\{ {{\bf{\tilde h}}_{kl}^{{\rm{MMSE}}}} \} = {{{\bf{R}}}_{kl}} - {\mathbb C}\{ {{\bf{\hat h}}_{kl}^{{\rm{MMSE}}}} \}$   & \cr\hline


        \multirow{2}{*}{ \makecell*[c]{EW-\\MMSE }} & \multirow{2}{*}{ \makecell*[l]{${[ {{\bf{\hat h}}_{kl}^{{\rm{EW - MMSE}}}} ]_n} $ \\\hspace{-0.8em}$= \frac{{\sqrt {{p_k}{\tau _p}} {{\left[ {{{\bf{R}}_{kl}}} \right]}_{nn}}}}{{\sum\nolimits_{i \in {{\cal S}_k}} {{p_i}{\tau _p}{{\left[ {{{\bf{R}}_{il}}} \right]}_{nn}}}  + {\sigma _{{\rm{ul}}}^2}}}$\\\hspace{-0.8em}$\times {[ {\bf{y}}_{{t_k}l}^{{\rm{pilot}}} ]_n}$}}  & ${\mathbb{E}}\{ {[ {{\bf{\hat h}}_{kl}^{{\rm{EW - MMSE}}}} ]_n} \} = {{0}}$ & ${\mathbb{E}}\{ {[ {{\bf{\tilde h}}_{kl}^{{\rm{EW - MMSE}}}} ]_n} \} = {{0}}$ & \multirow{2}{*}{ \checkmark  }\\
        \cline{3-4} & & ${\mathbb C}\{ {[ {{\bf{\hat h}}_{kl}^{{\rm{EW - MMSE}}}} ]_n} \} = \frac{{{p_k}{\tau _p}{{\left( {{{\left[ {{{\bf{R}}_{kl}}} \right]}_{nn}}} \right)}^2}}}{{\sum\limits_{i \in {{\cal S}_k}} {{p_i}{\tau _p}{{\left[ {{{\bf{R}}_{il}}} \right]}_{nn}}}  + {\sigma _{{\rm{ul}}}^2}}}$ & $ {\mathbb C}\{ {[ {{\bf{\tilde h}}_{kl}^{{\rm{EW - MMSE}}}} ]_n} \} = {\left[ {{{\bf{R}}_{kl}}} \right]_{nn}} - {\mathbb C}\{ {[ {{\bf{\hat h}}_{kl}^{{\rm{EW - MMSE}}}} ]_n} \}$  &  \cr\hline

        \multirow{2}{*}{ \makecell*[c]{PA-\\MMSE }} & \multirow{2}{*}{ \makecell*[l]{${\bf{\hat h}}_{kl}^{{\rm{PA - MMSE}}} $ \\\hspace{-0.8em}$= {{{\bf{\bar h}}}_{kl}}{e^{j{\varphi _{kl}}}} + \sqrt {{{ p}_k}} {{\bf{R}}_{kl}}$\\\hspace{-0.8em}$\times {\bm{\Psi}} _{{t_k}l}^{ - 1}( {{\bf{y}}_{{t_k}l}^{{\rm{pilot}}} - {{{\bf{\bar z}}}_{{t_k}l}}} )$}}  & ${\mathbb{E}}\{ {{\bf{\hat h}}_{kl}^{{\rm{PA - MMSE}}}|{\varphi _{kl}}} \} = {{{\bf{\bar h}}}_{kl}}{e^{j{\varphi _{kl}}}}$ & ${\mathbb{E}}\{ {{\bf{\tilde h}}_{kl}^{{\rm{PA - MMSE}}}} \} = {\bf 0}$ & \multirow{2}{*}{ \checkmark  }\\
        \cline{3-4} & & ${\mathbb C}\{ {{\bf{\hat h}}_{kl}^{{\rm{PA - MMSE}}}|{\varphi _{kl}}} \} = {p_k}{\tau _p}{{\bf{R}}_{kl}}{\bm{\Psi}} _{{t_k}l}^{ - 1}{{\bf{R}}_{kl}}$ & $ {\mathbb C}\{ {{\bf{\tilde h}}_{kl}^{{\rm{PA - MMSE}}}} \} = {{\bf{R}}_{kl}} - {p_k}{\tau _p}{{\bf{R}}_{kl}}{\bm{\Psi}} _{{t_k}l}^{ - 1}{{\bf{R}}_{kl}}$  &  \cr\hline

        \multirow{2}{*}{ LMMSE } & \multirow{2}{*}{ \makecell*[l]{${\bf{\hat h}}_{kl}^{{\rm{LMMSE}}} $ \\\hspace{-0.8em}$= \sqrt {{p_k}} {{{\bf{R}}}'_{kl}}{\left( {{\bm{\Psi }}' _{{t_k}l}} \right)^{ - 1}}{\bf{y}}_{{t_k}l}^{{\rm{pilot}}}$}}  & ${\mathbb{E}}\{ {{\bf{\hat h}}_{kl}^{{\rm{LMMSE}}}} \} = {\bf{0}}$ & ${\mathbb{E}}\{ {{\bf{\tilde h}}_{kl}^{{\rm{LMMSE}}}} \} = {\bf{0}}$ & \multirow{2}{*}{ \checkmark  }\\
        \cline{3-4} & & ${\mathbb C}\{ {{\bf{\hat h}}_{kl}^{{\rm{LMMSE}}}} \} = {p_k}{\tau _p}{{{\bf{R}}}'_{kl}}{\left( {{\bm{\Psi}}' _{{t_k}l}} \right)^{ - 1}}{{{\bf{R}}}'_{kl}}$ & $ {\mathbb C}\{ {{\bf{\tilde h}}_{kl}^{{\rm{LMMSE}}}} \} = {{{\bf{R}}}'_{kl}} - {\mathbb C}\{ {{\bf{\hat h}}_{kl}^{{\rm{LMMSE}}}} \}$ &   \cr

    \Xhline{1.2pt}
    \end{tabular}
  \vspace{0cm}
\end{table*}

\subsubsection{ML-based Channel Estimation}

Machine learning (ML) is a powerful tool for identifying structure and data and making decisions, and thus can particularly be utilized for reducing the computational complexity of known algorithms or learning mappings between known and unknown variables \cite{bjornson2020two}.
Although the MMSE estimator is optimal for Rayleigh and Rician fading channels, the price to pay is the high computational complexity from inverting a matrix (see \eqref{eq:MMSE estimate}), if each AP has many antennas.
This motivates the design of ML-based channel estimation which might reduce the online computational complexity by exploiting data-driven signal processing algorithms.
Besides,  practical wireless channels for future wireless communications (e.g., mmWave channels) can only be approximately described by Rayleigh and Rician fading, thus the MMSE-type estimators are not optimal in practice. Hence, an ML-based estimator can potentially improve the estimation quality.
In \cite{[107]}, the authors proposed a fast and flexible denoising convolutional neural network (FFDNet) for the channel estimation in CF mMIMO systems.
By introducing a noise level map as input sub-images, a single neural network can be used to handle different noise levels and reduce the waiting time for training and testing.
The convolutional blind denoising network (CBDNet) is developed to improve the blind denoising performance for real-world noisy images, thus it can be used for boosting the quality of channel estimation by regarding the channel matrix as an image.
CBDNet-based channel estimation for mmWave mMIMO systems was proposed in \cite{[188]}, where the sparsity feature of the mmWave channel was exploited to achieve notable performance gain with a wide range of SNRs and fast convergence.

\begin{figure}[t]
\centering
\includegraphics[scale=0.64]{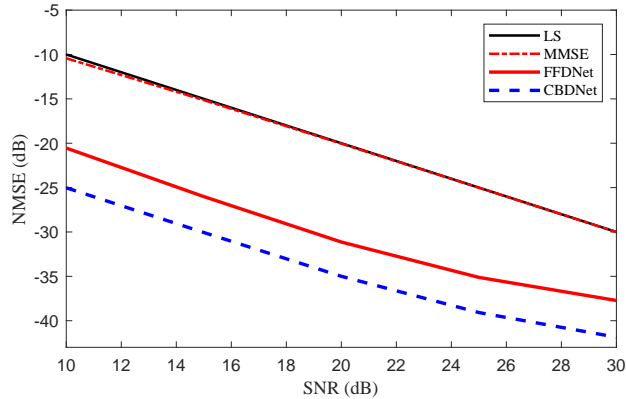}
\caption{NMSE versus SNR with CBDNet \cite{[188]}, FFDNet \cite{[107]}, {LS \cite{bjornson2017massive}}, and {MMSE \cite{bjornson2017massive}} estimator.
\label{fig:estimation 2}}
\end{figure}

Fig.~\ref{fig:estimation 2} compares the NMSE performance of CBDNet- and  FFDNet-based channel estimators with the conventional LS and MMSE estimators in a mmWave channel setup.
We first use fast Fourier transform-based modulation and demodulation schemes to expose the low-sparse information in the angle and delay domains.
To be specific, the ML-based methods learn the sparse virtual-channel coefficients information of the mmWave channel, with $\leq L$ scattering clusters having non-zero coefficients, and introduce the residual network to capture the low-dimensional sparse channel subspace that carries most of the power and propagation angle information efficiently.
It can be seen that the ML-based methods (i.e., FFDNet and CBDNet) significantly outperform the conventional methods (i.e., LS and MMSE) since the ML-based methods learn certain information about the propagation environment while the counterparts only assume covariance information that cannot fully capture the sparsity.
It is worth noting that the more prior information that one has about the propagation environment, the better the estimates can become.
In theory, one can develop a new MMSE estimator for the scenario simulated here, which will be optimal in terms of NMSE if the same side information learned by ML-based methods is available.
However, this prior information may be hard to describe mathematically and must be extracted from data; that is when the ML approaches (e.g., FFDNet and CBDNet) prove their abilities.
Moreover, it is illustrated that FFDNet has a larger NMSE than CBDNet since FFDNet only fits narrow noise levels and lacks the adaptability to out-of-trained range noise levels. While CBDNet can deal with whole and beyond training SNR range using continuous nonlinear joint loss function to enlarge the SNR range and achieve fast convergence because CBDNet converts the loss to the same order of magnitude.

\subsubsection{Other Methods}

Beyond the pilot- and ML-based channel estimation methods, other options exist for their specific applications in CF mMIMO systems.
Recall that pilot contamination is an inevitable phenomenon introduced by pilot reuse, which can substantially deteriorate the estimation accuracy and the system performance. It might be less of an issue in CF mMIMO than in conventional mMIMO due to the distributed nature, where each AP has relatively few antennas, but it cannot be neglected.
To mitigate pilot contamination, \cite{[152]} proposed a time-of-arrival (TOA)-based scheme that first estimates the TOA of the multipath channel and then filters out interfering signals of paths originating from distant APs.
It is shown that the proposed TOA-based scheme could outperform the LS estimator, given the TOA estimation is accurate.
The TOA-based channel estimation only depends on the currently received signal without the need for channel statistics; that is the key advantage of this scheme, which makes it more practically applicable.

In CF mMIMO systems, additional fronthaul overhead is required to exchange CSI between the APs and CPU, making blind channel estimation schemes attractive from an overhead perspective.
Independent component analysis (ICA), which is one of the blind source separation approaches, was employed in \cite{[183]} for blind channel estimation and signal detection in CF mMIMO systems.
The proposed scheme used ICA to separate and decode the received signals and to estimate the channels.
The estimated channel energy was used to differentiate the in-cell signals and the neighboring cell signals.
The reference bits were applied to identify the desired signal among signals within a cell.
However, this ICA-based method suffers from an error floor at high SINR due to inadequate ambiguity elimination.

When considering FDD mode, the key challenges are mainly CSI acquisition and feedback overhead.
Even if channel reciprocity does not hold, we can exploit angle reciprocity, in which the AoDs are similar in both uplink and downlink.
In \cite{[73]}, the authors proposed a discrete Fourier transform (DFT)-based channel estimation scheme in an FDD CF mMIMO system where the angle reciprocity of multipath components in both uplink and downlink is exploited.
Based on the DFT operation and log-likelihood function where the angle rotation is with a tiny amount of training overhead, {the required CSI requisition overhead} scales only with the number of served UEs.

\subsection{Receive Combining}\label{subsec:combining precoding}

In the uplink, the APs utilize the CSI that is acquired by channel estimation to perform the coherent signal processing, which is called \emph{receive combining}.
Receive combining is a linear projection that transforms the vector channels into effective scalar channels that support higher SEs than in the case where only one single-antenna AP serves each UE \cite{[241]}. The purpose of the receive combining is to make the desired signal much stronger than the sum of interfering signals, and noise \cite{[113],[49],[82]}, which requires CSI.
Different combining methods lead to substantially different SEs and computational complexities.
In this subsection, we comprehensively introduce the receive combining used in CF mMIMO networks with a centralized or distributed operation.

\subsubsection{MR Combining}

{The scheme with the lowest complexity is maximum ratio (MR) combining, defined as \cite[Sec. 4.1]{bjornson2017massive}}
\begin{equation}
{\bf{v}}_{kl}^{{\rm{MR}}} = {{{\bf{\hat h}}}_{kl}}.
\end{equation}
This is a vector that maximizes the ratio ${| {{\bf{v}}_{kl}^H{{{\bf{\hat h}}}_{kl}}}|^2}/{\left\| {{{\bf{v}}_{kl}}} \right\|^2}$ between the power of the desired signal and the squared norm of the combining vector \cite{[241]}. It shows that MR coherently combines all the received energy from the desired signal because the combining vector is matched to the channel response of the desired UE \cite{[105],[126],[79]/[80]}.

\subsubsection{MMSE-like Combining}

Although MR maximizes the gain of the desired signal, it might not be the preferable choice when there are interfering signals. MMSE-like methods can be utilized to identify a tradeoff between maximizing the signal gain and rejecting interference. There are different methods for distributed and centralized operation.
 {When the centralized combining is considered, the centralized MMSE (C-MMSE) combining is given as \cite[Sec. 4.1]{bjornson2017massive}}
\begin{equation}\label{C-MMSE combining}
{\bf{v}}_k^{{\rm{C - MMSE}}} = {p_k}{\left( {\sum\limits_{i = 1}^K {{p_i}} {{{\bf{\hat h}}}_i}{\bf{\hat h}}_i^H + \sum\limits_{i = 1}^K {{p_i}} {{\bf{C}}_i} + \sigma _{{\rm{ul}}}^2{\bf{I}}} \right)^{ - 1}}{{{\bf{\hat h}}}_k}
\end{equation}
and minimizes the conditional MSE ${\mathbb{E}}\{ {{{| {{s_k} - {\bf{v}}_{k}^H{{\bf{y}}^{{\rm{ul}}}}} |}^2}|\{ {{{{\bf{\hat h}}}_{i}}} \}} \}$ between the desired signal and the centralized receive combined signal, where the expectation is computed conditioned on the centralized channel estimates. C-MMSE combining can also be shown to maximize the SINR of UE $k$ \cite{[162],[5]}.
{C-MMSE combining has a relatively high computational complexity but since the computation in \eqref{C-MMSE combining} is performed at the CPU, which is generally assumed to have high computational capability, it can be practically implementable. To reduce the complexity and enable distributed implementation, we will consider several MMSE-like combining methods that are specially developed for CF mMIMO.
As a distributed combining scheme directly inspired by C-MMSE}, AP $l$ can use local MMSE (L-MMSE) combining \cite{[162]}
\begin{equation}\label{L-MMSE_combining}
{\bf{v}}_{kl}^{{\rm{L - MMSE}}} = {p_k}{\left( {\sum\limits_{i = 1}^K {{p_i}} {{{\bf{\hat h}}}_{il}}{\bf{\hat h}}_{il}^H + \sum\limits_{i = 1}^K {{p_i}} {{\bf{C}}_{il}} + \sigma _{{\rm{ul}}}^2{\bf{I}}} \right)^{ - 1}}{{{\bf{\hat h}}}_{kl}}
\end{equation}
when detecting the data from UE $k$.
This combining scheme has received its name from the fact that it minimizes the conditional MSE ${\mathbb{E}}\{ {{{| {{s_k} - {\bf{v}}_{kl}^H{{\bf{y}}_l^{{\rm{ul}}}}} |}^2}|\{ {{{{\bf{\hat h}}}_{il}}} \}} \}$ between the desired signal ${{s_k}}$ and the local receive combined signal ${{\bf{v}}_{kl}^H{{\bf{y}}_l^{{\rm{ul}}}}}$ at AP $l$, where the expectation is computed conditioned on the local channel estimates.

It can be observed from (\ref{L-MMSE_combining}) and (\ref{C-MMSE combining}) that we need to compute all the $K$ MMSE channel estimates $\{ {{{{\bf{\hat h}}}_{il}}:i = 1, \ldots ,K} \}$ at any AP $l$ that is serving UE $k$.
Therefore, the total number of complex multiplications required by L-MMSE and C-MMSE combining schemes, unfortunately, grow with $K$, thus making the complexity unscalable.
To solve this issue, the alternative partial MMSE (P-MMSE) and local P-MMSE (LP-MMSE) are proposed in \cite{[163]}.
The main idea of P-MMSE and LP-MMSE is that the interference that affects UE $k$ is mainly generated by a small subset of the other UEs \cite{[5]}. Therefore, only the UEs that are served by partially the same APs as UE $k$ should be included in the expression in (\ref{L-MMSE_combining}) and (\ref{C-MMSE combining}). These UEs have indices in the set
\begin{equation}
{{\cal P}_k} = \{ {i:{\cal M}_k \cap {\cal M}_i \ne \emptyset} \}.
\end{equation}
By utilising ${{\cal P}_k}$, the P-MMSE and LP-MMSE combining vectors are given as
\begin{align}\label{PMMSE}\notag
{\bf{v}}_k^{{\rm{P - MMSE}}} &= {p_k}\left( \sum\limits_{i \in {{\cal P}_k}} {{p_i}} {{\bf{D}}_k}{{{\bf{\hat h}}}_i}{\bf{\hat h}}_i^{\rm H} {{\bf{D}}_k} \right.\\
&\left.+ {{\bf{D}}_k}\left( {\sum\limits_{i \in {{\cal P}_k}} {{p_i}} {{\bf{C}}_i} + {\sigma _{{\rm{ul}}}^2}{{\bf{I}}_{LN}}} \right){{\bf{D}}_k,} \right)^\dag {{\bf{D}}_k}{{{\bf{\hat h}}}_k},
\end{align}
\begin{equation}\label{PMMSE}
{\bf{v}}_k^{{\rm{LP - MMSE}}} = {p_k}{\left( {\sum\limits_{i \in {{\cal P}_k}} {{p_i}} \left( {{{{\bf{\hat h}}}_{il}}{\bf{\hat h}}_{il}^H + {{\bf{C}}_{il}}} \right) + {\sigma _{{\rm{ul}}}^2}{{\bf{I}}_{LN}}} \right)^{ - 1}}{{{\bf{\hat h}}}_{kl}}.
\end{equation}

The structure of MMSE-like combining is quite intuitive. Let us take L-MMSE as an example.
The matrix that is inverted in (\ref{L-MMSE_combining}) is the conditional correlation matrix ${{\bf{C}}_{{{\bf{y}}_l^{{\rm{ul}}}}}} = {\mathbb{E}}\{ {{{\bf{y}}_l^{{\rm{ul}}}}({{\bf{y}}_l^{{\rm{ul}}}})^{\rm H}|\{ {{{{\bf{\hat h}}}_{il}}} \}} \}$ of the received signal, given the current set of channel estimates.
The multiplication ${\bf{C}}_{{{\bf{y}}_l^{{\rm{ul}}}}}^{ - 1/2}{{\bf{y}}_l^{{\rm{ul}}}}$ corresponds to the whitening of the received signal; that is, ${\mathbb{E}}\{ {{\bf{C}}_{{{\bf{y}}_l^{{\rm{ul}}}}}^{ - 1/2}{{\bf{y}}_l^{{\rm{ul}}}}{{( {{\bf{C}}_{{{\bf{y}}_l^{{\rm{ul}}}}}^{ - 1/2}{{\bf{y}}_l^{{\rm{ul}}}}} )}^{\rm H}}|\{ {{{{\bf{\hat h}}}_{il}}} \}} \} = {\bf{I}}$.
If we denote the whitened combining vector as ${{\bf{u}}_{kl}}$, it is related to the original combining vector as ${{\bf{v}}_{kl}} = {\bf{C}}_{{{\bf{y}}_l^{{\rm{ul}}}}}^{ - 1/2}{{\bf{u}}_{kl}}$.
The highest desired signal power is now received from the spatial direction ${\bf{C}}_{{{\bf{y}}_l^{{\rm{ul}}}}}^{ - 1/2}{{{\bf{\hat h}}}_{kl}}$ and due to the whitening, which makes the total power equal in all directions, the interference plus noise power is lowest in this direction.
Hence, the optimal whitened combining vector can be selected as ${{\bf{u}}_{kl}} = {\bf{C}}_{{{\bf{y}}_l^{{\rm{ul}}}}}^{ - 1/2}{{{\bf{\hat h}}}_{kl}}$. This results into ${{\bf{v}}_{kl}} = {\bf{C}}_{{{\bf{y}}_l^{{\rm{ul}}}}}^{ - 1/2}{{\bf{u}}_{kl}} = {\bf{C}}_{{{\bf{y}}_l^{{\rm{ul}}}}}^{ - 1}{{{\bf{\hat h}}}_{kl}}$.
Therefore, MMSE-like combining is obtained by whitening followed by MR combining \cite{bjornson2017massive}.

Although the MMSE-like combining is optimal, the research literature contains other schemes as well. There are two main reasons for that. Firstly, the C-MMSE scheme has high complexity since there is an $LN \times LN$ matrix inverse in (\ref{C-MMSE combining}).
Secondly, the performance of MMSE-like schemes is hard to analyze mathematically, while there are alternative schemes that can give more insightful closed-form SE expressions.

\subsubsection{ZF-like Combining}

In centralized operation, if the channel conditions are good, we can neglect all the correlation matrices in (\ref{C-MMSE combining}) and obtain the regularized zero-forcing (RZF) combining vector \cite{[241],[105]}
\begin{equation}\label{RZF_combining}
{\bf{v}}_{k}^{{\rm{RZF}}} = {{{\bf{\hat H}}}}{\left( {{\bf{\hat H}}^{\rm H}{{{\bf{\hat H}}}} + \sigma_{\rm{ul}}^2{{\bf{P}}^{ - 1}}} \right)^{ - 1}}{{{\bf{\hat e}}}_k},
\end{equation}
{where ${\bf{\hat H}} = \left[ {{{{\bf{\hat H}}}_1^{\text{UE}}}, \cdots ,{{{\bf{\hat H}}}_K^{\text{UE}}}} \right] \in {\mathbb C}^{LN \times K}$, ${{\bf{\hat H}}_k^{\text{UE}}} = {\left[ {{\bf{\hat h}}_{1k}^T, \cdots {\bf{\hat h}}_{Lk}^T} \right]^T \in {\mathbb C}^{LN}}$} for $k \in \left\{ {1, \cdots ,K} \right\}$, ${\bf{P}} = {\rm{diag}}\left( {{p_1}, \ldots ,{p_K}} \right) \in {{\mathbb{C}}^{K \times K}}$, and ${{{\bf{\hat e}}}_k}$ is the $k$th column of ${{\bf{I}}_K}$. The RZF scheme can also be realized in a distributed fashion, which called local RZF (L-RZF) scheme
\begin{equation}
{\bf{v}}_{kl}^{{\rm{L-RZF}}} = {{{\bf{\hat H}}}_l}{( {{\bf{\hat H}}_l^{\rm H}{{{\bf{\hat H}}}_l} + \sigma _{{\rm{ul}}}^2{{\bf{P}}^{ - 1}}} )^{ - 1}}{{{\bf{\hat e}}}_k},
\end{equation}
where ${{{\bf{\hat H}}}_l} = [ {{{{\bf{\hat h}}}_{1l}}, \ldots ,{{{\bf{\hat h}}}_{Kl}}} ]$.
When the SNR is high, the combining expression in (\ref{RZF_combining}) can be further approximated as
\begin{equation}\label{ZF_combining}
{\bf{v}}_{k}^{{\rm{ZF}}} = {{{\bf{\hat H}}}}{( {{\bf{\hat H}}^{\rm H}{{{\bf{\hat H}}}}} )^{ - 1}}{{{\bf{\hat e}}}_k},
\end{equation}
under the name of centralized zero-forcing (ZF) combining.

Unlike the centralized ZF combining, full-pilot ZF (FZF) combining can suppress interference in a fully distributed, coordinated, and scalable fashion \cite{[41],[132]}. Besides, the computation of FZF combining has much lower complexity than centralized ZF. When considering mutually
orthogonal pilot sequences and uncorrelated Rayleigh channels between APs and UEs, the channel estimate of ${{\bf{h}}_{kl}}$ is given as
\begin{equation}
{{\bf{\hat h}}_{kl}} = {c_{kl}}\left( {\sum\limits_{k = 1}^K {\sqrt {p_k} } {{\bf{h}}_{kl}}{\pmb{\phi}} _{{t_k}}^H + {{\bf{N}}_l}} \right){{\pmb{\phi}} _{{t_k}}},
\end{equation}
where
\begin{equation}
{c_{kl}} \buildrel \Delta \over = \frac{{{\sqrt {p_k{\tau _p}} } {\beta _{kl}}}}{{{\tau _p}\sum\limits_{i \in {{\cal S}_k}} {p_t} {\beta _{kl}} + {\sigma^2}}},
\end{equation}
${{\pmb{\phi}} _{{t_k}}}$ is the ${\tau _p}$-length pilot signals assigned to UE $k$,
and ${{\bf{N}}_l} \in {{\mathbb{C}}^{N \times {\tau _p}}}$ is a Gaussian noise matrix with i.i.d. {${{\cal N}_{\mathbb{C}}}\left( {0,{\sigma^2}} \right)$} elements.
The local combining vector that AP $l$ selects for UE $k$, ${\bf{v}}_{{t_k}l}^{{\rm{FZF}}} \in {{\mathbb{C}}^{N \times 1}}$, is given by
\begin{equation}\label{V_FZF}
{\bf{v}}_{{t_k}l}^{{\rm{FZF}}} = {c_{kl}}{{{\bf{\bar H}}}_l}{( {{\bf{\bar H}}_l^{\rm H}{{{\bf{\bar H}}}_l}} )^{ - 1}}{{\bf{e}}_{{t_k}}},
\end{equation}
where
\begin{equation}
{{\bf{\bar H}}_l} = \sum\limits_{k = 1}^K {\sqrt {p_k} } {{\bf{h}}_{kl}}{\pmb{\phi}} _{{t_k}}^{\rm H}{\bm{\Phi}} + {{\bf{N}}_l}{\bm{\Phi}},
\end{equation}
which is connected to the respective channel estimate by
\begin{equation}
{{{\bf{\hat h}}}_{kl}} = {c_{kl}}{{{\bf{\bar H}}}_l}{{\bf{e}}_{{t_k}}},
\end{equation}
where $\pmb{\Phi}  = \left[ {{{\pmb{\phi}} _1}, \ldots ,{{\pmb{\phi}} _{{\tau _p}}}} \right] \in {{\mathbb{C}}^{{\tau _p} \times {\tau _p}}}$ and ${{\bf{e}}_{{t_k}}}$ denotes the ${t_k}$th column of ${{\bf{I}}_{{\tau _p}}}$.

Employing the partial FZF (P-FZF) combining leads to more array gain with the cost of only suppressing partial interference. Specifically, AP $l$ employs the P-FZF combining for strong UEs whose channel gain are large ${{\cal T}_l} \subset \left\{ {1, \ldots ,K} \right\}$ and MR combining for UEs with poor channel condition ${{\cal E}_l} \subset \left\{ {1, \ldots ,K} \right\}$. Since only strong UEs use the P-FZF combining, we define ${\tau _{{{\cal T}_l}}}$ as the number of different pilots used by the UEs ${ \in {{\cal T}_l}}$ and ${{\cal R}_{{{\cal T}_l}}} = ( {{r_{l,1}}, \ldots ,{r_{l,{\tau _{{{\cal T}_l}}}}}} )$ as the set of the corresponding pilot indices. Therefore, the pilot-book matrix for UEs ${ \in {{\cal T}_l}}$ is given by {${{\bm{\Phi}} _{{{\cal T}_l}}}{\rm{ = }}{\bm{\Phi }} {{\bf{E}}_{{{\cal T}_l}}}$,} where ${{\bf{E}}_{{{\cal T}_l}}} = ( {{{\bf{e}}_{{r_{l,1}}}}, \ldots ,{{\bf{e}}_{{r_{l,{\tau _{{{\cal T}_l}}}}}}}} ) \in {{\mathbb{C}}^{{\tau _p} \times {\tau _{{{\cal T}_l}}}}}$ and ${{{\bf{e}}_{{r_{l,i}}}}}$ is the ${{r_{l,i}}}$th column of ${{\bf{I}}_{{\tau _p}}}$. With respect to ${{\bm{\Phi}} _{{{\cal T}_l}}}$, we define ${j_{kl}} \in \left\{ {1, \ldots ,{\tau _{{{\cal T}_l}}}} \right\}$ the index. Let {${{\bm{\varepsilon}} _{{j_{kl}}}} \in {{\mathbb{C}}^{{\tau _{{{\cal T}_l}}} \times 1}}$} as the ${j_{kl}}$th column of ${{\bf{I}}_{{\tau _{{{\cal T}_l}}}}}$, and it leads to ${{\bf{E}}_{{{\cal T}_l}}}{\varepsilon _{{j_{kl}}}}{\rm{ = }}{{\bf{e}}_{{t_k}}}$. Then, the P-FZF combining for UE $k$ $\in {{\cal T}_l}$ at AP $l$ is given as
\begin{equation}\label{V_PFZF}
{{\bf{v}}_{{t_k}l}^{{\rm{P-FZF}}}} = {c_{kl}}{{{\bf{\bar H}}}_l}{{\bf{E}}_{{{\cal T}_l}}}{( {{\bf{E}}_{{{\cal T}_l}}^{\rm H}{\bf{\bar H}}_l^{\rm H}{{{\bf{\bar H}}}_l}{{\bf{E}}_{{{\cal T}_l}}}} )^{ - 1}}{\varepsilon _{{j_{kl}}}}.
\end{equation}

To improve the service quality of weak UEs, which is the main advantage of CF mMIMO compared with cellular systems, we can alternatively apply the protective weak P-FZF (PWP-FZF) combining for weak UEs to significantly reduce the intra-group interference. The main idea of PWP-FZF is to force the MR combining vector to take place in the orthogonal complement of ${{\bf{\bar H}}_l}{{\bf{E}}_{{{\cal T}_l}}}$, which is the effective channels of UEs in ${{\cal T}_l}$. With PWP-FZF, the MR combining used at AP $l$ for UEs in ${{\cal E}_l}$ is now given by
\begin{equation}
{{\bf{v}}_{kl}^{{\text{PMR}}} = {c_{kl}}{{\bf{J}}_l}{{{\bf{\bar H}}}_l}{{\bf{e}}_{{t_k}}}},
\end{equation}
where
\begin{equation}
{{\bf{J}}_l} = {{\bf{I}}_N} - {{{\bf{\bar H}}}_l}{{\bf{E}}_{{{\cal T}_l}}}{\left( {{\bf{E}}_{{{\cal T}_l}}^{\rm H}{\bf{\bar H}}_l^{\rm H}{{{\bf{\bar H}}}_l}{{\bf{E}}_{{{\cal T}_l}}}} \right)^{ - 1}}{\bf{E}}_{{{\cal T}_l}}^{\rm H}{\bf{\bar H}}_l^{\rm H}
\end{equation}
represents the projection matrix onto the orthogonal complement of ${{\bf{\bar H}}_l}{{\bf{E}}_{{{\cal T}_l}}}$.

\begin{figure}[t!]
\centering
\includegraphics[scale=0.64]{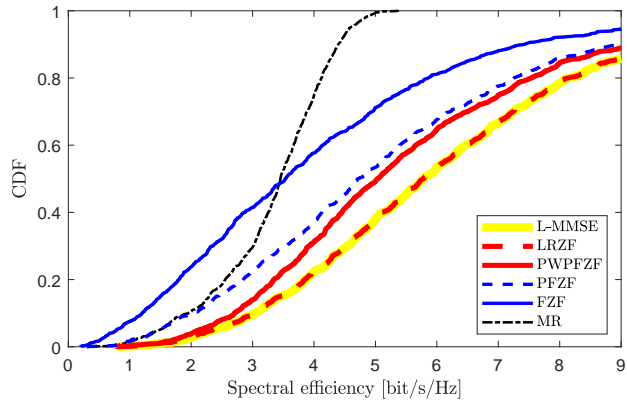}
\caption{CDF of the uplink SE per UE achieved by {MR \cite{bjornson2017massive}}, {L-MMSE \cite{[162]}}, and different {ZF-like \cite{[105]}} combining schemes. }
\label{fig:combining 1}
\end{figure}
\subsubsection{MMSE-SIC Combining}

All the combining schemes mentioned above are based on using linear receive combining. Still, another benefit of centralizing the signal processing at the CPU is that more advanced decoding methods can be used since system-wide CSI and high computational resources are available. The potential benefits of the MMSE-based non-linear successive interference cancelation (SIC) method are investigated in \cite{[162]}, which means that the CPU decodes one UE signal at a time, and then sequentially subtracts interference that the decoded signal caused to the remaining signals.
However, the numerical results in \cite{[162]} show that non-linear MMSE-SIC can only achieve a minor gain over the linear MMSE in terms of the average SE in uplink when the favorable propagation phenomenon exists. The reason is that the SIC method is only effective when there are a few strongly interfering UEs, while CF mMIMO is more characterized by having many UEs that cause little interference to each other.

In Fig.~\ref{fig:combining 1}, the cumulative distribution function (CDF) of the uplink SE per UE is shown for {the} MR, FZF, P-FZF, PWP-FZF, L-RZF, and L-MMSE combining schemes with $L=25$, $K=10$, $N=8$, ${\tau_p}=7$ and $p_k =100$ mW for each UE when using LSFD. The performance gap between the MR combining and the ZF-based schemes are quite significant, especially for UEs with large channel gains. It results from the impact of inter-user interference while FZF, P-FZF, and PWP-FZF combining schemes all can suppress that interference.
Besides, the advantage of employing P-FZF and PWP-FZF rather than FZF is noticeable. FZF spends ${\tau _p}$ degrees of freedom to cancel the pilot contamination and inter-user interference while P-FZF and PWP-FZF only spend ${\tau _{{{\cal T}_l}}}$ degrees of freedom and take advantage of a larger array gain. Compared with P-FZF, PWP-FZF gives a higher 95\%-likely SE, which is due to its protective nature of weak UEs with lower channel gain. Furthermore, Fig.~\ref{fig:combining 1} also shows that the performance gap between L-RZF and L-MMSE is quite small and they outperform the other schemes.

Figure~\ref{fig:combining 2} shows the CDF of the uplink SE per UE of scalable centralized P-MMSE and distributed LP-MMSE schemes with two benchmarks where all APs serve all UEs: C-MMSE combining in (\ref{C-MMSE combining}), distributed L-MMSE combining in (\ref{L-MMSE_combining}).
The first observation is that C-MMSE and P-MMSE outperform L-MMSE and LP-MMSE since the former two schemes exploit much more CSI than the latter two schemes to suppress the interference.
Then we can see that the scalable schemes (i.e., P-MMSE and LP-MMSE) provide almost the same performance compared to their counterparts (i.e., C-MMSE and L-MMSE).
The negligible performance loss comes from limiting the number of APs serving each UE, which is the price for scalability.
\begin{figure}[t!]
\centering
\includegraphics[scale=0.64]{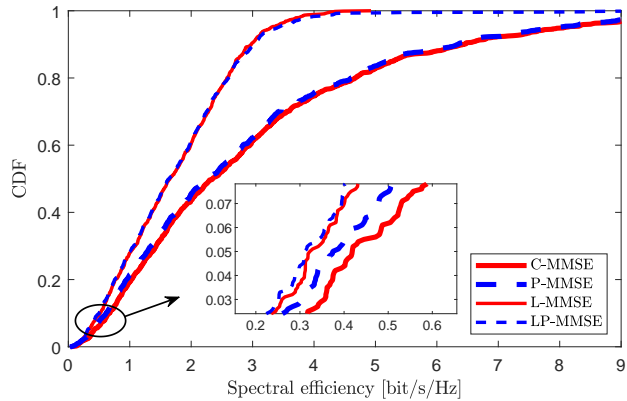}
\caption{CDF of the uplink SE per UE achieved by different {scalable \cite{[163]}} and {non-scalable combining \cite{bjornson2017massive}} schemes. }
\label{fig:combining 2}
\end{figure}

Next, we will first summarize the fronthaul costs, which refer to the amount of information to exchange via the fronthaul network to perform joint coherent transmission/detection and other centralized network operations of different distributed combining schemes.
More precisely, when using LSFD, for FZF, P-FZF, PWP-FZF, L-RZF and L-MMSE combining schemes, the number of complex scalars to send from the APs to the CPU via the fronthaul is $\left( {{\tau _c} - {\tau _p}} \right)KL$ in each coherence block or $KL + \left( {{L^2}{K^2} + KL} \right)/2$ for each realization of the UE locations/statistics.

Then, we summarize the computational complexity with MR, FZF, P-FZF, PWP-FZF, L-RZF, LP-MMSE and L-MMSE combining schemes per AP in terms of the number of complex multiplications in Table \ref{tab:complexity} according to \cite{[105],[241]}. Thanks to the fact that ${\tau _{{{\cal T}_l}}} \le {\tau _p}$, the complexity of P-FZF and PWP-FZF is lower than FZF and L-RZF. Compared with P-FZF, PWP-FZF needs $2\left( {{\tau _p} - {\tau _{{{\cal T}_l}}}} \right){\tau _{{{\cal T}_l}}}N$ more complex multiplications for computing the ${{\tau _p} - {\tau _{{{\cal T}_l}}}}$ MR combining vectors in (\ref{V_PFZF}).

\begin{table}[tp]
  \centering
  \fontsize{9}{12}\selectfont
  \caption{Computational complexity per AP in terms of number of complex multiplications}
  \label{tab:complexity}
    \begin{tabular}{ !{\vrule width1.2pt}  m{1.8cm}<{\centering} !{\vrule width1.2pt}  m{5.8cm}<{\centering}  !{\vrule width1.2pt} }

    \Xhline{1.2pt}
        \rowcolor{gray!50} \bf Scheme  & \bf Computational Complexity \cr
    \Xhline{1.2pt}

          MR & --   \cr\hline

        L-RZF \& FZF  & $\frac{3 N \tau^2_p}{2}+\frac{N \tau_p}{2}+\frac{\tau^3_p - \tau_p}{3}$ \cr\hline

        P-FZF  & $\frac{3 N \tau^2_{{\cal S}_l}}{2}+\frac{N \tau_{{\cal S}_l}}{2}+\frac{\tau^3_{{\cal S}_l} - \tau_{{\cal S}_l}}{3}$ \cr\hline

        PWP-FZF  & $\frac{3 N \tau^2_{{\cal S}_l}}{2}+\frac{N \tau_{{\cal S}_l}}{2}+\frac{\tau^3_{{\cal S}_l} - \tau_{{\cal S}_l}}{3}+2(\tau_p - \tau_{{\cal S}_l})N\tau_{{\cal S}_l}$ \cr\hline

        LP-MMSE  & $\frac{N^2 + N}{2}|{\cal D}_l|+\frac{N^3 - N}{3} +N^2$ \cr\hline

        L-MMSE &  $\frac{N^2 K + NK}{2}+\frac{N^3 - N}{3} +N^2$  \cr

    \Xhline{1.2pt}
    \end{tabular}

  \vspace{0cm}
\end{table}
\subsection{Transmit precoding}\label{subsec:combining precoding}

In the downlink, the acquired CSI is used to coherently precode the transmitted data signals, which is called \emph{transmit precoding}.
Transmit precoding means that each data signal is sent from multiple antennas, but with different amplitude and phases to direct the signal spatially \cite{[120]}. Each UE is affected by all the precoding vectors; the own precoding vector is multiplied with the channel response from the serving AP, while the other ones cause interference and are multiplied with the channel response from the corresponding transmitting APs. Hence, the precoding vectors should be selected carefully based on knowledge of the channel responses \cite{[121],[194],[106]}.
In this subsection, we comprehensively introduce the transmit precoding schemes used in CF mMIMO networks.

\subsubsection{Precoding via Uplink-Downlink Duality}

The precoding vector design is more complicated than that of the combining since the downlink SE of UE $k$ depends on the precoding vectors of all UEs in contrast to the uplink SEs that only depend on the UE's own combining vector ${{\bf{v}}_{k}}$.
The most commonly used approach for precoding design is employing uplink and downlink duality, which has been introduced in Lemma \ref{lemm:uplink downlink deality} in Section \ref{subsec:procedure}.
Based on that, all the linear combining schemes mentioned before can be utilized for designing the corresponding precoding vectors \cite{[32]}.

\subsubsection{Precoding via Utility Maximization}

The previous method only provides heuristic precoding vectors and then uses the downlink power allocation to further tune the performance of the UEs.
However, an optimal collection of centralized precoding vectors can be computed by maximizing a system-wide utility function \cite{[238],Bjornson2013d}. Suppose the utility is to maximize the minimum instantaneous SINR of all UEs. In that case, the optimal solution is obtained by solving a second-order cone program \cite{[238]} and arbitrary power constraints can be added to that problem \cite{Bjornson2013d}. Other metrics such as sum-rate maximization can be considered but only solved to local optimality. The benefit of optimal beamforming is that it outperforms any other scheme and comes with an optimal power allocation, while the drawback is the high computational complexity because each instance of the optimization problem is complex and needs to be solved once per coherence block. This approach cannot be made scalable.

\begin{figure}[t!]
\centering
\includegraphics[scale=1]{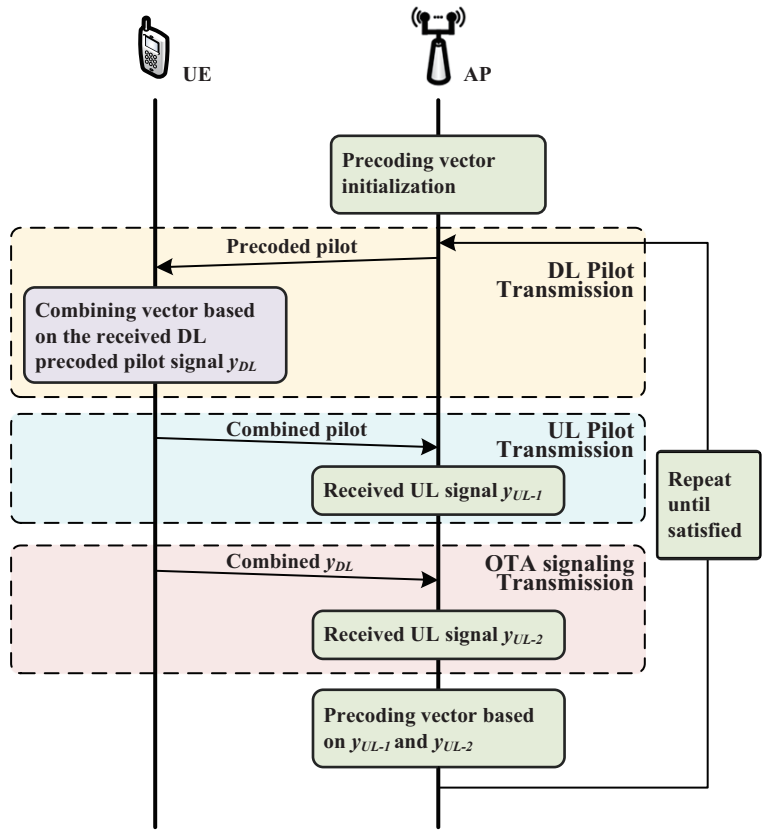}
\caption{Flow chart of the OTA-aided precoding design. }
\label{fig:ota}
\end{figure}

\subsubsection{Precoding via Over-the-air Signaling}
The precoding vectors should be optimized jointly for all UEs when not constructed based on the uplink-downlink duality. However, the distributed precoding design is hard to be carried out. The reason is that the optimization of ${{\bf{w}}_{kl}}$ needs information about the channel conditions between AP $k$ and the other APs and about the precoding vectors adopted by the latter for UE $k$. Such cross-term information is acquired utilizing fronthaul and must be adjusted iteratively, which leads to a sizeable fronthaul load. A novel over-the-air (OTA) approach which can optimize precoding vectors in a distributed fashion is proposed in \cite{[155],[180]}. It is based on a particular uplink signaling resource together with a new CSI combining mechanism. In this way, each AP can acquire the cross-term information over the air rather than via extensive fronthaul signaling.

The iterative implementation of the distributed precoding design via OTA signaling is illustrated in Fig.~\ref{fig:ota}.
To be specific, as a starting point, each AP initializes its precoding vector.
After that, each AP transmits a superposition of the pilots after precoding it with the corresponding precoding vector; each UE receives a downlink pilot signal ${\bf y}_{\rm DL}$ and computes its combining vector ${\bf v}_{\rm OTA}$ based on ${\bf y}_{\rm DL}$.
Then, each UE transmits its pilot combined after precoding it with its combining coefficient ${v}_{\rm OTA}$; each AP receives ${\bf y}_{\rm UL-1}$.
Besides, each UE transmits ${v}^{\rm H}_{\rm OTA}{\bf y}_{\rm DL}$ after precoding it with its combining coefficient ${v}_{\rm OTA}$; each AP receives ${\bf y}_{\rm UL-2}$.
Finally, each AP computes its precoding vector based on ${\bf y}_{\rm UL-1}$ and ${\bf y}_{\rm UL-2}$.
This procedure repeats until a predefined criterion is satisfied.
Unlike the conventional precoding design where the CSI among the APs is exchanged via fronthaul signaling, the aforementioned approach exchanges the CSI among the APs via the additional uplink OTA signaling resource (i.e., ${ v}_{\rm OTA}{ v}^{\rm H}_{\rm OTA}{\bf y}_{\rm DL}$), which advances in terms of scalability and flexibility.

\subsection{User Access and Association}\label{subsec:access}

When a UE is about to commence its communication, it first needs to access the network and then be assigned resources for the upcoming signal processing, such as pilot sequence, serving APs, etc.
In this subsection, we review the emerging schemes for the user access of the CF mMIMO communication network concerning AP selection, pilot assignment, user activity detection, and AP switch on/off strategies.

\subsubsection{AP Selection}

Compared to the cellular networks where each UE is only associated with one AP, CF systems require more fronthaul connections to transfer each UE's data to/from multiple APs, which leads to extra fronthaul provisioning and energy consumption.
However, to avoid substantial pilot contamination, each AP can only serve a limited number of UEs due to the pilot shortage.
For the above observations, the original design of CF mMIMO systems in \cite{[19]} wherein all UEs in the network are simultaneously served by all APs is unsuitable for the practical implementation of CF mMIMO.
Motivated by this, each UE should not be served by all APs, but a subset of selected APs, which is the so-called \emph{AP selection} and typically only serve at most UE per pilot.

There are two types AP selection schemes: large-scale-based scheme \cite{[11],[52]} and competition-based scheme \cite{[166]}.
In the former category, each UE $k$ selects $\left| {\cal M}_k \right| \le L$ dominant APs corresponding to the $\left| {\cal M}_k \right|$ largest large-scale fading coefficients, which satisfy
\begin{equation}\label{eq:}
  \sum\limits_{l \in {{\cal M}_k}} {\frac{{{{\bar \beta }_{kl}}}}{{\sum\nolimits_{j = 1}^L {{\beta _{kj}}} }}}  \ge \delta \% ,
\end{equation}
where $\left\{ {\bar \beta }_{kl} \right\}$ is the sorted (in descending order) version of the set $\left\{ { \beta }_{kl} \right\}$ and the threshold $\delta$ indicates the settled percentage of the total received power that signal APs contribute to each UE.
The large-scale-based selection scheme associates UEs and APs in a user-centric manner.
However, if an AP serves more than one UE per pilot, the signals from and to these pilot-sharing UEs will be strongly interfering, which is undesired.
Hence, an improved AP selection scheme could be designed from a UE perspective, but under constraints set by the APs' capabilities, e.g., an AP can only serve at most $\tau_p$ UEs.
For that purpose, a competition for an AP $l$ occurs when a new accessing UE $k$ attempts to select AP $l$ while AP $l$ already has $\tau_p$ UEs in ${\cal D}_l$.
The principle of the competition-based selection is that an AP $l$ gives priority to the UEs providing the best channel conditions.
Precisely, AP $l$ finds the ``weakest" UE
\begin{equation}\label{eq:}
k^* = \arg {\min _{i \in { \left\{{k}\right\} \cup {\cal D}_l }}}{\beta _{il}}.
\end{equation}
If $k^* = k$, UE $k$ puts AP $l$ into its \emph{blacklist}; otherwise, UE $k$ succeeds UE $k^*$ in ${\cal D}_l$, and UE $k^*$ puts AP $l$ into its blacklist likewise.
If a UE has $L-1$ APs on its blacklist, which means it has lost every competition it participated in. In that case, this UE will be associated with the only AP left and will not participate in another contest.
This operation prevents the weak UEs from being abandoned; at the end of the day, pilot-contamination issues can also be dealt with when designing the power allocation, so it is acceptable to occasionally let APs serve more than $\tau_p$ UEs and deal with the issue later.
The competition-based selection scheme allows the UEs to select as many serving APs as possible to make the best use of the APs service resources and meanwhile prevents an AP from serving more than $\tau_p$ UEs.
Both the described methods are heuristic, so there is room for improvements in the future.

\subsubsection{Pilot Assignment}

CSI is essential in multiple antenna systems, both cellular and CF.
It is usually acquired through pilot transmission between the UEs and APs.
However, the lack of a sufficient number of orthogonal pilot sequences, which comes from the natural channel variations in the time and frequency domain, compels the UEs to reuse the pilot resources, leading to pilot contamination.
This phenomenon reduces the channel estimation quality, making coherent transmission less effective and making it harder to reject interference between pilot-sharing UEs.
Thus, a properly designed pilot assignment algorithm is critical to ensure good performance in CF mMIMO systems.

Random assignment is a well-considered algorithm thanks to its simplicity \cite{[19],[103]}, wherein each UE is assigned a fixed pilot at random from the orthogonal pool and uses this pilot during the entire transmission.
This simple algorithm is not preferable since the neighboring UEs will occasionally use the same pilot and thus create strong mutual interference that is hard to suppress.
A step forward is the greedy algorithm which iteratively updates the pilot of the worst-performing UE after a random assignment \cite{[19],[55],[72]}. Greedy algorithms can converge to local optima but are unlikely to provide a globally optimal pilot assignment.
Another step is user-centric clustering \cite{[61]}, where the UEs are clustered into groups (joint or disjoint) based on the large-scale information, like large-scale coefficients, location of UEs and APs, and distance between UEs and APs.
The pilots are reused in/over these groups.
Several algorithms are considered in the clustering pilot assignment \cite{[191],[182],[192],[165]}.
Graph coloring is used in \cite{[191],[182]}, where the interference between UEs is modeled as a graph.
UEs (shown as vertices in the graph) are connected if at least an AP serves them.
Conventional graph coloring algorithms can be exploited to color the UEs with the fewest colors.
The final assignment is achieved by updating the interference graph.
To avoid being trapped in a local optimum, tabu search is used in \cite{[192]}, where the \emph{tabu list} records previous assignments to ensure the efficient search of the assignment solution space.
The authors of \cite{[165]} propose an iterative approach based on the Hungarian Algorithm.
In each iteration, each UE and its neighboring UEs are assigned with mutual orthogonal pilots by exploiting the Hungarian algorithm, given the pilot assignment of the rest of the UEs is fixed.
The final assignment is achieved when the performance measures reach convergence, or the iterations reach the allowed maximum number.

Although the aforementioned pilot assignment algorithms limit the pilot contamination to varying degrees, they might not be feasible for practical implementation since the complexity grows polynomially with the number of APs and UEs.
Therefore, provably scalable algorithms for pilot assignment have also been developed.

A joint AP selection and pilot assignment algorithm are proposed in \cite{[163]}, in which a UE first appoints its \emph{Master AP}.
Then the Master AP assigns the pilot with the least interference to this UE and informs a limited set of neighboring APs that it is about to serve this UE on the assigned pilot.
A neighboring AP decides to serve this UE on the assigned pilot or not based on its serving status.
This algorithm achieves scalability by providing each UE with the least bad pilot but performs no optimization for which pilot assignment is fairly straightforward.
The K-means algorithm is used for user clustering in \cite{[25],[166]}, where the UEs
are separated into disjoint clusters based on the knowledge of the location and interference relationship of UEs and APs, respectively.
The UEs in the same cluster are assigned with mutual orthogonal pilots.
Although the K-means algorithm separates the UE clusters as far as possible, it operates on the cluster level; or in other words, it dynamically divides the network into subareas, but it cannot prevent the neighboring UEs in different subareas from sharing the same pilot.
To solve this issue, the authors of \cite{[166]} propose a user-group algorithm operating in the UE level, where the interference relationship between the UEs is exploited to separate the UEs into disjoint groups iteratively.
The UEs in the same cluster share the same pilot.

\begin{figure}[t]
\centering
\includegraphics[scale=0.64]{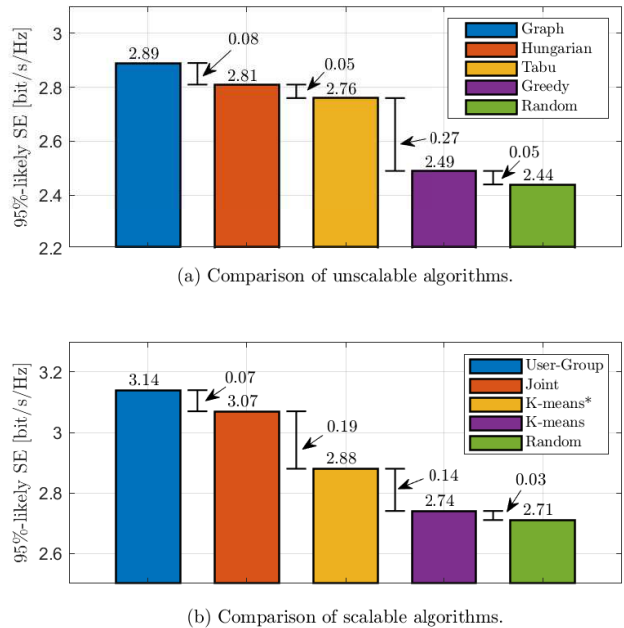}
\caption{95\%-likely SE with different pilot assignment algorithms. In abbreviation, the unscalable algorithms: Greedy \cite{[19]}, Tabu \cite{[192]}, Hungarian \cite{[165]}, and Graph \cite{[191]}. The scalable algorithms: Random \cite{[19]}, K-means \cite{[25]}, K-means* \cite{[166]}, Joint \cite{[163]}, and User-Group \cite{[166]}.
\label{fig:pilot assignment}}
\end{figure}

\begin{table*}[t]
  \centering
  \fontsize{9}{12}\selectfont
  \caption{Concise Description of Pilot Assignment Algorithms.}
  \label{tab:pilot assignment}
    \begin{tabular}{ !{\vrule width1.2pt}  m{2.3cm}<{\centering} !{\vrule width1.2pt}  m{7.8cm}<{\centering} !{\vrule width1.2pt}  m{4.3 cm}<{\centering} !{\vrule width1.2pt} }

    \Xhline{1.2pt}
        \rowcolor{gray!50} \bf Scheme  & \bf Concise Description  & \bf Online Complexity   \cr
    \Xhline{1.2pt}

         Random assignment \cite{[19]} & Each UEs is assigned a fixed pilot from the orthogonal. & ${\cal O}\left( K \right)$  \cr\hline

        Greedy \cite{[19],[55],[72]} & Start from a random assignment, iteratively update the pilot of the worst performing UE. & ${\cal O}\left( {KL} \right)$ \cite{[19]}  \cr\hline

        Dynamic Pilot Reuse \cite{[61]}  & Based on user-centric concept, two UEs with a large distance can share the same pilot. & ${\cal O}\left( {{{\mathbb N}{\mathbb T} } } K^2 \right)$ \cr\hline

        Graph Coloring \cite{[182],[191]}  & Interference is modeled as a graph and graph coloring algorithm is exploited to assign all UEs with fewest pilots. & ${\cal O}\left( {K^2 + KL + KL{{\log }_2}L} \right) $ \cite{[191]} \cr\hline

        Tabu-Search \cite{[192]}  & \emph{Tabu list} prevent the assignment from being trapped in the local optimum and ensure the efficient search of the assignment solution space. & ${\cal O}\left( {{N_{{\rm{tabu}}}}{K^2}L} \right)$ \cr\hline

        Hungarian Algorithm \cite{[165]}  & Each UE and its neighboring UEs are assigned with mutual orthogonal pilots by exploiting Hungarian Algorithm, given the pilot assignment of the rest UEs is fixed. & ${\cal O}\left( {K\left( {L + \tau _p^2} \right)} \right)$  \cr\hline

        Joint AP Selection and Pilot Assignment \cite{[163]}  & Each UE pinots its \emph{Master AP}, which assigns the pilot with least interference to this UE and informs the neighboring APs to cooperatively serve this UE with the assigned pilot. & ${\cal O}\left( {L + K} \right)$ \cr\hline

        K-Means \cite{[25],[166]} & K-means algorithm is used to separate the UE into disjoint clusters. The UEs in the same cluster are assigned with mutual orthogonal pilots. & ${\cal O}\left( {K^2/\tau_p + \tau_p^2 {\left\lceil K / \tau_p - 1 \right\rceil }} \right)$ \cite{[166]}  \cr\hline

        User-Group \cite{[166]} & Interference relationship between the UEs are exploited to iteratively separate the UEs into disjoint groups. The UEs in the same cluster share the same pilot. & ${\cal O}\left( { {K^2}L } \right)$  \cr

    \Xhline{1.2pt}
    \end{tabular}

  \vspace{0cm}
\end{table*}

A concise description of the pilot assignment algorithms is shown in Table \ref{tab:pilot assignment}, where the online complexity analysis is given.
Moreover, a comparison of the pilot assignment algorithms on the 95\%-likely SE is illustrated in Fig.~\ref{fig:pilot assignment} with the simulation setup of $L=100$, $K=50$, $\tau_c = 200$, $\tau_p =10$, and area of $0.5\times0.5$ km$^2$.
Although some algorithms outperform others in this given setup, the situation might change in another simulation setup.
But anyway, the algorithms with more complicated processing mechanisms and employing more information of the APs, UEs, and communication environment conditions will offer better performance.
Nevertheless, almost all these above algorithms are either unscalable or heuristic, which encourages us to look into the new scalable alternatives combined with the user-centric clustering approach for the optimized pilot assignment in future works.
Machine learning might offer a potential solution thanks to its powerful signal processing ability to deal with the highly loaded networks and the relatively low complexity of online computing and framework.

\subsubsection{User Activity Detection}

User activity detection is considered in highly crowded scenarios, like the Internet of Things (IoT) and the Internet of Everything (IoE).
One feature of these scenarios is that the devices only send small amounts of data and should be energy efficient. Thus the overhead due to random access and scheduling is extensive compared to the data;
the other is the sporadic nature of the transmission, i.e., only a relatively small fraction of UEs stays active with short-length payloads.
Both features make the grant-free access scheme a promising solution where the active UEs transmit their pilots and payloads simultaneously without scheduling in advance.
As the networks continue to densify, the user activity detection of the grant-free access is about to be a non-negligible problem.

Authors in \cite{[178]} formulate user activity detection as a maximum likelihood problem.
Based on coordinate descent, a detection algorithm with affordable complexity is provided.
Authors in \cite{[189]}, on the other hand, formulate the user activity detection as compressive sensing (CS) problem by exploiting the angular-domain sparsity of the CF mMIMO channels, where the OFDM technique is used for uplink transmission.
Both of the above algorithms employ non-orthogonal pilot sequences for user identification.

Achieving massive user signatures by sacrificing the orthogonality of the pilot sequences will reduce the channel estimation quality, which deteriorates the system SE performance.
How to efficiently juggle massive access and high spectral efficiency is an open issue in the considered highly crowded CF mMIMO systems.

\subsubsection{AP Switch On/Off}

With the growing demand for green communications, AP switch On/Off (ASO) strategy design is becoming a rasing topic in CF mMIMO networks. Some APs are dynamically turned On/Off based on the location and data traffic generated by the served UEs.
The rationale behind ASO in CF mMIMO is that a large number of APs are implemented in the network, and their neighboring APs could likely fill the SE requirement of the UEs.
The goal of ASO is efficiently exploiting some, not all, competent APs for serving the dynamic traffic load requests to improve the system energy efficiency (EE), which will be defined later, and decrease the carbon footprint.

Unlike the other works in CF mMIMO considering a static network where the APs are always active, and status of them is irrelative with when and where the traffic load requests come from, \cite{[168],[174],[179]} treat the status of the APs as an optimization variable to serve the UEs in a more efficiently fashion.
A globally optimal solution of ASO is provided in \cite{[168]} by solving a mixed-integer second-order cone program (SOCP).
Due to the high computing complexity of the non-convex optimization problem, two heuristic low-complexity algorithms are also proposed by utilizing the channel sparsity structure.
Authors in \cite{[174],[179]} develop a collection of heuristic AP switch On/Off algorithms based on the location and propagation losses between APs and UEs, namely Random selection ASO (RS-ASO), Chi-square test-based ASO (ChiS-ASO), Kolmogorov-Smirnov test-based ASO (KS-ASO), Logarithmic statistical energy ASO (LSE-ASO), Minimum propagation losses-aware ASO (MPL-ASO), Optimal EE-based greedy ASO (OG-ASO), etc.
Among them, the algorithms based on goodness-of-fit techniques (i.e., ChiS-ASO, KS-ASO, and LSE-ASO) significantly outperform RS-ASO by trying to match the spatial distribution of active APs to one of the UEs.
OG-ASO offers the best SE performance by wielding the knowledge of spatial correlation matrices, power control matrices, power consumption metrics.
However, MPL-ASO makes a good tradeoff between SE performance and complexity by exploiting the large-scale fading coefficients between the APs and UEs with a minor performance penalty.

\subsection{Power Control and Power Allocation}

To obtain good system performance, the available radio resources must be efficiently managed.
To be specific, the $K$ UEs must select  appropriate transmit powers $0 \le p_k \le p_{\rm max}$, $k = 1,\ldots,K$ during uplink transmission, while the $L$ APs must allocate their transmit power $0 \le \rho_l \le \rho_{\rm max}$, $l = 1,\ldots,L$ during downlink transmission.
The procedure of controlling the uplink transmit powers is called \emph{power control}, while the procedure of allocating the downlink transmit power between UEs is called \emph{power allocation}.
Ideally, the power control/allocation should be carried out to optimize some system-wide utility function, describing the system performance as a whole.
Since the utility function determines the structure of the optimization problem and thereby which approaches can be applied to solve it, we survey the state-of-the-art power control/allocation algorithms for solving the three most common types of utility optimization problems: max-min fairness, max sum SE, and max EE.

\subsubsection{Max-Min Fairness}\label{subsubsec:maxmin}

The importance of the max-min fairness utility was emphasized in the early works \cite{[18],[19]}, where the vision of CF mMIMO was to provide uniformly good service over the entire coverage area.
The goal is to maximize the lowest SE among all the UEs in the network, which leads to uniform service, while the channel conditions will determine how good that service quality is.
Since the SE of UE $k$ is an increasing function of the effective SINR (see \eqref{eq:SE ul 2} and \eqref{eq:SE dl}), maximizing the lowest SE is equivalent to maximizing the lowest effective SINR among all the UEs.

There are several instances of the max-min fairness problem that can be shown to be convex or quasi-convex. Thus the optimal solution can be obtained by exploiting bisection search, and convex optimization \cite{[140]}, geometric programming (GP), or SOCP \cite{[18],[19],[238]}. The SOCP formulation dates back to \cite{Bengtsson2001a}.
The authors in \cite{[140]} formulated a weighted max-min power optimization problem in a multigroup multicast CF mMIMO system that could be cast as a quasi-concave problem via a quadratic convex transformation.
Moreover, \cite{[238]} developed an optimum downlink beamforming method in CF mMIMO systems by solving a max-min problem that maximizes the minimum SINR among all UEs.

There are instances of the max-min fairness problem that are non-convex, in which case one can sometimes find a local optimum by \emph{alternating optimization}, which partitions the optimization variables into several sets and cyclically optimizes one at a time while keeping the
other variable sets fixed \cite{[12],[172]}.
In this way, the original problem can be effectively decomposed into several subproblems, which could be convex and then be cyclically solved by exploiting the approaches applied for the convex cases (e.g., bisection search, GP \cite{[82],[28],[83],[26]}, or SOCP).
For example, \cite{[28]} considered a mixed quality-of-service (QoS) problem, where the minimum SE of non-real-time UEs is maximized while the rates of the real-time UEs meet their target rates.
The original non-convex problem was decomposed into two sub-problems wherein the GP was exploited to solve the power allocation problem.
The same approach was also applied in \cite{[82],[83],[26]} for solving the max-min fairness problem with power constraints.


Although the aforementioned algorithms optimize the transmit
powers for all UEs to maximize the lowest SE in a system-wide manner, it is unavoidable that their computational complexities grow unboundedly with the number of the UEs, $K$, which makes these algorithms unscalable according to the definition in Section \ref{subsec:scalable}.
Hence, distributed and heuristic schemes are needed to obtain the scalable power control in large, practically implementable CF networks. Each device makes a local decision with limited involvement of the other devices.

Fractional power control is a classical heuristic scheme in uplink multiuser systems.
The principle of fractional power control is controlling the UE transmit power to compensate for a fraction of the pathloss differences among the UEs that are partially served by the same APs, where UE $k$ selects its uplink to transmit power as \cite{[119],nikbakht2020uplink,[166],cellfreebook}
\begin{equation}\label{eq:fractional ul}
{p_k} = \frac{{{{\min }_{i\in {\cal S}_k}}{{\left( {\sum\nolimits_{l \in {{\cal M}_i}} {{\beta _{il}}} } \right)}^\nu }}}{{{{\left( {\sum\nolimits_{l \in {{\cal M}_k}} {{\beta _{kl}}} } \right)}^\nu }}}{p _{\rm max}}
\end{equation}
where the exponent $\nu \in [0,1]$ dictates the power control behavior.
The nominator in \eqref{eq:fractional ul} forces $p_k \in [0,p_{\rm max}]$.
Note that ${ {\sum\nolimits_{l \in {{\cal M}_k}} {{\beta _{kl}}} }  }$ denotes the total channel gain from UE $k$ to the APs that serve it, of which value is large when UE $k$ is in good channel condition.
A larger value of $\nu$ encourages each UE to compensate for the variations in the total channel gain among the
UEs in $i \in {\cal S}_k$, which promotes more fairness.
If $\nu = 0$, then all UEs transmit with maximum power (i.e., $p_k =p_{\rm max}$  ), which is the so-called \emph{equal power allocation} or \emph{full power transmission}.

Fraction power allocation can be used in the downlink, in which case AP $l$ selects the downlink power allocation coefficient for UE $k$ proportionally to the channel gain, $\beta_{kl}$, as \cite{[163],cellfreebook}
\begin{equation}\label{eq:fractional dl}
{\rho _{kl}} = {\begin{cases}
{\frac{{ ({{\beta _{kl}}})^\upsilon }}{{\sum\nolimits_{i \in {{\cal D}_l}} { ({{\beta _{il}}})^\upsilon } }}{\rho _{\rm max}}}&{{\rm if\ } k\in{\cal D}_l}\\
{0}&{\rm otherwise},
\end{cases}}
\end{equation}
where the exponent $\upsilon \in [0,1]$ dictates the power control behavior.
A larger value of $\upsilon$ gives higher emphasis to the UEs according to their respective channel gains while $\upsilon = 0$ indicates that each UE in ${\cal D}_l$ is allocated with equal power ${\rho _{kl}} = \frac{{\rho _{\rm max}}}{|{\cal D}_l|}$.
This leads to allocating more power to the UEs in better channel conditions, which seems to be contrary to the max-min fairness concept. Still, it is generally not since the UEs in good channel conditions will then avoid high interference.
Since the essence of fraction power control/allocation is to impose a structure with a parameter ($\nu$ or $\upsilon$) that can be tuned, we can also apply it to other utility functions and adjust the parameter to identify a suitable heuristic solution.

Apart from traditional optimization and heuristic methods, ML can be utilized to design power control/allocation methods.
Such an approach cannot provide a better solution than the one found by classical optimization methods, but it could potentially lower the computational performance \cite{bjornson2020two}.
For instance, one can greatly reduce the online computational complexity at the price of offline training \cite{[236]}.
There are ML-based schemes proposed to solve the max-min fairness problem in CF mMIMO \cite{[91],[236],[ML3]}.
Moreover, \cite{[91]} proposed to approximately solve the max-min fairness problem using local CSI by training a neural network to identify a mapping between that local CSI and the optimal solution to the system-wide max-min fairness problem.
An unsupervised learning approach was studied in \cite{[ML3]}, which took the large-scale fading coefficients as inputs to learning the map between these coefficients and soft max-min and max-prod power control policies.

\subsubsection{Max Sum SE}

A potential drawback of the max-min SE fairness problem is that a few UEs might drag down the overall system performance with bad channel conditions.
The overwhelming majority of UEs in a large network can likely achieve substantially larger SEs while barely affecting the UEs in the worst conditions since every UE only causes interference to a small subset of neighboring UEs.
This motivates the maximization of the sum SE, which represents the overall SE performance of the network instead of the SE achieved by a specific UE.

The max sum SE problem is usually not convex (see \eqref{eq:SE ul 2} and \eqref{eq:SE dl}); hence, it is hard to obtain the optimal solution, and we typically need to settle for a local optimum.
The aforementioned alternating optimization can be applied for addressing the non-convexity. The classical weighted MMSE method can be utilized to optimize the data powers \cite{cellfreebook}, when the pilot powers are fixed.
The problem of joint data and pilot transmit power control was considered in \cite{[66]} using the Lagrange multiplier method.
Another effective approach is exploiting the \emph{successive convex approximation (SCA)}, which is an iterative algorithm to maximize sum SE by employing convex optimization where the non-convex term is substituted by its convex approximation \cite{[54],[96],[229],[70],[144]}.
A max sum SE optimization problem for a downlink setup with hardware impairment was considered in \cite{[229]}.
The problem was non-convex and, thus, the SCA policy was employed by reformulating the original problem as a SOCP.
The same approach was also applied in \cite{[70]} and \cite{[144]} when short-term power constraints and low-resolution analog-to-digital converters (ADCs) were considered, respectively.

Since the aforementioned methods are searching for a local optimum, ML-based schemes can potentially find better solutions \cite{[76],[ML4],[ML5]}.
{A max sum SE problem in an uplink CF mMIMO system} was studied in \cite{[76]} using artificial neural networks (ANNs), in which the UE positions were taken as input and the power control policy as output.
In \cite{[ML4]}, a deep convolutional neural network (DCNN) was considered in an uplink CF mMIMO system with limited-fronthaul, where the LSF information was exploited to predict the max sum SE power control policy.
A deep neural network-based power control method was proposed in  \cite{[ML5]}. 

\begin{table*}[tp]
  \centering
  \fontsize{9}{12}\selectfont
  \caption{Power Control/Allocation Approaches.}
  \label{tab:power optimization}
    \begin{tabular}{ !{\vrule width1.2pt}  m{2.5cm}<{\centering} !{\vrule width1.2pt}  m{4 cm}<{\centering} !{\vrule width1.2pt} m{4cm}<{\centering} !{\vrule width1.2pt}  m{3.5 cm}<{\centering} !{\vrule width1.2pt}}
    \Xhline{1.2pt}
        \rowcolor{gray!50} \bf Utility Function & \bf Max-min Fairness & \bf Max sum SE & \bf Max EE  \\
    \Xhline{1.2pt}
    Alternative Optimization & \cite{[82],[12],[172],[28],[83],[26]}     & \cite{[66]}     & \cite{[74]} \\\hline
    SCA   & --     & \cite{[54],[96],[229],[70],[144]}     & \cite{[21],[86]} \\\hline
    Bisection & \cite{[140]}     & --     & -- \\\hline
    GP    & \cite{[82],[28],[83],[26]}     & --     & \cite{[86]} \\\hline
    SOCP  & \cite{[238]}     & \cite{[229],[70],[144]}     & \cite{[21]} \\\hline
    Fractional & \multicolumn{3}{c|}{{\cite{[163],[119],[166],nikbakht2020uplink}}}  \\\hline
    Others & --     & \cite{[66]}     & \cite{[128]} \\\hline
    ML-based & \cite{[236],[91],[ML3]}     & \cite{[76],[ML4],[ML5]}     & -- \\\hline

    \Xhline{1.2pt}
    \end{tabular}
  \vspace{0cm}
\end{table*}

\subsubsection{Max EE}

When designing a large CF mMIMO network, the energy efficiency (EE) is another essential performance metric to consider, which indicates not only ``how much and fast" the information can be transmitted, but also ``how economically" in terms of the energy, i.e., how much energy it takes to reliably transmit a certain amount of information \cite{[21],[53],[69]}.
Technically, the EE is defined as \cite{bjornson2017massive}
\begin{equation}\label{eq:ee}
 {\sf{EE}} = \frac{B \cdot \sum\nolimits_{k = 1}^K {{\sf{SE}}_k}}{{P_{\rm total}}}
\end{equation}
where $B$ is the system bandwidth and ${P_{\rm total}}$ is the total power consumption which usually includes four main terms: the transmit powers, a term accounting for the analog processing the transceiver chains, a term accounting for the digital signal processing, and a term for the fronthaul connections.

The max EE problem in \eqref{eq:ee} is non-convex, thus the alternating optimization \cite{[74]} and SCA \cite{[21],[86]} methods are normally used to solve the problem.
The EE maximization problem in a mmWave CF mMIMO system was considered in \cite{[74]}.
Since it is a non-convex problem, the successive power-bound maximization method, which is based on the idea of merging the alternating optimization and sequential convex programming, is used for alternatively optimizing the transmit powers of each AP while keeping the transmit powers of the other APs fixed.
Moreover, \cite{[86]} studied the EE maximization problem with quantization, considering per-UE power, fronthaul capacity, and throughput requirement constraints.
To solve the non-convex problem, the original SCA problem was decoupled into two sub-problems, namely, receiver filter coefficient design and power control.
The former was formulated as a generalized eigenvalue problem, while the GP addressed the latter after exploiting an SCA and a heuristic sub-optimal scheme.
The SCA method was also exploited in \cite{[21]} to maximize the total EE under the per-AP and per-UE power constraints.
The problem was approximately solved via a sequence of SOCP.
Although the second-order optimization methods have performed very well, their complexities do not scale favorably with the network size, which motivates first-order methods.
Finally, \cite{[128]} proposed a first-order method for non-convex programming to the EE maximization problem.
This first-order method could achieve the same performance with a faster run time than the second-order methods.

In Table \ref{tab:power optimization}, we summarize the optimization utilities, power optimization methods, and the main literature.

\section{Practical implementation}\label{sec:practical}

Although many sophisticated schemes and algorithms have been designed for the CF mMIMO systems, they are mainly based on simplifying assumptions such as error-free fronthaul connections and perfect hardware, which are unlikely to hold in practical deployment.
Hence, in this section, we will discuss the practical issues that a CF mMIMO system has to face to implement (i.e., fronthaul limitation and hardware impairment) and survey the existing schemes to address them.
Although synchronization is also an essential issue for practical implementation in CF mMIMO systems, little work has been done on this topic. Hence, we will discuss this part in the Future Research Directions.

\subsection{Fronthaul} \label{subsec:fronthaul}
One of the main issues for CF mMIMO systems is the limited capacity of the fronthaul links from the APs to the CPU \cite{interdonato2019ubiquitous,[88],[164]}. {Due to a large number of antennas at the APs, a large number of signals should be exchanged between APs and the CPU through the fronthaul links and hence cause huge power consumption (e.g., 0.25 W/(Gbits/s) when using optical fiber cables \cite{[53]}).} Besides, when converted to digital form, it requires a huge capacity for the fronthaul links many times the corresponding user data rate in the uplink to ensure signals are transferred with sufficient precision \cite{bashar2020exploiting}. In the C-RAN literature, this has been estimated as 20-50 times the corresponding data rate, implemented using the common public radio interface (CPRI) standard \cite{ericsson2015common}, typically over optical fiber. Therefore, reducing the fronthaul load constitutes one of the most substantial challenges in practical CF mMIMO systems \cite{interdonato2019ubiquitous,bashar2019uplink,burr2018cooperative,bashar2018performance}.

There are two approaches to address the capacity-limited fronthaul issue: (1) quantizing the transmit signals; and (2) using structured lattice codes.

\subsubsection{Quantizing}

Using a small number of bits to quantize the transmit signals is feasible to reduce the fronthaul load. Therefore, employing low-resolution ADCs at APs is a promising and practical solution, facilitating low power consumption and small hardware cost. Depending on how the APs process and forward the signals to the CPU, there are four main types of transmission in the uplink:

{\bf Compress-forward-estimate (CFE) \cite{masoumi2019performance,masoumi2019transmission,bashar2020exploiting,bashar2020deep,bashar2018performance,femenias2020fronthaul,bashar2018cell}} Each AP compresses the received pilot and data signals separately and forwards the compressed versions over the fronthaul link to the CPU. Then, the channel estimation, the design of combining vectors, and the data recovery are carried out at the CPU. This is a way to implement centralized combining methods over a limited fronthaul network by compression at the APs.

{\bf Estimate-compress-forward (ECF) \cite{masoumi2019performance,masoumi2019transmission}} First, the channel estimation is performed at each AP. Each AP separately compresses the estimated channels and data signals and forwards the former to the CPU. Finally, the CPU recovers the CSI and performs data detection using centralized combining. Since the compression is implemented at the APs, ECF reduces the fronthaul load in a distributed fashion.

{\bf Estimate-multiply-compress-forward (EMCF) \cite{bashar2019energy}} Each AP first estimates the channels, then multiplies the received data signal by the local combining vector (computed based on the local channel estimate) and compresses and forwards the results to the CPU. Thus, CPU only performs data detection. Therefore, when using EMCF, the design of combining vectors and the compression is implemented at the APs in a distributed way.

{\bf Estimate--multiply-compress-forward-weight (EMCFW) \cite{bashar2019energy,femenias2020fronthaul,bashar2018cell}} Similar to EMCF, the signal is further multiplied by receiver filter coefficients at the CPU to improve the performance. The design of combining vectors and the compression is also implemented in a distributed way.

When it comes to the downlink, two approaches have been considered:

{\bf Compress-after-precoding (CAP) \cite{femenias2019reduced,femenias2019cell,parida2018downlink}} The centralized signal with precoding is first computed and then compressed at the CPU before being sent to the APs, which makes CAP suitable for the centralized precoding.

{\bf Precoding-after-compress (PAC) \cite{boroujerdi2019cell}} A simple compression is done at the CPU where the symbol for each of the UEs is separately quantized. Then, each AP receives the symbols and designs the precoding vectors for each UE, which makes PAC suitable for the distributed precoding.

\begin{figure}[t]
\centering
\includegraphics[scale=0.6]{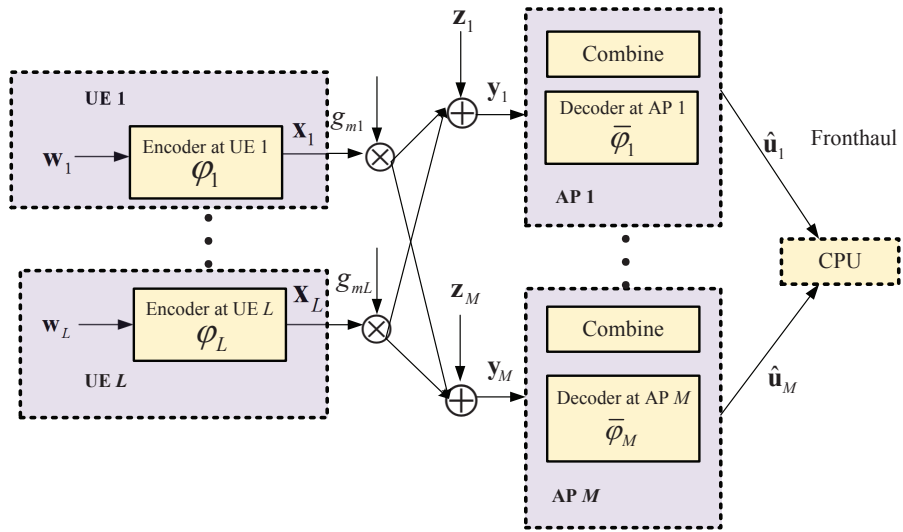}
\caption{A C\&F framework for CF mMIMO systems.}
\label{fig:ecf}
\end{figure}

\subsubsection{Compute-and-Forward}

Another possible solution could be the compute-and-forward (C\&F) schemes \cite{huang2017compute,zhang2019expanded}, which can reduce the fronthaul load efficiently by decreasing the cardinality of symbols transmitted to the CPU. As shown in Fig.~\ref{fig:ecf}, in the C\&F scheme, each AP forwards an integer-linear combination of the transmitted signals of all UEs with the same cardinality as each UE's signal. The served UEs can be determined by selecting the coefficient vector and are limited by the computation rate. Although the C\&F scheme can obtain the theoretical minimum fronthaul requirement of CF mMIMO with achieving lossless transmission, it is only suitable for the symmetric scenario where all UEs transmit with equal power. However, in CF mMIMO, the UEs can be allocated with unequal transmit power to compensate for pathloss differences. Hence, the expanded compute-and-forward (E-C\&F) scheme \cite{nazer2016expanding}, which is versatile to distribute UE's power unequally and can tolerate different noise at the targeted AP for ensuring the minimum loss in the computation rate, offers significant performance improvement over the C\&F scheme. Fig.~\ref{fig:e cf} shows the CDF of the achievable sum-rate obtained via C\&F and E-C\&F schemes versus the number of APs with $K=8$, $L=20$, and ${p=200}$ mW. The E-C\&F scheme outperforms the conventional C\&F scheme in terms of achievable sum-rate noticeably since the E-C\&F framework enables optimal transmit power of UEs which facilitates the exploitation of performance gain.

\begin{figure}[t]
\centering
\includegraphics[scale=0.64]{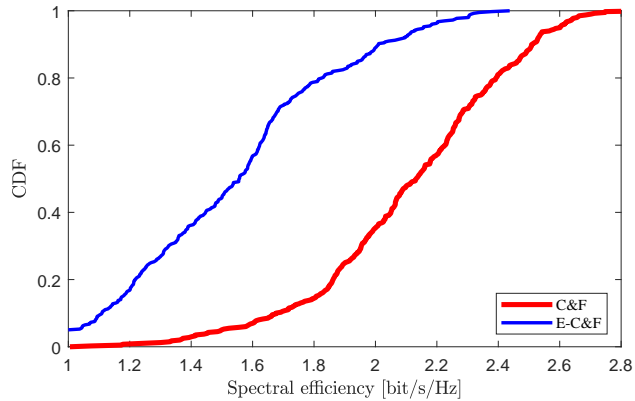}
\caption{CDF of the achievable sum-rate with {C\&F \cite{zhang2019expanded}} and {E-C\&F \cite{nazer2016expanding}} schemes.}
\label{fig:e cf}
\end{figure}

\begin{figure}[t]
\centering
\includegraphics[scale=0.64]{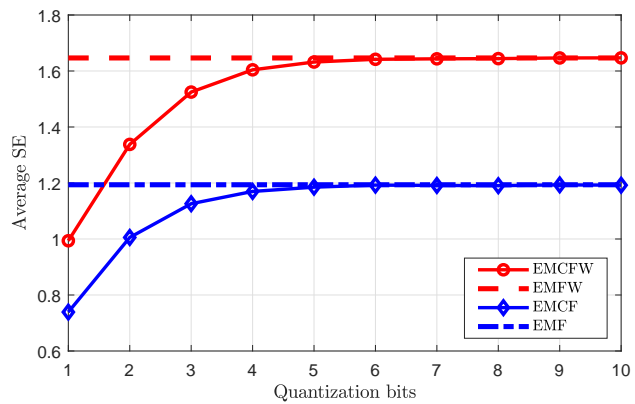}
\caption{Average SE per UE versus the number of quantization bits with {EMCF \cite{bashar2019energy}}, EMF, {EMCFW \cite{femenias2020fronthaul}}, and EMFW schemes.}
\label{fig:quantization}
\end{figure}

Although quantizing is an efficient and straightforward solution to reduce the fronthaul load, the uniform quantization method can induce the quantization error and achieves a considerable performance loss. Besides, it is shown that performances of the different types of transmission are not the same because they require different fronthaul rate allocations for CSI and/or data signals transmitted to CPU, and their AP signal processing capabilities are different.

{To compare the six transmission strategies by highlighting the required processing at APs and the CPU, Table \ref{tab:quantization} is provided.} Besides, in Fig.~\ref{fig:quantization}, we plot the average SE with MR combining at the APs, as a function of the number of the quantization bits for EMCF, EMF, EMCFW, and EMFW, where EMF and EMFW refer to the case of using perfect ADCs in EMCF and EMCFW, respectively, and hence no compression is done at the APs. It can be seen from Fig.~\ref{fig:quantization} that EMCFW provides a larger SE than EMCF. This can be explained that the optimized receiver filter coefficients step in EMCFW maximizes the SNR. Besides, via numerical results in \cite{bashar2019energy,femenias2020fronthaul}, it is shown that the EMCF can outperform the other schemes when they apply UatF bounding (not the CSI-based ones), and ECF strategy also outperforms the CFE. Thus, to decrease the deployment cost of the network, dummy APs which only compress and forward the received signals are preferred at the cost of performance loss. So, there is a tradeoff between performance gain and implementation costs which must be taken into account.

\begin{table*}[t!]
  \centering
  \fontsize{9}{12}\selectfont
  \caption{Comparison of the Quantization Schemes.}
  \label{tab:quantization}
    \begin{tabular}{ !{\vrule width1.2pt}  m{1.8cm}<{\centering} !{\vrule width1.2pt}  m{3.5cm}<{\centering} !{\vrule width1.2pt}  m{3.5cm}<{\centering} !{\vrule width1.2pt} m{3cm}<{\centering} !{\vrule width1.2pt}  m{1.8cm}<{\centering}  !{\vrule width1.2pt} }

    \Xhline{1.2pt}
        \rowcolor{gray!50} \bf Scheme  &  \bf Processing at the APs &  \bf Processing at the CPU & \bf Comb./prec. Design & \bf Compression \cr
    \Xhline{1.2pt}

        CFE \cite{masoumi2019performance,masoumi2019transmission,bashar2020exploiting,bashar2020deep,bashar2018performance,femenias2020fronthaul,bashar2018cell}&  \makecell[c]{CSI compression \\ Data compression} & \makecell[c]{Channel estimation \\ Combining design} & Centralized & Distributed \cr\hline

        ECF \cite{masoumi2019performance,masoumi2019transmission}&  \makecell[c]{Channel estimation \\CSI compression \\ Data compression} & \makecell[c]{ Combining design} & Centralized & Distributed \cr\hline

        EMCF \cite{bashar2019energy}&  \makecell[c]{Channel estimation \\Combining design} & -- & Distributed & Distributed \cr\hline

        EMCFW \cite{bashar2019energy,femenias2020fronthaul,bashar2018cell}&  \makecell[c]{Channel estimation \\Combining design} & \makecell[c]{Receiver filter design} & Distributed & Distributed \cr\hline

        CAP \cite{femenias2019reduced,femenias2019cell,parida2018downlink} &  \makecell[c]{Precoding design \\ Compression} & -- & Centralized & Centralized \cr\hline

        PAC \cite{boroujerdi2019cell} &  \makecell[c]{Compression} & \makecell[c]{Precoding design} & Distributed & Distributed \cr

    \Xhline{1.2pt}
    \end{tabular}
  \vspace{0cm}
\end{table*}

\subsection{Hardware Impairments}\label{subsec:}

\begin{figure}[t]
\centering
\includegraphics[scale=1]{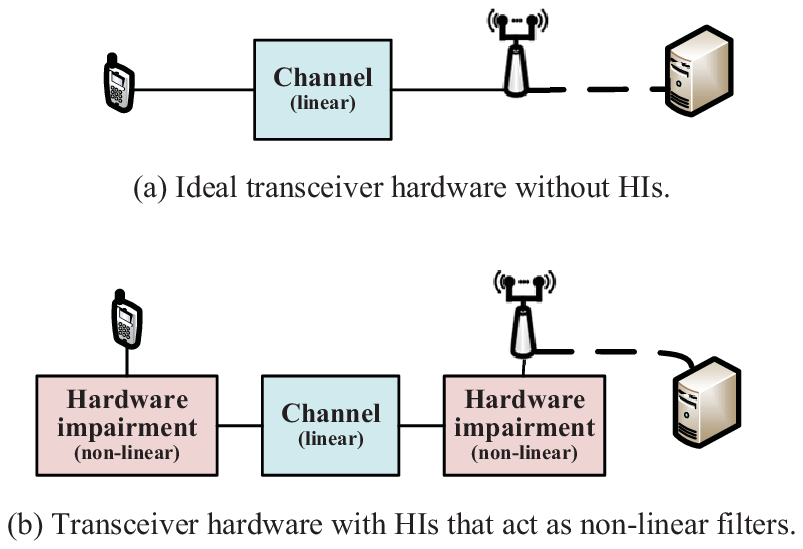}
\caption{A generalized CF mMIMO system with/without HIs.
\label{fig:hardware impairment}}
\end{figure}

To enable a ubiquitous deployment of CF mMIMO, a tradeoff between cost and quality of the transceiver hardware in CF mMIMO should be considered since many antenna elements in APs are deployed, which might increase the deployment cost and energy consumption. A possible countermeasure is to make use of compact low-cost components, which  introduce power amplifier non-linearities, phase noise in local oscillators, amplitude/phase
imbalance in I/Q mixers, and finite-resolution quantization in ADCs.
All these non-idealities are referred to as \emph{hardware impairments} (HIs).
Most existing works on CF mMIMO neglect the impacts of the HIs by modeling the wireless communication channels as linear filters, as shown in Fig.~\ref{fig:hardware impairment}(a).
Although this ideal model can be used to devise analog or digital compensation algorithms that could substantially mitigate the impacts of the HIs, the residual HIs will still exist due to modeling inaccuracies and the destructive nature of some HIs.
Instead, the non-ideal transceiver hardware can be modeled as non-linear memoryless filters to provide better insights for the practical implementation, as shown in Fig.~\ref{fig:hardware impairment}(b).

Research on the impact that residual HIs have on the data rate performance of the CF mMIMO systems has been made in \cite{[69],[237],[102],[195],[225], Bjornson2015b}.
To be specific, authors in \cite{[69]} quantified the performance of both uplink and downlink CF mMIMO with the classical additive hardware distortion model.
By exploiting the MR processing, SE and EE were derived in closed-form.
With these tractable expressions, hardware-quality scaling laws were presented, which proved that the detrimental effect of the HIs at the APs vanishes as the number of APs increases.
Moreover, a max-min power control algorithm was proposed to maximize the minimum UE data rate.
Mitigating the impacts of the HIs by using different levels of AP cooperation was investigated in \cite{[237]}, where the APs could perform data decoding fully distributively or by exploiting LSFD.
The results revealed that the LSFD could provide the largest SE under the HIs.
\cite{[102],[195]} focused on the impact of employing low-resolution ADCs on the CF mMIMO.
A simple asymptotic approximation for the achievable rate was derived by considering the effects of AP and UE number, antenna number per AP, and ADC resolution.
The results showed that the achievable rate of an arbitrary UE converges to a finite limit, independent of the ADC resolution of the APs, as the number of APs goes to infinity.
Additionally, an ADC resolution bits allocation scheme was proposed to maximize the sum rate given a fixed total ADC resolution bits.
Authors in \cite{[225]} considered the physical layer security in a CF mMIMO system with HIs, where a lower bound for the ergodic secrecy rate in the presence of pilot spoofing attack and imperfect CSI is derived.
Moreover, an optimal power allocation scheme was obtained to maximize the achievable secrecy rate using the continuous approximation and path-following algorithms.
Analytical results revealed that the hardware-quality scaling law is almost inapplicable for secure transmission in CF mMIMO system except for some particular scaling factors.
An article that predates the CF mMIMO area but applies to the same scenario is \cite{Bjornson2015b}, which assessed the impact of HIs on scalable CF mMIMO systems by also considering the effect of phase noise.

\section{Future Research Directions}\label{sec:future}

In this section, we briefly highlight some major open problems and research challenges to be addressed in future work.

\subsection{Multiple CPUs and Practical Fronthaul Topology}

Much of the algorithmic design for CF mMIMO has been developed to be transparent to the topology of the underlying network architecture \cite{cellfreebook}, to make it applicable when having one or multiple CPUs, and having parallel or sequential fronthaul connections. In the canonical case, the network comprises many distributed APs with independent cables to the single CPU, also known as a star topology. Still, it is unlikely that geographically large networks will be deployed in that manner.
There might be multiple CPUs that are connected to disjoint subsets of the APs \cite{[104]}, thus if two APs that belong to different CPUs are cooperating, the fronthaul signaling will have to involve multiple APs.
Moreover, the radio-stripes topology has been proposed in \cite{[103]}, where a set of APs are deployed along a fronthaul cable and, thus, have a sequential connection to the CPU. This design is motivated by the practical need for limiting the total cable lengths and opens up research questions related to how the communication algorithms can be adapted to exploit the finer details of the fronthaul topology. For example, some centralized processing schemes can be implemented sequentially \cite{Shaik2021a}.
There are many open research challenges related to distributing the signal processing over multiple CPUs and adapting the algorithms to exploit the structure.

Each fronthaul connection will have a limited capacity. The leading theory for CF mMIMO has been developed under the assumption of infinite fronthaul connections, but with a general awareness that one needs to limit the number of signals that are transmitted between APs and CPUs to achieve scalability \cite{cellfreebook}.
When dealing with a practical capacity-limited fronthaul, one must consider the tradeoff between precision and the number of conveying signals.
As discussed in Section~\ref{subsec:fronthaul}, it then matters where the processing is done: Signals that are measured at the APs can either be processed there at full precision or elsewhere with reduced precision. When it comes to signal compression for fronthaul signaling, one can either consider model-aided or data-driven methods, where the latter can make use of autoencoder methodologies.


\subsection{Synchronization}

Coherent signal processing is possible only if the APs maintain a sufficiently accurate relative timing and phase synchronization.
The network might have an absolute time (phase) reference, but the APs are unsynchronized.
Therefore, AP synchronization and TDD reciprocity calibration are two critical problems to enable CF mMIMO.
Suppose each AP has a local oscillator, and the wired fronthaul network cannot provide a sufficiently accurate common time and frequency reference. In that case, such synchronization must occur via OTA signaling \cite{[103]}. AirSync, which provides timing and phase synchronization accuracy, has been implemented in distributed mMIMO \cite{[sync1]}. Specifically, it detects the slot boundary such that all APs are time-synchronous within a cyclic prefix (CP) of the OFDM modulation and predicts the instantaneous carrier phase correction along with the transmit slot such that all transmitters maintain their coherence.
To limit the reciprocity and synchronization errors, a synchronization process needs to be applied at regular intervals. High-precision inter-node clock synchronization is a prerequisite for joint processing of distributed mMIMO \cite{[sync2]}. All radio access unit (RAU) clocks in the system are assigned by the master node through IEEE 1588 PTPv2. In practice, a global position system (GPS) can also be used for more precise synchronization.
More generally, the communication theory that underpins CF mMIMO assumes a perfect timing synchronization, which is physically impossible over a large network, even if the clocks are synchronized. Hence, there is room for theoretical advancements as well.

\subsection{Mobile Edge Computing}

{The previous subsections focused on} where they carry out the lower-layer processing in a communication network.
A related concept is mobile edge computing (MEC), where the computation/storage resources of the higher layers in the network are pushed to the edge to alleviate the burden of core networks \cite{xu2019modeling,chen2021wireless}.
MEC will naturally reduce the latency since processing is moved closer to the UEs and use general-purpose cloud computing hardware that can be co-located with CPUs.
The user-edge-cloud architecture conceived for MEC perfectly matches the UE-AP-CPU architecture of CF mMIMO, making MEC and CF mMIMO a perfect fit \cite{[197],ke2020massive}.
Edge nodes (APs and CPUs) equipped with computation/storage capability could deal with UEs' computation
and content requests, and consequently, reduce the transmission delay and requirement of the fronthaul/backhaul connection capacities. The research into this direction is in its infancy.

\subsection{Enabling Federated Learning}

Apart from using downlink to improve the channel estimation or resource allocation in the CF mMIMO system, as described earlier in this survey, a wireless network can also be part of the infrastructure used when implementing machine learning algorithms for other applications.
The federated learning (FL) concept can facilitate collaborative learning processes of complex models among the distributed devices, keeping their local training data and control privacy.
The signals sent from the UEs over the wireless network are suggested local model updates (i.e., the model's weights), aggregated at the core of the network, where global model updates are determined and broadcasted to the UEs.
The distributed processing manner of FL naturally fits CF mMIMO, which makes CF mMIMO an enabler of FL \cite{[218],vu2020user}.

\subsection{Multi-Antenna UEs}

The main theory for CF mMIMO has been developed for single-antenna UEs, even though contemporary UEs have at least two antennas, and future devices will feature even larger arrays when operating in the mmWave bands. First steps towards considering CF mMIMO with multi-antenna UEs are found in \cite{Li2016b,[113],[194]}. In general, the multiple antennas can either be used for spatial multiplexing of multiple streams per UE (up to one per antenna) or for improved precoding/combining that mitigates interference \cite{Bjornson2013b}. While the achievable SE can be quantified using existing methods, there are many open resource allocation questions related to power allocation, pilot assignment, and precoding/combining design.

\subsection{Channel Estimation and Prediction Beyond the Block Fading Model}
In practice, the wireless channels vary continuously over time and frequency, not in the block-fading manner described in this survey and most of the theoretical works on this topic. On the one hand, the coherence block size can always be dimensioned in a conservative manner such that the channels are indeed approximately constant within each block. On the other hand, underlying physical rules dictate how the channel can evolve and frequently. By exploiting such properties, using model-aided or data-driven approaches \cite{jiang2020deep}, the communication performance can be significantly improved: pilots can be transmitted less frequently, and/or the CSI quality can be increased. Prior work on this topic has been done in the cellular mMIMO field \cite{kim2020massive}, which potentially can be adapted to cover CF mMIMO.

\subsection{Integrated Sensing and Computing}
In future wireless communications, integrated sensing and communication (ISAC) will be a paradigm change. Some promising ISAC-like dual-functional radar-communication (DFRC) system has attracted substantial attention, where joint radar sensing and multi-user communication can be simultaneously implemented \cite{liu2018mu,liu2020joint}. And it is also interesting to investigate a DFRC-based system with CF mMIMO for the feature of uniform coverage. However, signal processing and system design will be the key challenges. Besides, the great demand for computation is also an issue to address. Therefore, the DFRC system based on scalable CF mMIMO will be investigated in the future.

\section{Conclusion and Lessons learned}\label{sec:conclusion}

Exploiting densification and decentralization to boost the user-experienced data rates and realize a ubiquitous service is an irresistible general trend for future wireless communications.
CF mMIMO represents an attempt to reach this promising prospect by coordinating dense serving antennas in a decentralized CF approach, which greatly squeezes the potential of the multiple antenna technology so that it dynamically achieves the best performance with the available resources.
In particular, the dense deployment of the serving antennas in CF mMIMO will result in strong macro-diversity from a significantly smaller average distance between a UE and its closest antennas, while the joint signal processing and scheduling among the distributed antennas achieve array gains and spatial interference suppression which substantially reduce the QoS variations within the coverage area.

In this paper, we have presented a comprehensive review of the concepts and techniques proposed for CF mMIMO systems.
First, we gave the motivation for CF mMIMO and provide a brief introduction of CF mMIMO itself and the other technologies related to it.
Then we used a section to briefly provide a tutorial about the technical foundations of CF mMIMO including the transmission procedure and mathematical system model.
The core of the paper provides an extensive survey on the state-of-the-art schemes and algorithms available in the literature for the resource allocation and signal processing (i.e., channel estimation, combining and precoding, user access and association, and power control) and practical implementation (i.e., fronthaul limitation and hardware impairment) in CF mMIMO systems.
We then highlighted the open research areas for CF mMIMO (e.g., multiple CPUs cooperation, MEC, enabling machining learning, etc.) and proposed potential approaches for solutions.

Although the research on this topic is still in the exploratory phase, the primary demonstrations, field tests, and prototypes of CF mMIMO systems have been ongoing across different projects in academia and industry.
How to realize the scalable intelligent system deployment with new mathematical tools, new applications, and new standardizations becomes a very attractive open issue for all researchers in this field.
Though many challenges remain to address, CF mMIMO shows great potential to meet the ubiquitous high QoS demands of the 6G communications.
In the foreseeable future, research on CF mMIMO will continue to mature. With no doubt, this technology with its concepts will open up new frontiers in wireless services and applications.

%

\bibliographystyle{IEEEtran}
\bibliography{IEEEabrv,Ref_cell_free_update0928}

\end{document}